\documentclass{amsart}
\usepackage{amssymb}
\usepackage{amsmath}
\usepackage{amsthm}
\usepackage{mathrsfs}
\usepackage{colortbl}
\newtheorem{theorem}{Theorem}
\newtheorem{lemma}[theorem]{Lemma}

\newtheorem{remark}[theorem]{Remark}

\theoremstyle{definition}
\newtheorem{definition}[theorem]{Definition}

\newtheorem{proposition}[theorem]{Proposition}

\makeatletter

\@addtoreset{theorem}{section}
\makeatother

\makeatletter

\@addtoreset{equation}{section}
\makeatother

\begin{document}               

\title[Inviscid multiphase flow system with surface flow and tension]{Energetic variational approaches for inviscid multiphase flow systems with surface flow and tension}
\author[Hajime Koba]{Hajime Koba}                                
\address{Graduate School of Engineering Science, Osaka University,\\
1-3 Machikaneyamacho, Toyonaka, Osaka, 560-8531, Japan}                                  
\email{iti@sigmath.es.osaka-u.ac.jp}

\keywords{Mathematical modeling, Energetic variational approach, Multiphase flow system, Surface flow, Surface tension, Inviscid fluid}                         
\subjclass[]{49Q20, 76-10, 35A15, 49S05}                                
\begin{abstract}
We consider the governing equations for the motion of the inviscid fluids in two moving domains and an evolving surface from an energetic point of view. We employ our energetic variational approaches to derive inviscid multiphase flow systems with surface flow and tension. More precisely, we calculate the variation of the flow maps to the action integral for our model to derive both surface flow and tension. We also study the conservation and energy laws of our multiphase flow systems. The key idea of deriving the pressure of the compressible fluid on the surface is to make use of the feature of the barotropic fluid, and the key idea of deriving the pressure of the incompressible fluid on the surface is to apply a generalized Helmholtz-Weyl decomposition on a closed surface.
\end{abstract}       
\maketitle

\section{Introduction}\label{sect1}

\begin{figure}[htbp]
{\unitlength 0.1in%
\begin{picture}(44.7600,22.9000)(11.3800,-36.3000)%
\put(21.6000,-20.6000){\makebox(0,0)[lb]{$n_\Gamma$}}%
%
\special{pn 8}%
\special{ar 2120 2530 982 982 0.0000000 6.2831853}%
%
\special{pn 8}%
\special{ar 4632 2510 982 982 0.0000000 6.2831853}%
%
\special{pn 8}%
\special{ar 2120 2520 560 400 0.0000000 6.2831853}%
%
\special{pn 8}%
\special{pa 5162 2320}%
\special{pa 5170 2351}%
\special{pa 5176 2382}%
\special{pa 5178 2414}%
\special{pa 5177 2446}%
\special{pa 5172 2478}%
\special{pa 5164 2509}%
\special{pa 5154 2540}%
\special{pa 5142 2569}%
\special{pa 5127 2598}%
\special{pa 5110 2625}%
\special{pa 5092 2651}%
\special{pa 5072 2676}%
\special{pa 5050 2700}%
\special{pa 5028 2722}%
\special{pa 5004 2744}%
\special{pa 4979 2764}%
\special{pa 4954 2783}%
\special{pa 4927 2801}%
\special{pa 4900 2818}%
\special{pa 4872 2834}%
\special{pa 4844 2849}%
\special{pa 4815 2863}%
\special{pa 4785 2875}%
\special{pa 4755 2886}%
\special{pa 4725 2896}%
\special{pa 4694 2906}%
\special{pa 4663 2914}%
\special{pa 4632 2921}%
\special{pa 4600 2926}%
\special{pa 4569 2929}%
\special{pa 4537 2932}%
\special{pa 4505 2933}%
\special{pa 4473 2933}%
\special{pa 4441 2932}%
\special{pa 4409 2929}%
\special{pa 4377 2925}%
\special{pa 4346 2918}%
\special{pa 4315 2910}%
\special{pa 4284 2900}%
\special{pa 4254 2889}%
\special{pa 4225 2876}%
\special{pa 4197 2861}%
\special{pa 4170 2844}%
\special{pa 4144 2824}%
\special{pa 4121 2803}%
\special{pa 4099 2779}%
\special{pa 4080 2753}%
\special{pa 4064 2725}%
\special{pa 4052 2696}%
\special{pa 4042 2665}%
\special{pa 4035 2634}%
\special{pa 4031 2602}%
\special{pa 4031 2570}%
\special{pa 4034 2538}%
\special{pa 4041 2507}%
\special{pa 4051 2476}%
\special{pa 4063 2447}%
\special{pa 4077 2418}%
\special{pa 4093 2390}%
\special{pa 4110 2363}%
\special{pa 4130 2338}%
\special{pa 4151 2314}%
\special{pa 4173 2291}%
\special{pa 4197 2269}%
\special{pa 4221 2248}%
\special{pa 4246 2229}%
\special{pa 4272 2210}%
\special{pa 4299 2193}%
\special{pa 4327 2176}%
\special{pa 4355 2162}%
\special{pa 4384 2148}%
\special{pa 4414 2135}%
\special{pa 4443 2123}%
\special{pa 4473 2112}%
\special{pa 4504 2102}%
\special{pa 4535 2094}%
\special{pa 4566 2087}%
\special{pa 4597 2082}%
\special{pa 4629 2078}%
\special{pa 4661 2075}%
\special{pa 4693 2073}%
\special{pa 4725 2072}%
\special{pa 4757 2072}%
\special{pa 4821 2078}%
\special{pa 4852 2084}%
\special{pa 4883 2092}%
\special{pa 4914 2101}%
\special{pa 4944 2111}%
\special{pa 4974 2124}%
\special{pa 5003 2138}%
\special{pa 5030 2155}%
\special{pa 5056 2173}%
\special{pa 5081 2194}%
\special{pa 5103 2217}%
\special{pa 5123 2242}%
\special{pa 5140 2269}%
\special{pa 5153 2298}%
\special{pa 5162 2320}%
\special{fp}%
\put(12.6000,-37.5000){\makebox(0,0)[lb]{$\Omega = \Omega_A (t) \cup \Gamma (t) \cup \Omega_B (t)$}}%
\put(38.6000,-37.6000){\makebox(0,0)[lb]{$\Omega = \Omega_A^\varepsilon (t) \cup \Gamma^\varepsilon (t) \cup \Omega_B^\varepsilon (t)$}}%
\put(19.2000,-26.0000){\makebox(0,0)[lb]{$\Omega_A (t)$}}%
\put(44.1000,-26.2000){\makebox(0,0)[lb]{$\Omega^\varepsilon_A (t)$}}%
\put(19.3000,-32.4000){\makebox(0,0)[lb]{$\Omega_B (t)$}}%
\put(44.9200,-32.0000){\makebox(0,0)[lb]{$\Omega^\varepsilon_B (t)$}}%
%
\special{pn 8}%
\special{pa 2590 1660}%
\special{pa 2612 1634}%
\special{pa 2636 1618}%
\special{pa 2665 1616}%
\special{pa 2698 1626}%
\special{pa 2730 1640}%
\special{fp}%
\put(27.4000,-16.9000){\makebox(0,0)[lb]{$\partial \Omega$}}%
\put(51.9000,-16.3000){\makebox(0,0)[lb]{$\partial \Omega$}}%
%
\special{pn 8}%
\special{pa 5020 1610}%
\special{pa 5042 1584}%
\special{pa 5066 1568}%
\special{pa 5095 1566}%
\special{pa 5128 1576}%
\special{pa 5160 1590}%
\special{fp}%
%
\special{pn 8}%
\special{pa 2140 1540}%
\special{pa 2140 1350}%
\special{fp}%
\special{sh 1}%
\special{pa 2140 1350}%
\special{pa 2120 1417}%
\special{pa 2140 1403}%
\special{pa 2160 1417}%
\special{pa 2140 1350}%
\special{fp}%
\special{pa 2140 1350}%
\special{pa 2140 1350}%
\special{fp}%
\special{pa 2140 1350}%
\special{pa 2140 1350}%
\special{fp}%
%
\special{pn 8}%
\special{pa 4652 1530}%
\special{pa 4652 1340}%
\special{fp}%
\special{sh 1}%
\special{pa 4652 1340}%
\special{pa 4632 1407}%
\special{pa 4652 1393}%
\special{pa 4672 1407}%
\special{pa 4652 1340}%
\special{fp}%
\special{pa 4652 1340}%
\special{pa 4652 1340}%
\special{fp}%
\special{pa 4652 1340}%
\special{pa 4652 1340}%
\special{fp}%
\put(22.0000,-15.0000){\makebox(0,0)[lb]{$n_\Omega$}}%
\put(46.8200,-15.0000){\makebox(0,0)[lb]{$n_\Omega$}}%
%
\special{pn 8}%
\special{pa 2550 2260}%
\special{pa 2572 2234}%
\special{pa 2596 2218}%
\special{pa 2625 2216}%
\special{pa 2658 2226}%
\special{pa 2690 2240}%
\special{fp}%
\put(27.2000,-23.2000){\makebox(0,0)[lb]{$\Gamma (t)$}}%
%
\special{pn 8}%
\special{pa 5100 2210}%
\special{pa 5122 2184}%
\special{pa 5146 2168}%
\special{pa 5175 2166}%
\special{pa 5208 2176}%
\special{pa 5240 2190}%
\special{fp}%
\put(52.5000,-23.0000){\makebox(0,0)[lb]{$\Gamma^\varepsilon(t)$}}%
%
\special{pn 8}%
\special{pa 2100 2120}%
\special{pa 2100 1930}%
\special{fp}%
\special{sh 1}%
\special{pa 2100 1930}%
\special{pa 2080 1997}%
\special{pa 2100 1983}%
\special{pa 2120 1997}%
\special{pa 2100 1930}%
\special{fp}%
\special{pa 2100 1930}%
\special{pa 2100 1930}%
\special{fp}%
\special{pa 2100 1930}%
\special{pa 2100 1930}%
\special{fp}%
%
\special{pn 8}%
\special{pa 4450 2120}%
\special{pa 4375 1945}%
\special{fp}%
\special{sh 1}%
\special{pa 4375 1945}%
\special{pa 4383 2014}%
\special{pa 4396 1994}%
\special{pa 4420 1998}%
\special{pa 4375 1945}%
\special{fp}%
\special{pa 4375 1945}%
\special{pa 4375 1945}%
\special{fp}%
\special{pa 4375 1945}%
\special{pa 4375 1945}%
\special{fp}%
\put(44.6000,-20.5000){\makebox(0,0)[lb]{$n_{\Gamma^\varepsilon}$}}%
\end{picture}}%
\caption{Moving Domains and Surfaces}
\label{Fig1}
\end{figure}
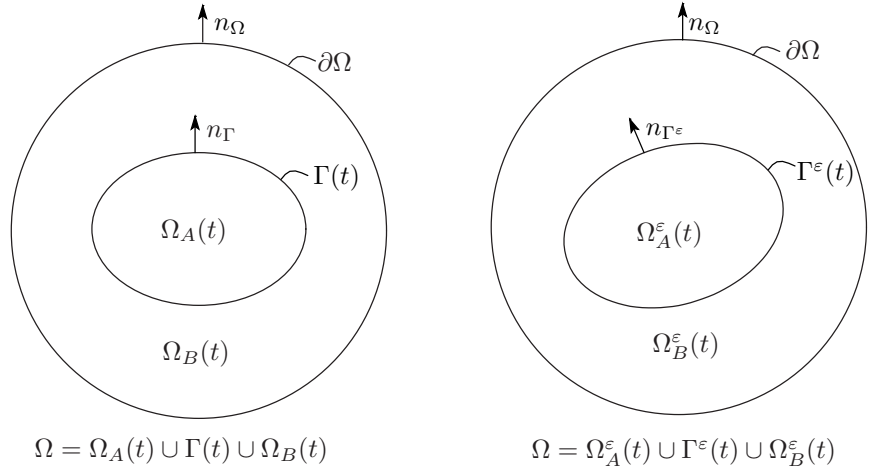

We are interested in a mathematical modeling of a soap bubble floating in the air and an air bubble moving in water. When we focus on a soap bubble, we can see the fluid flow in the bubble. We call the fluid flow in the bubble a \emph{surface flow}. We can consider a surface flow as fluid flow on an evolving surface. In order to make a mathematical model of a soap bubble floating in the air, we have to study the dependencies among fluid-flows in domains and surface flow. This paper considers the governing equations for the motion of the inviscid fluids in two moving domains and an evolving surface from an energetic point of view. We employ our energetic variational approaches to derive three inviscid multiphase flow systems with surface flow and tension.

Let us first introduce fundamental notations. Let $t \geq 0$ be the time variable, $x( = { }^t (x_1 , x_2, x_3 ) )$, $\xi_A ( = { }^t (\xi^A_1 , \xi^A_2, \xi^A_3 ) )$, $\xi_B ( = { }^t (\xi^B_1 , \xi^B_2, \xi^B_3 ) )$, $\xi_S ( = { }^t (\xi^S_1 , \xi^S_2, \xi^S_3 ) )$ $\in \mathbb{R}^3$ the spatial variables, and $X(= { }^t (X_1,X_2)) \in \mathbb{R}^2$ the spatial variable. Fix $T >0$. Let $\Omega \subset \mathbb{R}^3$ be a bounded domain with a smooth boundary $\partial \Omega$. The symbol $n_\Omega = n_\Omega (x) = { }^t (n^\Omega_1 , n^\Omega_2 , n^\Omega_3 )$ denotes the unit outer normal vector at $x \in \partial \Omega$. Let $\Omega_A (t) (= \{ \Omega_A (t) \}_{0 \leq t < T} )$ be a bounded domain in $\mathbb{R}^3$ with a moving boundary $\Gamma (t)$. Assume that $\Gamma (t) (= \{ \Gamma (t) \}_{0 \leq t < T})$ is a smoothly evolving surface. The symbol $n_\Gamma = n_\Gamma ( x , t ) = { }^t (n^\Gamma_1 , n^\Gamma_2 , n^\Gamma_3)$ denotes the unit outer normal vector at $x \in \Gamma (t)$. For each $t \in [0,T)$, assume that $\Omega_A (t) \Subset \Omega$. Set $\Omega_B (t) = \Omega \setminus \overline{\Omega_A (t)}$. It is clear that $\Omega = \Omega_A (t) \cup \Gamma (t) \cup \Omega_B (t)$ (see Figure \ref{Fig1}). Set
\begin{multline}\label{eq11}
\Omega_{A,T} = \bigcup_{0< t < T} \{ \Omega_A (t) \times \{ t \} \},{ \ } \Omega_{B,T} = \bigcup_{0< t < T} \{ \Omega_B (t) \times \{ t \} \},\\
\Gamma_T = \bigcup_{0< t < T} \{ \Gamma (t) \times \{ t \} \},{ \ }\Omega_T = \Omega \times (0,T),{ \ }\partial \Omega_T = \partial \Omega \times (0,T).
\end{multline}

In this paper we assume that the fluids in $\Omega_{A,T}$ and $\Omega_{B,T}$ are barotropic compressible ones, and that the fluid on $\Gamma_T$ is compressible or incompressible one. An incompressible fluid on the surface means one that satisfies the surface divergence-free condition. Let us state physical notations. Let $\rho_A = \rho_A ( x , t)$, $v_A = v_A ( x , t) = { }^t (v^A_1 , v^A_2 , v^A_3 )$, and $\mathfrak{p}_A = \mathfrak{p}_A (x,t)$ be the density, the velocity, and the pressure of the fluid in $\Omega_A (t)$, respectively. Let $\rho_B = \rho_B ( x , t)$, $v_B = v_B ( x , t) = { }^t (v^B_1 , v^B_2 , v^B_3 )$, and $\mathfrak{p}_B = \mathfrak{p}_B (x,t)$ be the density, the velocity, and the pressure of the fluid in $\Omega_B (t)$, respectively. Let $\rho_S = \rho_S ( x , t)$, $v_S = v_S ( x , t) = { }^t (v^S_1 , v^S_2 , v^S_3 )$, and $\mathfrak{p}_S = \mathfrak{p}_S (x,t)$, $\Pi_S = \Pi_S (x,t)$ be the density, the velocity, and the pressures of the fluid on $\Gamma (t)$, respectively. 

\begin{remark}
\noindent $(\rm{i})$ The symbol $\mathfrak{p}_S$ corresponds to the pressure of the compressible fluid on $\Gamma_T$, and $\Pi_S$ corresponds to the pressure of the incompressible fluid on $\Gamma_T$. We call $\mathfrak{p}_S$, $\Pi_S$ \emph{total pressures}. Total pressure means one that includes surface pressure and tension.\\
\noindent $(\rm{ii})$ We call $v_S$ a \emph{total velocity} on $\Gamma_T$. Total velocity means that $v_S$ can be divided into surface velocity $u_S$ and motion velocity $w_S$, that is, $v_S = u_S + w_S$. The surface flow $u_S$ is a tangential vector on $\Gamma (t)$. The motion velocity $w_S$ is the speed of the evolving surface $\Gamma (t)$. Therefore, the total velocity $v_S$ is not a necessary tangential vector on $\Gamma (t)$. In this paper, we assume that $w_S$ is a normal vector on $\Gamma (t)$, and focus on the total velocity and total pressure to make our models.
\end{remark}

Let us explain the basic assumptions of mathematical modeling of inviscid multiphase flow systems with surface flow and tension. We assume that
\begin{equation}\label{eq12}
\begin{cases}
v_B \cdot n_\Omega =0 & \text{ on }\partial \Omega_T,\\
v_A \cdot n_\Gamma = v_B \cdot n_\Gamma = v_S \cdot n_\Gamma & \text{ on } \Gamma_T.
\end{cases}
\end{equation}
The condition $v_B \cdot n_\Omega =0$ means that fluid particles do not go out of the domain $\Omega$.

This paper has two purposes. The first one is to apply our energetic variational approaches to derive three multiphase flow systems. The first system is the following inviscid multiphase flow system with compressible surface flow:
\begin{equation}\label{eq13}
\begin{cases}
D_t^A \rho_A + ({\rm{div}} v_A) \rho_A = 0 & \text{ in } \Omega_{A,T},\\
\rho_A D_t^A v_A + {\rm{grad}} \mathfrak{p}_A = 0 & \text{ in } \Omega_{A,T},\\
D_t^B \rho_B + ({\rm{div}} v_B) \rho_B = 0 & \text{ in } \Omega_{B,T},\\
\rho_B D_t^B v_B + {\rm{grad}} \mathfrak{p}_B = 0 & \text{ in } \Omega_{B,T},\\
D_t^S \rho_S + ({\rm{div}}_\Gamma v_S) \rho_S = 0 & \text{ on } \Gamma_{T},\\
\rho_S D_t^S v_S + {\rm{grad}}_\Gamma \mathfrak{p}_S + \mathfrak{p}_S H_\Gamma n_\Gamma + \mathfrak{p}_B n_\Gamma - \mathfrak{p}_A n_\Gamma = 0 & \text{ on } \Gamma_{T},
\end{cases}
\end{equation}
with \eqref{eq12} and
\begin{equation}\label{eq14}
\begin{cases}
\mathfrak{p}_A = \mathfrak{p}_A ( \rho_A) =  \rho_A p_A' (\rho_A) - p_A (\rho_A) & \text{ in } \Omega_{A,T},\\
\mathfrak{p}_B = \mathfrak{p}_B ( \rho_B) =  \rho_B p_B' (\rho_B) - p_B (\rho_B ) & \text{ in } \Omega_{B,T},\\
\mathfrak{p}_S = \mathfrak{p}_S ( \rho_S) =  \rho_S p_S' (\rho_S) - p_S (\rho_S ) & \text{ on } \Gamma_{T}.
\end{cases}
\end{equation}
\noindent Here $p_A$, $p_B$, $p_S$ are three $C^1$-functions, $p' = p'(r) = {dp}/{dr}(r)$, $D_t^A f = \partial_t f + (v_A,\nabla ) f $, $D_t^B f = \partial_t f + (v_B, \nabla ) f $, $D_t^S f = \partial_t f + (v_S, \nabla ) f $, $(v_A , \nabla ) f = v^A_1 \partial_1 f + v^A_2 \partial_2 f + v^A_3 \partial_3 f$, ${\rm{div}}v_A = \nabla \cdot v_A$, ${\rm{grad}} f = \nabla f$, $\nabla = { }^t (\partial_1 , \partial_2 , \partial_3)$, $\partial_i = \partial/{\partial x_i}$, $\partial_t = \partial/{\partial t}$, ${\rm{div}}_\Gamma v_S = \nabla_\Gamma \cdot v_S$, ${\rm{grad}}_\Gamma f = \nabla_\Gamma f$, $\nabla_\Gamma = { }^t (\partial^\Gamma_1 , \partial^\Gamma_2 , \partial^\Gamma_3 ) $, $\partial^\Gamma_i f = \sum_{j=1}^3(\delta_{ij} - n^\Gamma_i n^\Gamma_j ) \partial_j f$, $H_\Gamma = H_\Gamma (x,t) = - {\rm{div}}_\Gamma n_\Gamma$. Remark that $H_\Gamma$ is the \emph{mean curvature in the direction} $n_\Gamma$. Remark also that the conditions \eqref{eq14} mean barotropic ones, and that \eqref{eq14} correspond to the pressures derived from a thermodynamic approach (see \cite{K23}).

The second system is the following inviscid multiphase flow system with incompressible surface flow:
\begin{equation}\label{eq15}
\begin{cases}
D_t^A \rho_A + ({\rm{div}} v_A) \rho_A = 0 & \text{ in } \Omega_{A,T},\\
\rho_A D_t^A v_A + {\rm{grad}} \mathfrak{p}_A = 0 & \text{ in } \Omega_{A,T},\\
D_t^B \rho_B + ({\rm{div}} v_B) \rho_B = 0 & \text{ in } \Omega_{B,T},\\
\rho_B D_t^B v_B + {\rm{grad}} \mathfrak{p}_B = 0 & \text{ in } \Omega_{B,T},\\
D_t^S \rho_S = 0 & \text{ on } \Gamma_{T},\\
{\rm{div}}_\Gamma v_S = 0 & \text{ on } \Gamma_{T},\\
\rho_S D_t^S v_S + {\rm{grad}}_\Gamma \Pi_S + \Pi_S H_\Gamma n_\Gamma + \mathfrak{p}_B n_\Gamma - \mathfrak{p}_A n_\Gamma = 0 & \text{ on } \Gamma_{T},
\end{cases}
\end{equation}
with \eqref{eq12} and \eqref{eq14}. Remark that the condition ${\rm{div}}_\Gamma v_S =0$ means surface divergence-free one.

The third system is the following inviscid multiphase flow system with a tangential compressible surface flow:
\begin{equation}\label{eq16}
\begin{cases}
D_t^A \rho_A + ({\rm{div}} v_A) \rho_A = 0 & \text{ in } \Omega_{A,T},\\
\rho_A D_t^A v_A + {\rm{grad}} \mathfrak{p}_A = 0 & \text{ in } \Omega_{A,T},\\
D_t^B \rho_B + ({\rm{div}} v_B) \rho_B = 0 & \text{ in } \Omega_{B,T},\\
\rho_B D_t^B v_B + {\rm{grad}} \mathfrak{p}_B = 0 & \text{ in } \Omega_{B,T},\\
D_t^S \rho_S + ({\rm{div}}_\Gamma v_S) \rho_S = 0 & \text{ on } \Gamma_{T},\\
P_\Gamma \rho_S D_t^S v_S + {\rm{grad}}_\Gamma \mathfrak{p}_S= 0 & \text{ on } \Gamma_{T},\\
\mathfrak{p}_S H_\Gamma n_\Gamma + \mathfrak{p}_B n_\Gamma - \mathfrak{p}_A n_\Gamma = 0 & \text{ on } \Gamma_{T},
\end{cases}
\end{equation}
with \eqref{eq12} and \eqref{eq14}.

\begin{remark}
In order to analyze or numerical simulate systems \eqref{eq13} and \eqref{eq15}, we need some conditions on the surface flow $u_S$ or the motion velocity $w_S$ since we focus on the total velocity and pressure to make our models. System \eqref{eq17} is suitable for mathematical analysis since the motion velocity $w_S$ is given by
\begin{equation}\label{eq17}
w_S = \frac{1}{\mathfrak{p}_S H_\Gamma}\{ \mathfrak{p}_A (v_A \cdot n_\Gamma) - \mathfrak{p}_B (v_B \cdot n_\Gamma) \} n_\Gamma \text{ on } \Gamma_T
\end{equation}
if $\mathfrak{p}_S H_\Gamma \neq 0$. In fact, by \eqref{eq12} and \eqref{eq16}, we find that
\begin{align*}
\mathfrak{p}_S H_\Gamma (n_\Gamma \cdot v_S) & = \mathfrak{p}_A (n_\Gamma \cdot v_S) - \mathfrak{p}_B (n_\Gamma \cdot v_S)\\
& = \mathfrak{p}_A (v_A \cdot n_\Gamma ) - \mathfrak{p}_B (v_B \cdot n_\Gamma ) \text{ on }\Gamma_T.
\end{align*}
From $v_S \cdot n_\Gamma = w_S \cdot n_\Gamma$ and $P_\Gamma w_S = { }^t (0,0,0)$, we have \eqref{eq17}.
\end{remark}

The second purpose is to study the conservation and energy laws of systems \eqref{eq13} and \eqref{eq15}. In fact, any solution to systems \eqref{eq13} and \eqref{eq15} with \eqref{eq12} and \eqref{eq14} satisfies that for $t_1 < t_2$,
\begin{multline}\label{eq18}
\int_{\Omega_A (t_2)} \rho_A (x,t_2) { \ } d x + \int_{\Omega_B (t_2)} \rho_B (x,t_2) { \ } d x + \int_{\Gamma (t_2)} \rho_S (x,t_2) { \ }d \mathcal{H}_x^2 \\
= \int_{\Omega_A (t_1)} \rho_A (x,t_1) { \ } d x + \int_{\Omega_B (t_1)} \rho_B (x,t_1) { \ } d x + \int_{\Gamma (t_1)} \rho_S (x,t_1) { \ }d \mathcal{H}_x^2,
\end{multline}
\begin{multline}\label{eq19}
\int_{\Omega_A (t_2)} \rho_A v_A { \ }d x + \int_{\Omega_B (t_2)} \rho_B v_B { \ }d x + \int_{\Gamma (t_2)} \rho_S v_S { \ }d \mathcal{H}^2_x\\
= \int_{\Omega_A (t_1)} \rho_A v_A { \ }d x + \int_{\Omega_B (t_1)} \rho_B v_B { \ }d x + \int_{\Gamma (t_1)} \rho_S v_S { \ }d \mathcal{H}^2_x\\ - \int_{t_1}^{t_2} \int_{\partial \Omega} \mathfrak{p}_B n_\Omega { \ } d \mathcal{H}^2_x d t,
\end{multline}
and
\begin{multline}\label{eq1010}
\int_{\Omega_A (t_2)} \frac{1}{2} \rho_A  \vert v_A  \vert^2 { \ }d x + \int_{\Omega_B (t_2)} \frac{1}{2} \rho_B  \vert v_B  \vert^2 { \ }d x +\int_{\Gamma (t_2)} \frac{1}{2} \rho_S  \vert v_S  \vert^2 { \ }d \mathcal{H}_x^2\\
= \int_{\Omega_A (t_1)} \frac{1}{2} \rho_A  \vert v_A  \vert^2 { \ }d x + \int_{\Omega_B (t_1)} \frac{1}{2} \rho_B  \vert v_B  \vert^2 { \ }d x +\int_{\Gamma (t_1)} \frac{1}{2} \rho_S  \vert v_S  \vert^2 { \ }d \mathcal{H}_x^2\\
+ \int_{t_1}^{t_2 } \int_{\Omega_A (t)} ({\rm{div}} v_A) \mathfrak{p}_A { \ }d x d t + \int_{t_1}^{t_2} \int_{\Omega_B (t)} ({\rm{div}} v_B) \mathfrak{p}_B  { \ }d x dt \\
+ \int_{t_1}^{t_2} \int_{\Gamma (t)} ({\rm{div}}_\Gamma v_S) \mathfrak{p}_S { \ }d \mathcal{H}_x^2 d t.
\end{multline}
Here $d \mathcal{H}^2_x$ denotes the 2-dimensional Hausdorff measure. Moreover, any solution to system \eqref{eq13} with \eqref{eq12} and \eqref{eq14} satisfies that for $t_1 < t_2$,
\begin{multline}\label{eq1011}
\int_{\Omega_A (t_2)} \left( \frac{1}{2} \rho_A  \vert v_A  \vert^2 + p_A (\rho_A) \right) { \ }d x + \int_{\Omega_B (t_2)} \left( \frac{1}{2} \rho_B  \vert v_B  \vert^2 + p_B (\rho_B) \right) { \ }d x\\
 +\int_{\Gamma (t_2)} \left( \frac{1}{2} \rho_S  \vert v_S  \vert^2 + p_S (\rho_S) \right) { \ }d \mathcal{H}_x^2\\
= \int_{\Omega_A (t_1)} \left( \frac{1}{2} \rho_A  \vert v_A  \vert^2 + p_A (\rho_A) \right) { \ }d x + \int_{\Omega_B (t_1)} \left( \frac{1}{2} \rho_B  \vert v_B  \vert^2 + p_B (\rho_B) \right) { \ }d x\\
 +\int_{\Gamma (t_1)} \left( \frac{1}{2} \rho_S  \vert v_S  \vert^2 + p_S (\rho_S) \right) { \ }d \mathcal{H}_x^2.
\end{multline}
We often call \eqref{eq18}, \eqref{eq19}, \eqref{eq1010}, and \eqref{eq1011}, the \emph{law of conservation of mass}, the \emph{law of conservation of momentum}, the \emph{law of conservation of energy}, and the \emph{law of conservation of total energy}, respectively. See Theorem \ref{thm29} for details.

Let us state three difficulties in the derivation of our multiphase flow systems, and the key ideas to overcome these difficulties. The first difficultly is to drive the pressure of the compressible fluid on the surface $\Gamma_T$. In order to derive the pressure terms of our compressible fluid systems, we make use of the feature of the barotropic fluids. More precisely, we assume that the pressure of the compressible fluid depends only on the density of the fluid (see \eqref{eq14}). The second difficulty is to drive the pressure of the incompressible fluid on the surface. In order to derive the surface pressure of system \eqref{eq15}, we apply a generalized Helmholtz-Weyl decomposition on a closed surface (see Lemma \ref{lem72}). The third difficulty is to derive the relationship among the pressures of the fluids in the moving domains $\Omega_A(t),$ $\Omega_B(t)$, and surface $\Gamma (t)$. To overcome the difficult point, we apply an energetic variational approach to derive the relationship. An energetic variational approach is a mathematical modeling method, which had been studied by Strutt \cite{Str73} and Onsager \cite{Ons31a, Ons31b}. To derive system \eqref{eq13}, we study the variation of the following \emph{action integral}:
\begin{align*}
\int_0^T \int_{\Omega^{\varepsilon}_A (t)} \left\{ \frac{1}{2} \rho_A^{\varepsilon}  \vert  v^{\varepsilon}_A  \vert^2 - p_A (\rho^{\varepsilon}_A ) \right\} d x d t + \int_0^T \int_{\Omega^{\varepsilon}_B (t)} \left\{ \frac{1}{2} \rho_B^{\varepsilon}  \vert  v^{\varepsilon}_B  \vert^2 - p_B (\rho^{\varepsilon}_B ) \right\} d x d t\\
 + \int_0^T \int_{\Gamma^{\varepsilon} (t)} \left\{ \frac{1}{2} \rho_S^{\varepsilon}  \vert  v^{\varepsilon}_S  \vert^2 - p_S (\rho^{\varepsilon}_S ) \right\} d \mathcal{H}^2_x d t.
\end{align*}
Here $\Delta^\varepsilon$ is a variation of $\Delta$. See Theorem \ref{thm28} for details.

Let us state some results on mathematical modeling of inviscid fluid flow systems on surfaces. Arnol'd \cite{Arn66,Arn97} applied the Lie group of diffeomorphisms to derive an inviscid incompressible fluid system on a manifold. See also Ebin-Marsden \cite{EM70}. Koba-Liu-Giga \cite{KLG17} employed an energetic variational approach to derive an inviscid incompressible fluid system on an evolving closed surface. Koba \cite{K18} made use of an energetic variational approach to derive an inviscid compressible fluid system on an evolving closed surface. This paper improves the methods in \cite{KLG17, K18} to derive the inviscid multiphase flow systems with surface flow and tension.

Finally, we introduce the results related to this paper. Serrin \cite{Ser59} introduced Euler's ideas and Cauchy's principle to derive the inviscid fluid system in a domain. Gyarmati \cite{Gya70}, Hyon-Kwak-Liu \cite{HKL10}, Koba-Sato \cite{KS17} employed their energetic variational approaches to make and study several models for fluid dynamics in domains. Bothe-Pr\"{u}ss \cite{BP12} applied the \emph{Boussinesq-Scriven law} to make their model for multiphase flow with surface tension and viscosities. See \cite{BP12,K18} for the Boussinesq-Scriven law. Koba \cite{K23} applied the first law of thermodynamics to derive their multiphase flow system with surface flow and tension. Our approach is different from one in \cite{BP12, K23}. See also Slattery-Sagis-Oh \cite{SSO07}, Gatignol-Prud'homme \cite{GP01} for interfacial phenomena.

The outline of this paper is as follows: In Section \ref{sect2} we first introduce flow maps in $\overline{\Omega_T}$, and we state the main results of this paper. In Section \ref{sect3} we study some properties of the Riemannian metrics determined by flow maps in $\overline{\Omega_T}$. In Section \ref{sect4} we calculate variations of flow maps to our action integrals determined by the kinetic energies $(\rho_A  \vert v_A \vert^2/2, \rho_B  \vert v_B  \vert^2/2,\rho_S  \vert v_S  \vert^2/2)$. In Section \ref{sect5}, we apply an energetic variational approach to make mathematical models for inviscid multiphase flow. In Section \ref{sect6} we investigate the conservation and energy laws of our systems. In Appendix, we give some useful tools to analyze functions on surfaces.

\section{Variations of flow maps and main results}\label{sect2}

We first introduce variations $(\tilde{x}_A^\varepsilon, \tilde{x}_B^\varepsilon, \tilde{x}^\varepsilon_S)$ of flow maps $(\tilde{x}_A, \tilde{x}_B, \tilde{x}_S)$ in $\overline{\Omega_T}$. The flow maps $(\tilde{x}_A, \tilde{x}_B, \tilde{x}_S)$ and its variations $(\tilde{x}_A^\varepsilon, \tilde{x}_B^\varepsilon, \tilde{x}^\varepsilon_S)$ are essential tools to make mathematical models for inviscid multiphase flow. Then we state the main results of this paper.

Let us explain the conventions used in this paper. We use the italic characters $i,j, k, \ell$ as $1,2,3$, and the Greek characters $\alpha , \beta$ as $1,2$, that is, $i, j , k , \ell \in \{ 1 , 2, 3\} $ and $\alpha , \beta \in\{ 1 ,2 \}$.

We first define our moving domains and surface
\begin{definition}[Moving domains and surface]\label{def21}
Let $\Omega $ be a bounded domain in $\mathbb{R}^3$ with a $C^\infty$-boundary $\partial \Omega$. Let $\Omega_A (t)(= \{ \Omega_A (t) \}_{0 \leq t <T})$ be a $C^\infty$-bounded domain in $\mathbb{R}^3$ depending on time $t \in [0,T)$ such that $\Omega_A (t) \Subset \Omega$. Set
\begin{equation*}
\Omega_B (t) = \Omega \setminus \overline{\Omega_A (t)} \text{ and } \Gamma (t) = \partial \Omega_A (t).
\end{equation*}
For each $0 \leq t < T$, assume that $\Gamma (t)$ is a closed Riemannian 2-dimensional manifold. Define $\Omega_{A,T}$, $\Omega_{B,T}$, $\Gamma_T$, $\Omega_T$, and $\partial \Omega_T$ by \eqref{eq11}. Set
\begin{multline*}
\overline{\Omega_{A,T}} = \bigcup_{0 \leq t < T} \{ \overline{\Omega_A (t)} \times \{ t \} \},{ \ } \overline{\Omega_{B,T}} = \bigcup_{0 \leq t < T} \{ \overline{\Omega_B (t)} \times \{ t \} \},\\
\overline{\Gamma_T} = \bigcup_{0 \leq t < T} \{ \Gamma (t) \times \{ t \} \}, { \ }\overline{\Omega_T} = \overline{\Omega} \times [0,T).
\end{multline*}
\end{definition}
Let $\Omega$, $\{ \Omega_A (t) \}_{0 \leq t <T}$ be bounded domains satisfying the properties as in Definition \ref{def21}. From now we fix $\Omega$ and $\Omega_A (t)$. Note that $\Omega_B (t)$ is a bounded domain in $\mathbb{R}^3$ with $C^\infty$-boundaries $\partial \Omega$ and $\Gamma (t)$ (see Figure \ref{Fig1}).

Next we introduce function spaces in moving domains and surfaces.
\begin{definition}[Function spaces in moving domains and surfaces]\label{def22}
Let $\mathcal{S} \subset \mathbb{R}^4$. Define
\begin{equation*}
C^\infty (\mathcal{S}) = \{ f: \mathcal{S} \to \mathbb{R};  { \ } f = F \vert_{\mathcal{S}} \text{ for some }F \in C^\infty (\mathbb{R}^4) \}.
\end{equation*}
For example, 
\begin{align*}
C^\infty (\Omega_{A,T}) & = \{ f: \Omega_{A,T} \to \mathbb{R};  { \ } f = F \vert_{\Omega_{A,T}}  \text{ for some }F \in C^\infty (\mathbb{R}^4) \},\\
C^\infty (\overline{\Omega_{B,T}}) & = \{ f: \overline{\Omega_{B,T}} \to \mathbb{R};  { \ } f = F \vert_{\overline{\Omega_{B,T}}}  \text{ for some }F \in C^\infty (\mathbb{R}^4) \},\\
C^\infty (\Gamma_{T})  & = \{ f: \Gamma_{T} \to \mathbb{R};  { \ } f = F \vert_{\Gamma_T}  \text{ for some }F \in C^\infty (\mathbb{R}^4) \}.
\end{align*}
\end{definition}
\noindent Remark that function spaces in fixed domains and surfaces in $\mathbb{R}^3$ are usual function spaces.

Now we introduce flow maps $(\tilde{x}_A, \tilde{x}_B, \tilde{x}_S)$ in $\overline{\Omega_T}$.
\begin{definition}[Flow maps in $\overline{\Omega_T}$]\label{def23}{ \ }\\
$(\rm{i})$ Let $\tilde{x}_A = \tilde{x}_A(\xi_A , t) = { }^t ( \tilde{x}_1^A(\xi_A, t) , \tilde{x}_2^A (\xi_A , t) , \tilde{x}_3^A ( \xi_A , t) )$ be in $[C^\infty(\overline{\Omega_A (0)} \times [0,T) )]^3$. We call $\tilde{x}_A$ a \emph{flow map} in $\overline{\Omega_{A,T}}$ if the following three properties hold:\\
{\bf{Property}} $(\rm{I})$ For $0<t <T$,
\begin{align*}
\Omega_A (t) = \{ x = { }^t (x_1,x_2,x_3) \in \mathbb{R}^3; { \ } x = \tilde{x}_A(\xi_A , t), { \ }\xi_A \in \Omega_A (0)  \},\\
\overline{\Omega_A (t)} = \{ x = { }^t (x_1,x_2,x_3) \in \mathbb{R}^3; { \ } x = \tilde{x}_A(\xi_A , t),{ \ } \xi_A \in \overline{\Omega_A (0)}  \}.
\end{align*}
{\bf{Property}} $(\rm{II})$ There exists a smooth function $v_A = v_A (x,t)=$\\ $ { }^t (v^A_1 (x,t) , v^A_2 (x,t),v^A_3 (x,t))$ such that for every $\xi_A \in \overline{\Omega_A (0)}$,
\begin{equation*}
\begin{cases}
\partial_t \tilde{x}_A (\xi_A , t) = v_A ( \tilde{x}_A (\xi_A , t ) , t) , { \ }t \in (0,T),\\
\tilde{x}_A (\xi_A , 0) = \xi_A.
\end{cases}
\end{equation*}
We call $v_A$ the \emph{velocity determined by the flow map} $\tilde{x}_A$.\\
{\bf{Property}} $(\rm{III})$ For each $0<t <T$ and $\Lambda_A (t) \subset \overline{\Omega_A (t)}$, there is $\mathfrak{M}_A \subset \overline{\Omega_A (0)}$ such that
\begin{align*}
\Lambda_A (t) = \{ x = { }^t (x_1,x_2,x_3) \in \mathbb{R}^3; { \ } x = \tilde{x}_A(\xi_A , t), { \ }\xi_A \in \mathfrak{M}_A  \}.
\end{align*}
$(\rm{ii})$ As in $(\rm{i})$, we define a \emph{flow map} $\tilde{x}_B$ in $\overline{\Omega_{B,T}}$ and the \emph{velocity} $v_B$ \emph{determined by the flow map} $\tilde{x}_B$.\\
$(\rm{iii})$ Let $\tilde{x}_S = \tilde{x}_S (\xi_S , t) = { }^t ( \tilde{x}_1^S(\xi_S, t) , \tilde{x}_2^S (\xi_S , t) , \tilde{x}_3^S ( \xi_S , t) )$ be in $[C^\infty(\Gamma (0) \times [0,T) )]^3$. We call $\tilde{x}_S$ a \emph{flow map} in $\overline{\Gamma_T}$ if the following three properties hold:\\
{\bf{Property}} $(\rm{I})$ For $0<t <T$,
\begin{equation*}
\Gamma (t) = \{ x = { }^t (x_1,x_2,x_3) \in \mathbb{R}^3; { \ } x = \tilde{x}_S (\xi_S , t), { \ }\xi_S \in \Gamma (0)  \}.
\end{equation*}
{\bf{Property}} $(\rm{II})$ There exists a smooth function $v_S = v_S (x,t)=$\\ ${ }^t (v^S_1 (x,t) , v^S_2 (x,t),v^S_3 (x,t))$ such that for every $\xi_S \in \Gamma (0)$,
\begin{equation*}
\begin{cases}
\partial_t \tilde{x}_S (\xi_S , t) = v_S ( \tilde{x}_S (\xi_S , t ) , t) , { \ }t \in (0,T),\\
\tilde{x}_S (\xi_S , 0) = \xi_S.
\end{cases}
\end{equation*}
We call $v_S$ the \emph{velocity determined by the flow map} $\tilde{x}_S$.\\
{\bf{Property}} $(\rm{III})$ For each $0<t <T$ the mapping $\tilde{x}_S(\cdot , t) : \Gamma (0) \to \Gamma (t)$ is bijective.
\end{definition}

Let $(\tilde{x}_A , \tilde{x}_B , \tilde{x}_S)$ be flow maps in $\overline{\Omega_{T}}$, and $(v_A , v_B , v_S) $ be the velocities determined by the flow maps $(\tilde{x}_A , \tilde{x}_B , \tilde{x}_S)$ satisfying the properties as in Definition \ref{def23}. From now we fix $(\tilde{x}_A , \tilde{x}_B , \tilde{x}_S)$ and $(v_A , v_B , v_S)$.

Next we introduce two types of variations. The first one is a variation of our domains and surface. The second one is a variation of flow maps in a variation of our domains and surface.
\begin{definition}[Variations of domains and surface]\label{def24}
For each $- 1 < \varepsilon <1$, let $\Omega^{\varepsilon}_A (t)(= \{ \Omega^{\varepsilon}_A (t) \}_{0 \leq t <T})$ be a $C^\infty$-bounded domain in $\mathbb{R}^3$ depending on time $t \in [0,T)$ such that $\Omega^{\varepsilon}_A (t) \Subset \Omega$. For each $0 \leq t < T$, assume that $\Gamma^\varepsilon (t)$ is a closed Riemannian 2-dimensional manifold. Set
\begin{equation*}
\Omega^{\varepsilon}_B (t) = \Omega \setminus \overline{\Omega^{\varepsilon}_A (t)} \text{ and } \Gamma^{\varepsilon} (t) = \partial \Omega^{\varepsilon}_A (t).
\end{equation*}
Write
\begin{multline*}
\Omega^\varepsilon_{A,T} = \bigcup_{0 < t < T} \{ \Omega^\varepsilon_A (t) \times \{ t \} \},{ \ } \Omega^\varepsilon_{B,T} = \bigcup_{0 < t < T} \{ \Omega^\varepsilon_B (t) \times \{ t \} \},\\
\Gamma^\varepsilon_T = \bigcup_{0 < t < T} \{ \Gamma^\varepsilon (t) \times \{ t \} \},{ \ }\overline{\Omega^\varepsilon_{A,T}} = \bigcup_{0 \leq t < T} \{ \overline{\Omega^\varepsilon_A (t)} \times \{ t \} \},\\
{ \ } \overline{\Omega^\varepsilon_{B,T}} = \bigcup_{0 \leq t < T} \{ \overline{\Omega^\varepsilon_B (t)} \times \{ t \} \},{ \ }\overline{\Gamma^\varepsilon_T} = \bigcup_{0 \leq t < T} \{ \Gamma^\varepsilon (t) \times \{ t \} \}.
\end{multline*}
We say that $(\Omega_{A,T}^{\varepsilon}, \Omega^{\varepsilon}_{B,T} , \Gamma^{\varepsilon}_T)$ is a variation of $(\Omega_{A,T}, \Omega_{B,T} , \Gamma_T )$ if $\Omega_A^{\varepsilon}(0) = \Omega_A (0)$, $\Omega_B^{\varepsilon}(0) = \Omega_B (0)$, $\Gamma^{\varepsilon}(0) = \Gamma (0)$, $\Omega^{\varepsilon}_A (t) \vert_{\varepsilon =0} = \Omega_A (t)$, $\Omega^{\varepsilon}_B (t) \vert_{\varepsilon =0} = \Omega_B (t)$, and $\Gamma^{\varepsilon} (t) \vert_{\varepsilon =0} = \Gamma (t)$.
\end{definition}
\noindent Note that $\Omega = \Omega^\varepsilon_A(t) \bigcup \Gamma^\varepsilon (t) \bigcup \Omega_B^\varepsilon (t)$ (see Figure \ref{Fig1}). In this paper, we assume that the dependence of $\Omega_A^\varepsilon (t)$, $\Omega_B^\varepsilon (t)$, and $\Gamma^\varepsilon (t)$ is smooth with respect to the parameter $\varepsilon$. From now the symbol $n_{\Gamma^\varepsilon} = n_{\Gamma^\varepsilon} ( x , t ) = { }^t (n^{\Gamma^\varepsilon}_1 , n^{\Gamma^\varepsilon}_2 , n^{\Gamma^\varepsilon}_3)$ denotes the unit outer normal vector at $x \in \Gamma^\varepsilon (t)$.

\begin{definition}[Flow maps in $(\overline{\Omega^\varepsilon_{A,T}} , \overline{\Omega^\varepsilon_{B,T}} , \overline{\Gamma^\varepsilon_T})$]\label{def25} Let $- 1 < \varepsilon <1$, and let $(\Omega^\varepsilon_{A,T} , \Omega^\varepsilon_{B,T} , \Gamma^\varepsilon_T)$ be a variation of $(\Omega_{A,T} , \Omega_{B,T} , \Gamma_T)$.\\
\noindent $(\rm{i})$ Let $\tilde{x}^\varepsilon_A = \tilde{x}^\varepsilon_A(\xi_A , t) = { }^t ( \tilde{x}_1^{A,\varepsilon}(\xi_A, t) , \tilde{x}_2^{A,\varepsilon} (\xi_A , t) , \tilde{x}_3^{A,\varepsilon} ( \xi_A , t) )$ be in $[C^\infty(\overline{\Omega_A (0)} \times [0,T) )]^3$. We call $\tilde{x}^{\varepsilon}_A$ a \emph{flow map} in $\overline{\Omega^\varepsilon_{A,T}}$ if the following three properties hold:\\
{\bf{Property}} $(\rm{I})$ For $0<t <T$,
\begin{align*}
\Omega^\varepsilon_A (t) = \{ x = { }^t (x_1,x_2,x_3) \in \mathbb{R}^3; { \ } x = \tilde{x}^\varepsilon_A(\xi_A , t), { \ }\xi_A \in \Omega_A (0)  \},\\
\overline{\Omega^\varepsilon_A (t)} = \{ x = { }^t (x_1,x_2,x_3) \in \mathbb{R}^3; { \ } x = \tilde{x}^\varepsilon_A(\xi_A , t),{ \ } \xi_A \in \overline{\Omega_A (0)}  \}.
\end{align*}
{\bf{Property}} $(\rm{II})$ There exists a smooth function $v^\varepsilon_A = v^\varepsilon_A (x,t)=$\\ $ { }^t (v^{A,\varepsilon}_1 (x,t) , v^{A,\varepsilon}_2 (x,t),v^{A,\varepsilon}_3 (x,t))$ such that for every $\xi_A \in \overline{\Omega_A (0)}$,
\begin{equation*}
\begin{cases}
\partial_t \tilde{x}^\varepsilon_A (\xi_A , t) = v^\varepsilon_A ( \tilde{x}^\varepsilon_A (\xi_A , t ) , t) , { \ }t \in (0,T),\\
\tilde{x}^\varepsilon_A (\xi_A , 0) = \xi_A.
\end{cases}
\end{equation*}
We call $v^\varepsilon_A$ the \emph{velocity determined by the flow map} $\tilde{x}^\varepsilon_A$.\\
{\bf{Property}} $(\rm{III})$ For each $0<t <T$ and $\Lambda^\varepsilon_A (t) \subset \overline{\Omega^\varepsilon_A (t)}$, there is $\mathfrak{M}_A \subset \overline{\Omega_A (0)}$ such that
\begin{align*}
\Lambda^\varepsilon_A (t) = \{ x = { }^t (x_1,x_2,x_3) \in \mathbb{R}^3; { \ } x = \tilde{x}^\varepsilon_A(\xi_A , t), { \ }\xi_A \in \mathfrak{M}_A  \}.
\end{align*}
$(\rm{ii})$ As in $(\rm{i})$, we define a \emph{flow map} $\tilde{x}^\varepsilon_B$ in $\overline{\Omega^\varepsilon_{B,T}}$ and the \emph{velocity} $v^\varepsilon_B$ \emph{determined by the flow map} $\tilde{x}^\varepsilon_B$.\\
$(\rm{iii})$ Let $\tilde{x}^\varepsilon_S = \tilde{x}^\varepsilon_S(\xi_S , t) = { }^t ( \tilde{x}_1^{S , \varepsilon}(\xi_S, t) , \tilde{x}_2^{S,\varepsilon} (\xi_S , t) , \tilde{x}_3^{S,\varepsilon} ( \xi_S , t) )$ be in $[C^\infty (\Gamma (0) \times [0,T) )]^3$. We call $\tilde{x}^\varepsilon_S$ a \emph{flow map} in $\overline{\Gamma^\varepsilon_T}$ if the following three properties hold:\\
{\bf{Property}} $(\rm{I})$ For $0<t <T$,
\begin{equation*}
\Gamma^\varepsilon (t) = \{ x = { }^t (x_1,x_2,x_3) \in \mathbb{R}^3; { \ } x = \tilde{x}^\varepsilon_S (\xi_S , t), { \ }\xi_S \in \Gamma (0)  \}.
\end{equation*}
{\bf{Property}} $(\rm{II})$ There exists a smooth function $v^\varepsilon_S = v^\varepsilon_S (x,t)=$\\ $ { }^t (v^{S,\varepsilon}_1 (x,t) , v^{S, \varepsilon}_2 (x,t),v^{S,\varepsilon}_3 (x,t))$ such that for every $\xi_S \in \Gamma (0)$,
\begin{equation*}
\begin{cases}
\partial_t \tilde{x}^\varepsilon_S (\xi_S , t) = v^\varepsilon_S ( \tilde{x}^\varepsilon_S (\xi_S , t ) , t) , { \ }t \in (0,T),\\
\tilde{x}^\varepsilon_S (\xi_S , 0) = \xi_S.
\end{cases}
\end{equation*}
We call $v^\varepsilon_S$ the \emph{velocity determined by the flow map} $\tilde{x}^\varepsilon_S$.\\
{\bf{Property}} $(\rm{III})$ For each $0<t <T$ the mapping $\tilde{x}^\varepsilon_S(\cdot , t) : \Gamma (0) \to \Gamma^\varepsilon (t)$ is bijective.
\end{definition}

For each $-1 < \varepsilon < 1$, from now, the symbol $(\Omega^\varepsilon_{A,T} , \Omega^\varepsilon_{B,T} , \Gamma^\varepsilon_T)$ is a variation of $(\Omega_{A,T} , \Omega_{B,T} , \Gamma_T)$, $(\tilde{x}^\varepsilon_A , \tilde{x}_B^\varepsilon , \tilde{x}_S^\varepsilon)$ denote flow maps in $(\overline{\Omega^\varepsilon_{A,T}} , \overline{\Omega^\varepsilon_{B,T}} , \overline{\Gamma^\varepsilon_T})$, and $(v_A^\varepsilon , v_B^\varepsilon, v_S^\varepsilon)$ denote the velocities determined by the flow map $(\tilde{x}^\varepsilon_A , \tilde{x}^\varepsilon_B , \tilde{x}^\varepsilon_S)$ satisfying the properties as in Definitions \ref{def24} and \ref{def25}. 

\begin{definition}[Variations of flow maps in $(\overline{\Omega^\varepsilon_{A,T}} , \overline{\Omega^\varepsilon_{B,T}} , \overline{\Gamma^\varepsilon_T})$]\label{def26} { \ }\\
Let $z_A = z_A (x,t) = { }^t (z^A_1,z_2^A, z_3^A)$, $z_B = z_B (x,t) = { }^t (z^B_1,z_2^B, z_3^B)$, $z_S = z_S (x,t) = { }^t (z^S_1,z_2^S, z_3^S)$ be smooth functions in $\mathbb{R}^4$. We say that $(z_A, z_B, z_S)$ is a variation of $(\tilde{x}^{\varepsilon}_A, \tilde{x}^{\varepsilon}_B , \tilde{x}^{\varepsilon}_S)$ if there are smooth functions $\tilde{y}_A = \tilde{y}_A(\xi_A ,t ) = { }^t (\tilde{y}_1^A, \tilde{y}_2^A, \tilde{y}_3^A)$, $\tilde{y}_B = \tilde{y}_B (\xi_B ,t ) = { }^t (\tilde{y}_1^B, \tilde{y}_2^B, \tilde{y}_3^B)$, $\tilde{y}_S = \tilde{y}_S(\xi_S ,t ) = { }^t (\tilde{y}_1^S, \tilde{y}_2^S, \tilde{y}_3^S)$ such that for $0 \leq t <T$, $\xi_A \in \overline{\Omega_A(0)}$, $\xi_B \in \overline{\Omega_B(0)}$, $\xi_S \in \Gamma (0)$,
\begin{align*}
&\frac{d}{d \varepsilon} \bigg \vert_{\varepsilon =0} \tilde{x}_A^\varepsilon (\xi_A ,t ) = \tilde{y}_A (\xi_A , t ),{ \ }z_A ( \tilde{x}_A (\xi_A, t ) , t) = \tilde{y}_A (\xi_A , t),\\
&\frac{d}{d \varepsilon} \bigg \vert_{\varepsilon =0} \tilde{x}_B^\varepsilon (\xi_B ,t ) = \tilde{y}_B (\xi_B , t ),{ \ }z_B ( \tilde{x}_B (\xi_B, t ) , t) = \tilde{y}_B (\xi_B , t),\\
&\frac{d}{d \varepsilon} \bigg \vert_{\varepsilon =0} \tilde{x}_S^\varepsilon (\xi_S ,t ) = \tilde{y}_S (\xi_S , t ),{ \ }z_S ( \tilde{x}_S (\xi_S, t ) , t) = \tilde{y}_S (\xi_S , t).
\end{align*}
\end{definition}

Before stating the main results of this paper, we recall the transport theorems. For each $- 1 < \varepsilon < 1$, let $\rho_A$, $\rho_B$, $\rho_S$, $\rho^\varepsilon_A$, $\rho^\varepsilon_B$, $\rho^\varepsilon_S$ be smooth functions in $\mathbb{R}^4$. 
\begin{proposition}[Continuity equations]\label{prop27}The following two assertions hold:\\ 
\noindent $(\rm{i})$ Assume that for every $0< t <T$ and $\Lambda \subset \Omega$,
\begin{equation*}
\frac{d}{dt} \bigg( \int_{\Omega_A (t) \cap \Lambda} \rho_A { \ } d x + \int_{\Omega_B(t) \cap \Lambda} \rho_B { \ }d x + \int_{\Gamma (t) \cap \Lambda} \rho_S { \ }d \mathcal{H}_x^2 \bigg) = 0 . 
\end{equation*}
Then $(\rho_A , \rho_B , \rho_S )$ satisfies
\begin{equation}\label{eq21}
\begin{cases}
D_t^A \rho_A + ({\rm{div}} v_A) \rho_A = 0 & \text{ in }\Omega_{A,T},\\
D_t^B \rho_B + ({\rm{div}} v_B) \rho_B = 0 & \text{ in }\Omega_{B,T},\\
D_t^S \rho_S + ({\rm{div}}_\Gamma v_S) \rho_S = 0 & \text{ on }\Gamma_{T}.
\end{cases}
\end{equation}
\noindent $(\rm{ii})$ Let $-1< \varepsilon < 1$. Assume that for every $0< t <T$ and $\Lambda \subset \Omega$,
\begin{equation*}
\frac{d}{dt} \bigg( \int_{\Omega^\varepsilon_A (t) \cap \Lambda} \rho^\varepsilon_A { \ } d x + \int_{\Omega^\varepsilon_B(t) \cap \Lambda} \rho^\varepsilon_B { \ }d x + \int_{\Gamma^\varepsilon (t) \cap \Lambda} \rho^\varepsilon_S { \ }d \mathcal{H}_x^2 \bigg) = 0 . 
\end{equation*}
Then $(\rho^\varepsilon_A , \rho^\varepsilon_B , \rho^\varepsilon_S )$ satisfies
\begin{equation}\label{eq22}
\begin{cases}
D_t^{A,\varepsilon} \rho^\varepsilon_A + ({\rm{div}} v^\varepsilon_A) \rho^\varepsilon_A = 0 & \text{ in }\Omega^\varepsilon_{A,T},\\
D_t^{B,\varepsilon} \rho^\varepsilon_B + ({\rm{div}} v^\varepsilon_B) \rho^\varepsilon_B = 0 & \text{ in }\Omega^\varepsilon_{B,T},\\
D_t^{S,\varepsilon} \rho^\varepsilon_S + ({\rm{div}}_{\Gamma^\varepsilon} v^\varepsilon_S) \rho^\varepsilon_S = 0 & \text{ on }\Gamma^\varepsilon_{T}.
\end{cases}
\end{equation}
Here $D_t^{A,\varepsilon} f := \partial_t f + (v^\varepsilon_A, \nabla )f $, $D_t^{B,\varepsilon} f := \partial_t f + (v^\varepsilon_B, \nabla )f $, $D_t^{S,\varepsilon} f := \partial_t f + (v^\varepsilon_S, \nabla )f $, and ${\rm{div}}_{\Gamma^\varepsilon} v_S = \partial_1^{\Gamma^\varepsilon} v^S_1 + \partial_2^{\Gamma^\varepsilon} v^S_2 + \partial_3^{\Gamma^\varepsilon} v^S_3$, where $\partial^{\Gamma^\varepsilon}_i f := \sum_{j=1}^3(\delta_{ij} - n^{\Gamma^\varepsilon}_i n^{\Gamma^\varepsilon}_j ) \partial_j f$.
\end{proposition}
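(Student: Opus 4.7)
The plan is to combine Reynolds-type transport theorems for moving bulk domains and moving surfaces with a localization argument, using the flexibility of the test set $\Lambda$ to isolate each of the three continuity equations in turn. I interpret $\Omega_A(t) \cap \Lambda$, $\Omega_B(t) \cap \Lambda$, and $\Gamma(t) \cap \Lambda$ as material subregions of the respective moving domain or surface; by Property $(\mathrm{III})$ of Definition \ref{def23}, every such subregion is the image of some fixed $\mathfrak{M}_A \subset \overline{\Omega_A(0)}$, $\mathfrak{M}_B \subset \overline{\Omega_B(0)}$, or $\mathfrak{M}_S \subset \Gamma(0)$ under the relevant flow map, so shrinking these sets will localize any integral identity to a pointwise equation.

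First I would fix $t_0 \in (0,T)$, pick $\mathfrak{M}_A \Subset \Omega_A(0)$ and set $\Lambda_A(t) := \tilde{x}_A(\mathfrak{M}_A, t)$. For $t$ near $t_0$ one has $\Lambda_A(t) \Subset \Omega_A(t)$, so the choice $\Lambda := \Lambda_A(t_0)$ forces $\Omega_B(t) \cap \Lambda = \Gamma(t) \cap \Lambda = \emptyset$ and the hypothesis collapses to
$$\frac{d}{dt}\int_{\Lambda_A(t)} \rho_A\, dx = 0.$$
Applying the Reynolds transport theorem and using $\partial_t \tilde{x}_A = v_A \circ \tilde{x}_A$ yields
$$\int_{\Lambda_A(t)} \bigl(D_t^A \rho_A + (\mathrm{div}\, v_A)\rho_A\bigr)\, dx = 0,$$
and since $\mathfrak{M}_A$ is arbitrary, the integrand vanishes pointwise on $\Omega_{A,T}$. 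An identical argument with $\tilde{x}_B$ on material subsets of $\Omega_B(t)$ yields the second equation of \eqref{eq21}.

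Next I would pick an arbitrary $\mathfrak{M}_S \subset \Gamma(0)$, set $\Lambda_S(t) := \tilde{x}_S(\mathfrak{M}_S, t)$, and choose $\Lambda$ to be a thin material tubular neighborhood of $\Lambda_S(t_0)$; the boundary condition \eqref{eq12}, $v_A \cdot n_\Gamma = v_B \cdot n_\Gamma = v_S \cdot n_\Gamma$ on $\Gamma_T$, guarantees that the three material pieces $\Omega_A(t)\cap\Lambda$, $\Omega_B(t)\cap\Lambda$, $\Gamma(t)\cap\Lambda$ evolve compatibly. The bulk contributions to the hypothesis then vanish by the two equations already established (via Reynolds again), leaving
$$\frac{d}{dt}\int_{\Lambda_S(t)} \rho_S\, d\mathcal{H}^2_x = 0.$$
Invoking the surface transport identity
$$\frac{d}{dt}\int_{\Lambda_S(t)} f\, d\mathcal{H}^2_x = \int_{\Lambda_S(t)} \bigl(D_t^S f + (\mathrm{div}_\Gamma v_S) f\bigr)\, d\mathcal{H}^2_x$$
with $f = \rho_S$ and localizing in $\mathfrak{M}_S$ produces the third equation of \eqref{eq21}.

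Part $(\mathrm{ii})$ proceeds verbatim with the $\varepsilon$-decorated objects $\tilde{x}_A^\varepsilon$, $v_A^\varepsilon$, $n_{\Gamma^\varepsilon}$, $\mathrm{div}_{\Gamma^\varepsilon}$, etc., since nothing in the argument depends on the particular value of $\varepsilon$. The main technical obstacle I anticipate is the careful justification of the surface transport theorem for a moving material subregion of a closed Riemannian surface; this is a classical formula but its derivation requires a local parametrization of $\Gamma(t)$ together with the Jacobian of the surface flow map, a development that should be carried out from the Riemannian metric machinery of Section \ref{sect3}.
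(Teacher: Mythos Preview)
Your approach is correct and is essentially what the paper has in mind: it does not give a detailed proof but cites the literature and remarks that Proposition~\ref{prop27} follows by applying Lemma~\ref{lem33}, whose Jacobian identities (informally $\frac{d}{dt}\sqrt{G_\sharp}=(\mathrm{div}\,v_\sharp)\sqrt{G_\sharp}$ for $\sharp=A,B$ and $\frac{d}{dt}\sqrt{G_S}=(\mathrm{div}_\Gamma v_S)\sqrt{G_S}$) are precisely the Reynolds and surface transport formulas you invoke after pulling back to $\Omega_\sharp(0)$ or $\Gamma(0)$. In particular, the technical obstacle you flag at the end---the surface transport theorem over a material patch of $\Gamma(t)$---is exactly what the $\sqrt{G_S}$ part of Lemma~\ref{lem33} together with the local parametrization \eqref{eq35} of Lemma~\ref{lem32} delivers, so no separate development is needed.
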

\noindent The proof of Proposition \ref{prop27} can be founded in \cite{Bet86,GSW89,DE07,KLG17, KS17}. We can prove Proposition \ref{prop27} by applying Lemma \ref{lem33}.

Now we state the main results of this paper. Let $\rho_0^A \in C^\infty ( \overline{\Omega_A (0)})$, $\rho_0^B \in C^\infty ( \overline{\Omega_B (0)})$, and $\rho_0^S \in C^\infty ( \Gamma (0))$.
Assume that $(\rho_A , \rho_B , \rho_S )$ satisfies \eqref{eq21} and
\begin{equation}\label{eq23}
\begin{cases}
\rho_A \vert_{t =0} =\rho_0^A & \text{ in } \Omega_A (0),\\
\rho_B \vert_{t =0} =\rho_0^B & \text{ in } \Omega_B (0),\\
\rho_S \vert_{t =0} =\rho_0^S & \text{ on } \Gamma (0).
\end{cases}
\end{equation}
For each $- 1 < \varepsilon < 1$ assume that $(\rho^\varepsilon_A , \rho^\varepsilon_B , \rho^\varepsilon_S )$ satisfies \eqref{eq22} and
\begin{equation}\label{eq24}
\begin{cases}
\rho^{\varepsilon}_A \vert_{t =0} =\rho_0^A & \text{ in } \Omega_A (0),\\
\rho^{\varepsilon}_B \vert_{t =0} =\rho_0^B & \text{ in } \Omega_B (0),\\
\rho^{\varepsilon}_S \vert_{t =0} =\rho_0^S & \text{ on } \Gamma (0).
\end{cases}
\end{equation}
Let $p_A, p_B , p_S \in C^1(\mathbb{R})$. For each variation $(\tilde{x}_A^\varepsilon, \tilde{x}_B^\varepsilon, \tilde{x}_S^\varepsilon)$ of the flow map $(\tilde{x}_A, \tilde{x}_B, \tilde{x}_S)$,
\begin{multline*}
A [\tilde{x}_A^\varepsilon, \tilde{x}_B^\varepsilon, \tilde{x}_S^\varepsilon] := \int_0^T \int_{\Omega^{\varepsilon}_A (t)} \left\{ \frac{1}{2} \rho^{\varepsilon}_A  \vert v^{\varepsilon}_A \vert^2 - p_A (\rho^{\varepsilon}_A) \right\}{ \ }d x dt\\
 + \int_0^T \int_{\Omega^{\varepsilon}_B (t)} \left\{ \frac{1}{2} \rho^{\varepsilon}_B  \vert v^{\varepsilon}_B \vert^2 - p_B (\rho^{\varepsilon}_B) \right\} { \ }d x dt\\
 + \int_0^T \int_{\Gamma^{\varepsilon} (t)} \left\{ \frac{1}{2} \rho^{\varepsilon}_S  \vert v^{\varepsilon}_S \vert^2 - p_S (\rho^{\varepsilon}_S) \right\} { \ }d \mathcal{H}^2_x dt.
\end{multline*}
\noindent We call $A [\tilde{x}_A^\varepsilon, \tilde{x}_B^\varepsilon, \tilde{x}_S^\varepsilon]$ and $(\rho_* \vert v_* \vert^2/2, p_* (\rho_*) )$ the \emph{action integral} and \emph{energy densities} for our models. Moreover, we assume that the Riemannian metrics $\sqrt{G_A}$, $\sqrt{G^\varepsilon_A}$, $\sqrt{G_B}$, $\sqrt{G_B^\varepsilon}$, $\sqrt{G_S}$, $\sqrt{G_S^\varepsilon}$ are positive functions (see Section \ref{sect3} for details). 
\begin{theorem}[Variations of the flow maps to the action integral]\label{thm28}
Assume that for every $0 \leq t <T$, $\xi_A \in \overline{\Omega_A(0)}$, $\xi_B \in \overline{\Omega_B (0)}$, and $\xi _S \in \Gamma (0)$,
\begin{equation}\label{eq25}
\begin{cases}
\rho_A^\varepsilon ( \tilde{x}_A^\varepsilon (\xi_A , t) , t)  \vert_{\varepsilon = 0} & = \rho_A ( \tilde{x}_A (\xi_A , t ) , t),\\
\rho_B^\varepsilon ( \tilde{x}_B^\varepsilon (\xi_B , t) , t)  \vert_{\varepsilon = 0} & = \rho_B ( \tilde{x}_B (\xi_B , t ) , t),\\
\rho_S^\varepsilon ( \tilde{x}_S^\varepsilon (\xi_S , t) , t)  \vert_{\varepsilon = 0} & = \rho_S ( \tilde{x}_S (\xi_S , t ) , t),
\end{cases}
\end{equation}
and
\begin{equation}\label{eq26}
\begin{cases}
z_A (\tilde{x}_A (\xi_A,T-),T-) ={ }^t (0,0,0),\\
z_B (\tilde{x}_B (\xi_B,T-),T-) ={ }^t (0,0,0),\\
z_S (\tilde{x}_S (\xi_S,T-),T-) ={ }^t (0,0,0).
\end{cases}
\end{equation}
Then
\begin{multline}\label{eq27}
\frac{d}{d \varepsilon} \bigg \vert_{\varepsilon = 0} A [\tilde{x}_A^\varepsilon, \tilde{x}_B^\varepsilon, \tilde{x}_S^\varepsilon] = - \int_0^T \int_{\Omega_A (t)} \{ ( \rho_A D^A_t v_A + \nabla \mathfrak{p}_A) \cdot z_A \} (x,t) { \ } d x d t\\
- \int_0^T \int_{\Omega_B (t)} (\rho_B D_t^B v_B + \nabla \mathfrak{p}_B) \cdot z_B { \ } d x d t\\
- \int_0^T \int_{\Gamma (t)} ( \rho_S D_t^S v_S + \nabla_\Gamma \mathfrak{p}_S + \mathfrak{p}_S H_\Gamma n_\Gamma ) \cdot z_S { \ } d \mathcal{H}^2_x d t\\
+ \int_0^T \int_{\Gamma (t)} (\mathfrak{p}_A n_\Gamma ) \cdot z_A { \ } d \mathcal{H}_x^2 d t - \int_0^T \int_{\Gamma (t)} (\mathfrak{p}_B n_\Gamma) \cdot z_B { \ }d \mathcal{H}^2_x dt \\ 
 + \int_0^T \int_{\partial \Omega} ( \mathfrak{p}_B n_\Omega ) \cdot z_B { \ }d \mathcal{H}^2_x d t.
\end{multline}
Here $(z_A, z_B, z_S)$ is a variation of $(\tilde{x}^{\varepsilon}_A, \tilde{x}^{\varepsilon}_B , \tilde{x}^{\varepsilon}_S)$, $(\mathfrak{p}_A, \mathfrak{p}_B, \mathfrak{p}_S)$ is define by \eqref{eq14}, and $f(T-):=\lim_{t \uparrow T} f(t)$.
\end{theorem}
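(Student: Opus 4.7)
The plan is to pull each term of the action back to the fixed reference configurations $\Omega_A(0)$, $\Omega_B(0)$, $\Gamma(0)$ through the flow maps so that the only explicit $\varepsilon$-dependence is carried by $\tilde{x}_*^\varepsilon$ and its Jacobians $\sqrt{G_A^\varepsilon}, \sqrt{G_B^\varepsilon}, \sqrt{G_S^\varepsilon}$, and then to differentiate, integrate by parts in $t$ and $x$, and push back to the Eulerian picture. Two identities do the bookkeeping: the Lagrangian form of continuity, $\rho_*^\varepsilon(\tilde{x}_*^\varepsilon(\xi,t),t)\sqrt{G_*^\varepsilon}(\xi,t)=\rho_0^*(\xi)$, which follows from \eqref{eq22} together with \eqref{eq24} and the assumption \eqref{eq25}; and Property (II) of Definition \ref{def25}, $v_*^\varepsilon\circ\tilde{x}_*^\varepsilon=\partial_t\tilde{x}_*^\varepsilon$. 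After substitution the kinetic integrand becomes $\tfrac{1}{2}\rho_0^*|\partial_t\tilde{x}_*^\varepsilon|^2$ and the pressure integrand becomes $p_*(\rho_0^*/\sqrt{G_*^\varepsilon})\sqrt{G_*^\varepsilon}$, both defined on the $\varepsilon$-independent reference configuration.

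Differentiating the kinetic part yields $\rho_0^*\,\partial_t\tilde{x}_*\cdot\partial_t\tilde{y}_*$, and integrating by parts in $t$ using \eqref{eq26} converts this to $-\rho_0^*(D_t^*v_*)\circ\tilde{x}_*\cdot\tilde{y}_*$, which pushes forward to $-\int\rho_*D_t^*v_*\cdot z_*$. For the pressure part, writing $f(g):=p_*(\rho_0^*/g)\,g$ one checks that $f'(\sqrt{G_*})=-(\rho_* p'_*(\rho_*)-p_*(\rho_*))=-\mathfrak{p}_*$ by \eqref{eq14}, so the $\varepsilon$-derivative is $\int\mathfrak{p}_*\,\partial_\varepsilon\sqrt{G_*^\varepsilon}\vert_{\varepsilon=0}$. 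In the bulk cases, Jacobi's formula gives $\partial_\varepsilon\sqrt{G_*^\varepsilon}\vert_{\varepsilon=0}=\sqrt{G_*}\,(\operatorname{div}z_*)\circ\tilde{x}_*$ (an identity I would take from Section \ref{sect3}), and after returning to Eulerian coordinates this produces $\int\mathfrak{p}_*\operatorname{div}z_*\,dx\,dt$.

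The remaining bulk work is an application of the divergence theorem. For $\Omega_A$, whose outward normal on $\Gamma(t)$ is $n_\Gamma$, one gets $-\int\nabla\mathfrak{p}_A\cdot z_A+\int_{\Gamma(t)}\mathfrak{p}_A n_\Gamma\cdot z_A$. For $\Omega_B$, whose outward normal is $-n_\Gamma$ on $\Gamma(t)$ and $n_\Omega$ on $\partial\Omega$, one gets $-\int\nabla\mathfrak{p}_B\cdot z_B-\int_{\Gamma(t)}\mathfrak{p}_B n_\Gamma\cdot z_B+\int_{\partial\Omega}\mathfrak{p}_B n_\Omega\cdot z_B$. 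These match exactly the last two lines on the right-hand side of \eqref{eq27}.

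The delicate piece is the surface contribution, because $\Gamma(t)$ is curved and $z_S$ need not be tangential. I would use the surface analogue of Jacobi's formula, $\partial_\varepsilon\sqrt{G_S^\varepsilon}\vert_{\varepsilon=0}=\sqrt{G_S}\,(\operatorname{div}_\Gamma z_S)\circ\tilde{x}_S$, again supplied by Section \ref{sect3}, to reduce the surface pressure variation to $\int_{\Gamma(t)}\mathfrak{p}_S\operatorname{div}_\Gamma z_S\,d\mathcal{H}_x^2\,dt$. Decomposing $z_S=z_S^T+(z_S\cdot n_\Gamma)n_\Gamma$ and using $\operatorname{div}_\Gamma n_\Gamma=-H_\Gamma$ together with the tangentiality $\nabla_\Gamma\mathfrak{p}_S\cdot n_\Gamma=0$, surface integration by parts on the closed manifold $\Gamma(t)$ (no boundary term) yields $-\int_{\Gamma(t)}(\nabla_\Gamma\mathfrak{p}_S+\mathfrak{p}_S H_\Gamma n_\Gamma)\cdot z_S$. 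Adding this to the surface kinetic contribution and combining with the bulk terms produces \eqref{eq27}. I expect this last step to be the main obstacle, because it is where the extrinsic geometry of $\Gamma(t)$ enters through $H_\Gamma$, and where one must carefully track the normal component of the non-tangential variation $z_S$ — it is precisely this mechanism that generates the surface-tension term in the model without recourse to the Young–Laplace or Boussinesq–Scriven laws.
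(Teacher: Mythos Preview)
Your proposal is correct and follows essentially the same route as the paper's proof. The paper packages your intermediate computations as Proposition~\ref{prop42} (derived via Lemmas~\ref{lem33} and~\ref{lem34}, which supply exactly the Lagrangian representation formulas and the Jacobi-type identities $\partial_\varepsilon\sqrt{G_*^\varepsilon}\big\vert_{\varepsilon=0}$ you invoke), and then finishes by citing the integration-by-parts formulas of Lemma~\ref{lem71}; your surface step via the tangential/normal decomposition of $z_S$ and $\operatorname{div}_\Gamma n_\Gamma=-H_\Gamma$ is precisely the content of Lemma~\ref{lem71}(iii), so there is no substantive difference in strategy.
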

\noindent Applying \eqref{eq27}, we make mathematical models for inviscid multiphase flow with surface flow. See Section \ref{sect5} for details.

\begin{theorem}[Conservation and energy laws]\label{thm29}{ \ }\\
\noindent $(\rm{i})$ Any solution to system \eqref{eq13} with \eqref{eq12} and \eqref{eq14} satisfies \eqref{eq18}, \eqref{eq19}, \eqref{eq1010}, and \eqref{eq1011}.\\
\noindent $(\rm{ii})$ Any solution to system \eqref{eq15} with \eqref{eq12} and \eqref{eq14} satisfies \eqref{eq18}, \eqref{eq19}, and \eqref{eq1010}.
\end{theorem}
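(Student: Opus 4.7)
The plan is to prove each law by combining Reynolds-type transport identities on the three pieces $\Omega_A(t)$, $\Omega_B(t)$, $\Gamma(t)$ with the momentum/continuity equations of \eqref{eq13} or \eqref{eq15}, and by using the kinematic matching $v_A\cdot n_\Gamma = v_B\cdot n_\Gamma = v_S\cdot n_\Gamma$ on $\Gamma_T$ together with $v_B\cdot n_\Omega=0$ on $\partial\Omega_T$ to cancel the resulting interface and outer-boundary terms. The single workhorse is the \emph{material transport identity}
\begin{align*}
\frac{d}{dt}\int_{\Omega_A(t)}\rho_A f\,dx &=\int_{\Omega_A(t)}\rho_A D_t^A f\,dx,\\
\frac{d}{dt}\int_{\Gamma(t)}\rho_S f\,d\mathcal{H}_x^2 &=\int_{\Gamma(t)}\rho_S D_t^S f\,d\mathcal{H}_x^2,
\end{align*}
together with its $B$-analogue, valid because $\rho_\ast$ satisfies the continuity equation \eqref{eq21} from Proposition~\ref{prop27} and each piece is transported by $v_\ast$. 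These follow from the standard volume and surface Reynolds formulas, which I would cite from the Appendix tools.

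For the mass law \eqref{eq17} I take $f\equiv 1$ in the three identities and add; in case (ii) the surface continuity $D_t^S\rho_S=0$ with $\mathrm{div}_\Gamma v_S=0$ still gives $\tfrac{d}{dt}\int_\Gamma\rho_S=0$. For the momentum law \eqref{eq18} I apply the identity componentwise with $f=v_\ast^i$ and substitute the Euler-type equations. Integration by parts on $\Omega_A$ produces $-\int_\Gamma\mathfrak{p}_A n_\Gamma$; on $\Omega_B$, $+\int_\Gamma\mathfrak{p}_B n_\Gamma-\int_{\partial\Omega}\mathfrak{p}_B n_\Omega$; and the surface momentum equation yields $-\int_\Gamma(\nabla_\Gamma\mathfrak{p}_S+\mathfrak{p}_S H_\Gamma n_\Gamma)-\int_\Gamma(\mathfrak{p}_B-\mathfrak{p}_A)n_\Gamma$. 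The first surface integral vanishes by the closed-surface identity $\int_\Gamma(\nabla_\Gamma f+f H_\Gamma n_\Gamma)\,d\mathcal{H}_x^2=0$ (which follows from $\int_\Gamma\mathrm{div}_\Gamma X=-\int_\Gamma H_\Gamma X\cdot n_\Gamma$ on a closed surface together with the sign convention $H_\Gamma=-\mathrm{div}_\Gamma n_\Gamma$); the three interface pressure contributions cancel pairwise; and only $-\int_{\partial\Omega}\mathfrak{p}_B n_\Omega$ survives, matching \eqref{eq18}.

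For the kinetic-energy law \eqref{eq19} I dot each momentum equation with its velocity and feed $f=\tfrac12|v_\ast|^2$ into the material identities. In the bulk, integration by parts converts $v_\ast\cdot\nabla\mathfrak{p}_\ast$ into $-\mathfrak{p}_\ast\mathrm{div}\,v_\ast$ plus an interface flux, and the outer boundary flux $\int_{\partial\Omega}\mathfrak{p}_B v_B\cdot n_\Omega$ is killed by $v_B\cdot n_\Omega=0$. On $\Gamma$, writing $v_S\cdot\nabla_\Gamma\mathfrak{p}_S=\mathrm{div}_\Gamma(\mathfrak{p}_S v_S)-\mathfrak{p}_S\,\mathrm{div}_\Gamma v_S$ and applying the closed-surface identity to $\mathrm{div}_\Gamma(\mathfrak{p}_S v_S)$ makes the two mean-curvature contributions cancel, leaving $-\int_\Gamma\mathfrak{p}_S\,\mathrm{div}_\Gamma v_S$ plus the interface jump flux $\int_\Gamma(\mathfrak{p}_B-\mathfrak{p}_A)v_S\cdot n_\Gamma$. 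Because $v_S\cdot n_\Gamma=v_A\cdot n_\Gamma=v_B\cdot n_\Gamma$, this jump flux exactly cancels the $\Omega_A$ and $\Omega_B$ interface remainders, and only the three bulk $\mathfrak{p}_\ast\mathrm{div}\,v_\ast$ terms on the right of \eqref{eq19} remain. The same computation applies verbatim to \eqref{eq15}: $\mathrm{div}_\Gamma v_S=0$ forces $\Pi_S\,\mathrm{div}_\Gamma v_S=0$, so $\Pi_S$ drops out of the final identity and \eqref{eq19} still holds.

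For the total-energy law \eqref{eq1010} (system \eqref{eq13} only) I use instead the unweighted Reynolds formula $\frac{d}{dt}\int_{\Omega_A(t)}f\,dx=\int_{\Omega_A(t)}(D_t^A f+f\,\mathrm{div}\,v_A)\,dx$ with $f=p_A(\rho_A)$. The continuity equation gives $D_t^A p_A(\rho_A)=-\rho_A p_A'(\rho_A)\,\mathrm{div}\,v_A$, and the barotropic relation \eqref{eq14} collapses the integrand to $-\mathfrak{p}_A\,\mathrm{div}\,v_A$; the same steps for $B$, and for $p_S(\rho_S)$ via the surface Reynolds formula, produce $-\mathfrak{p}_B\,\mathrm{div}\,v_B$ and $-\mathfrak{p}_S\,\mathrm{div}_\Gamma v_S$. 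Adding these three identities to \eqref{eq19} cancels the production terms and yields \eqref{eq1010}. The main technical obstacle throughout is the careful surface accounting --- namely the closed-surface identity $\int_\Gamma(\nabla_\Gamma f+fH_\Gamma n_\Gamma)\,d\mathcal{H}_x^2=0$ and the surface Reynolds theorem with its $\mathrm{div}_\Gamma v_S$ correction --- but these are standard facts about smoothly evolving closed surfaces and can be invoked from the Appendix tools promised in the introduction.
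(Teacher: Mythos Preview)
Your proposal is correct and follows essentially the same approach as the paper: both arguments apply the material transport identity (the paper via Lemma~\ref{lem33}) to $f=1$, $f=v_\ast$, and $f=\tfrac12|v_\ast|^2$, substitute the momentum equations, and use the divergence/surface-divergence formulas of Lemma~\ref{lem71} together with the matching conditions \eqref{eq12} to cancel the interface fluxes, with \eqref{eq1010} obtained by adding the barotropic potential-energy identity to \eqref{eq19}. Your write-up is in fact more explicit than the paper's about the surface cancellation $\int_\Gamma(\nabla_\Gamma f+fH_\Gamma n_\Gamma)\,d\mathcal{H}_x^2=0$ and the role of $v_A\cdot n_\Gamma=v_B\cdot n_\Gamma=v_S\cdot n_\Gamma$ in the kinetic-energy step, but the content is the same.
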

We prove Theorem \ref{thm28} in Section \ref{sect4} and Theorem \ref{thm29} in Section \ref{sect6}.

\section{Preliminaries}\label{sect3}
Let us first recall some properties of the Riemannian metrics determined by flow maps. Then we study the representation formulas for our energy densities.

Let $\tilde{x}_A = \tilde{x}_A(\xi_A , t)$ be the flow map in $\overline{\Omega_{A,T}}$, and $v_A = v_A (x,t)$ be the velocity determined by the flow map $\tilde{x}_A$, that is, for every $\xi_A \in \overline{\Omega_A (0)}$ and $0< t <T$,
\begin{equation*}
\begin{cases}
\tilde{x}_{A} = \tilde{x}_{A} ( \xi_{A} , t) = { }^t ( \tilde{x}^{A}_1 (\xi_A , t) , \tilde{x}^{A}_2 (\xi_{A}, t), \tilde{x}^{A}_3 (\xi_{A},t) ),\\
v_{A} = v_{A} (x,t) = { }^t ( v^{A}_1 (x,t) , v^{A}_2 (x,t) , v^{A}_3 (x,t) ),\\
\partial_t \tilde{x}_{A} = v_A ( \tilde{x}_{A}(\xi_{A}, t) , t),\\
\tilde{x}_{A} (\xi_{A}, 0 ) = \xi_{A}.
\end{cases} 
\end{equation*}
For each $i,j \in \{ 1, 2 ,3 \}$,
\begin{equation*}
g_i^{A} = g_i^{A} ( \xi_{A} , t) := \frac{\partial \tilde{x}_{A}}{\partial \xi^A_{i}} = { }^t \left( \frac{\partial \tilde{x}^{A}_1 }{\partial \xi^{A}_i}, \frac{\partial \tilde{x}^{A}_2 }{\partial \xi^{A}_i}, \frac{\partial \tilde{x}^{A}_3 }{\partial \xi^{A}_i} \right),
\end{equation*}
\begin{equation*}
g^{A}_{ij} = g^{A}_{ij}(\xi_{A} , t) := g^{A}_i \cdot g^{A}_j = \sum_{\ell=1}^3 \frac{\partial \tilde{x}^{A}_\ell}{\partial \xi^{A}_i}\frac{\partial \tilde{x}^{A}_\ell}{\partial \xi^{A}_j},
\end{equation*}
and
\begin{equation*}
G_{A} := G_{A} (\xi_{A} , t) = {\rm{det}} (g_{ij}^{A})_{3 \times 3}.
\end{equation*}

For the flow map $\tilde{x}_B$ in $\overline{\Omega_{B,T}}$, we define $g_i^{B}$, $g^{B}_{ij}$, $G_B$ similarly to $g_i^{A}$, $g^{A}_{ij}$, $G_A$.

For each $-1 < \varepsilon < 1$, let $\tilde{x}^{\varepsilon}_A = \tilde{x}^{\varepsilon}_A( \xi_A , t)$ be a flow map in $\overline{\Omega^{\varepsilon}_{A,T}}$, and $v^{\varepsilon}_A = v^{\varepsilon}_A (x , t)$ be the velocity determined by the flow map $\tilde{x}^{\varepsilon}_A$, that is, for every $\xi_A \in \overline{\Omega_A (0)}$ and $0< t <T$,
\begin{equation*}
\begin{cases}
\tilde{x}^{\varepsilon}_{A} = \tilde{x}^{\varepsilon}_{A} ( \xi_{A} , t) = { }^t ( \tilde{x}^{A,\varepsilon}_1 (\xi_A , t) , \tilde{x}^{A,\varepsilon}_2 (\xi_{A}, t), \tilde{x}^{A,\varepsilon}_3 (\xi_{A},t) ),\\
v^{\varepsilon}_{A} = v^{\varepsilon}_{A} (x,t) = { }^t ( v^{A, \varepsilon}_1 (x,t) , v^{A,\varepsilon}_2 (x,t) , v^{A,\varepsilon}_3 (x,t) ),\\
\partial_t \tilde{x}^{\varepsilon}_{A} = v^{\varepsilon}_A ( \tilde{x}^{\varepsilon}_{A}(\xi_{A}, t) , t),\\
\tilde{x}^{\varepsilon}_{A} (\xi_{A}, 0 ) = \xi_{A}.
\end{cases} 
\end{equation*}
For each $i,j=1,2,3$,
\begin{equation*}
g_i^{A,\varepsilon} = g_i^{A,\varepsilon} ( \xi_{A} , t) := \frac{\partial \tilde{x}_{A}^{\varepsilon}}{\partial \xi^A_i} = { }^t \left( \frac{\partial \tilde{x}^{A,\varepsilon}_1 }{\partial \xi^{A}_i}, \frac{\partial \tilde{x}^{A,\varepsilon}_2 }{\partial \xi^{A}_i}, \frac{\partial \tilde{x}^{A,\varepsilon}_3 }{\partial \xi^{A}_i} \right),
\end{equation*}
\begin{equation*}
g^{A,\varepsilon}_{ij} = g^{A,\varepsilon}_{ij}(\xi_{A} , t) := g^{A,\varepsilon}_i \cdot g^{A,\varepsilon}_j,
\end{equation*}
and
\begin{equation*}
G^{\varepsilon}_{A} := G^{\varepsilon}_{A} (\xi_{A} , t) = {\rm{det}} (g_{ij}^{A,\varepsilon})_{3 \times 3}.
\end{equation*}

For each flow map $\tilde{x}^{\varepsilon}_B$ in $\overline{\Omega^\varepsilon_{B,T}}$, we define $g_i^{B, \varepsilon}$, $g^{B,\varepsilon}_{ij}$, $G^{\varepsilon}_B$ similarly to $g_i^{A, \varepsilon}$, $g^{A,\varepsilon}_{ij}$, $G^{\varepsilon}_A$.

\begin{remark}
Let $\sharp = A ,B$. Since ${ }^t (\nabla_{\xi_\sharp} \tilde{x}_\sharp) (\nabla_{\xi_\sharp} \tilde{x}_\sharp) = (g^\sharp_{ij})_{3 \times 3}$, ${ }^t (\nabla_{\xi_\sharp} \tilde{x}^\varepsilon_\sharp) (\nabla_{\xi_\sharp} \tilde{x}^\varepsilon_\sharp) = (g^{\sharp, \varepsilon}_{ij})_{3 \times 3}$, ${\rm{det}}({ }^t(\nabla_{\xi_\sharp} \tilde{x}_\sharp)) = {\rm{det}}(\nabla_{\xi_\sharp} \tilde{x}_\sharp)$, and ${\rm{det}}({ }^t(\nabla_{\xi_\sharp} \tilde{x}^\varepsilon_\sharp)) = {\rm{det}}(\nabla_{\xi_\sharp} \tilde{x}^\varepsilon_\sharp)$, we see that for $f \in C (\mathbb{R}^4)$,
\begin{align*}
\int_{\Omega_\sharp (t)} f (x ,t ) { \ }d x & = \int_{\Omega_\sharp (0)} f (\tilde{x}_\sharp (\xi_\sharp, t) , t) \sqrt{G_\sharp (\xi_\sharp , t)}{ \ }d \xi_\sharp,\\
\int_{\Omega^\varepsilon_\sharp (t)} f (x ,t ) { \ }d x & = \int_{\Omega_\sharp (0)} f (\tilde{x}^\varepsilon_\sharp (\xi_\sharp, t) , t) \sqrt{G^\varepsilon_\sharp (\xi_\sharp , t)}{ \ }d \xi_\sharp.
\end{align*}
We call $\sqrt{G_\sharp}$ and $\sqrt{G_\sharp^\varepsilon}$ the \emph{Riemannian metrics determined by the flow maps} $\tilde{x}_\sharp$ and $\tilde{x}_\sharp^\varepsilon$, respectively.
\end{remark}

Now we define the Riemannian metric determined by the flow map $\tilde{x}_S$ in $\overline{\Gamma_T}$. Let $\tilde{x}_S = \tilde{x}_S (\xi_S , t)$ be the flow map in $\overline{\Gamma_T}$, and $v_S = v_S (x,t)$ be the velocity determined by the flow map $\tilde{x}_S$, that is, for every $\xi_S \in \Gamma (0)$ and $0< t <T$,
\begin{equation*}
\begin{cases}
\tilde{x}_{S} = \tilde{x}_{S} ( \xi_{S} , t) = { }^t ( \tilde{x}^{S}_1 (\xi_S , t) , \tilde{x}^{S}_2 (\xi_{S}, t), \tilde{x}^{S}_3 (\xi_{S},t) ),\\
v_{S} = v_{S} (x,t) = { }^t ( v^{S}_1 (x,t) , v^{S}_2 (x,t) , v^{S}_3 (x,t) ),\\
\partial_t \tilde{x}_{S} = v_S ( \tilde{x}_{S}(\xi_{S}, t) , t),\\
\tilde{x}_{S} (\xi_{S}, 0 ) = \xi_{S}.
\end{cases} 
\end{equation*}
Since $\Gamma (0)$ is a closed Riemannian 2-dimensional manifold, there are $N \in \mathbb{N}$, $\Gamma_m \subset \Gamma (0)$, $\Phi_m = \Phi_m (X) \in [C^\infty ( \mathbb{R}^2)]^3$, $U_m \subset \mathbb{R}^2$, $\Psi_m = \Psi_m (\xi_S) \in C^\infty (\mathbb{R}^3)$ $(m=1,2, \cdots, N)$ such that
\begin{equation}\label{eq31}
\begin{cases}
{\displaystyle{\bigcup_{m=1}^N \Gamma_m = \Gamma (0)}},\\
\Gamma_m = \Phi_m (U_m),\\
\text{supp}\Psi_m \subset \Gamma_m,\\
\vert \vert  \Psi_m \vert \vert_{L^\infty} =1,\\
{\displaystyle{\sum_{m=1}^N \Psi_m = 1 \text{ on } \Gamma (0)}}.
\end{cases}
\end{equation}
This is a partition of unity. Fix $\xi_S \in \Gamma (0)$. Assume that $\xi_S \in \Gamma_m$ for some $m \in \{ 1, 2, \cdots, N \}$. Since we can write $\xi_S = \Phi_m (X)$ for some $X = { }^t (X_1, X_2) \in U_m \subset \mathbb{R}^2$, we set
\begin{equation*}
\hat{x}_S = \hat{x}_S(X,t) = { }^t (\hat{x}^S_1 , \hat{x}^S_2, \hat{x}^S_3) = \tilde{x}_S (\Phi_m(X), t) (= \tilde{x}_S(\xi_S ,t)) . 
\end{equation*}
Then
\begin{equation*}
\begin{cases}
\partial_t \hat{x}_S = \partial_t \hat{x}_S (X,t) = v_S ( \hat{x}_S (X,t) , t),\\
\hat{x}_S  \vert_{t =0} = \Phi_m (X) (= \xi_S).  
\end{cases}
\end{equation*}
Now we write
\begin{equation}\label{eq32}
\Phi : = \Phi_m { \ }\text{ if } \xi_S \in \Gamma_m.
\end{equation}
Then for each $\xi_S \in \Gamma (0)$ and $0 < t < T$,
\begin{equation*}
\begin{cases}
\partial_t \hat{x}_S = \partial_t \hat{x}_S (X , t) = v_S (\hat{x}_S (X,t) , t),\\
\hat{x}_S \vert_{t=0} = \Phi ( X) (= \xi_S ).
\end{cases}
\end{equation*}
We also call $\hat{x}_S$ a \emph{flow map} in $\overline{\Gamma_T}$. For the flow map $\hat{x}_S$ and $\alpha , \beta = 1,2$,
\begin{equation*}
\mathfrak{g}_\alpha = \mathfrak{g}_\alpha (X , t) := \frac{\partial \hat{x}_S}{\partial X_\alpha} = { }^t \left( \frac{\partial \hat{x}^S_1}{\partial X_\alpha},\frac{\partial \hat{x}^S_2}{\partial X_\alpha} , \frac{\partial \hat{x}^S_3}{\partial X_\alpha} \right),
\end{equation*}
\begin{equation*}
\mathfrak{g}_{\alpha \beta} = \mathfrak{g}_{\alpha \beta} (X,t) := \mathfrak{g}_\alpha \cdot \mathfrak{g}_\beta = \frac{\partial \hat{x}_S}{\partial X_\alpha} \cdot \frac{\partial \hat{x}_S}{\partial X_\beta},
\end{equation*}
and
\begin{equation*}
G_S = G_S (X,t) := \mathfrak{g}_{11} \mathfrak{g}_{22} - \mathfrak{g}_{12}\mathfrak{g}_{21}.
\end{equation*}

For each $- 1 < \varepsilon <1$, let $\tilde{x}^{\varepsilon}_S = \tilde{x}^{\varepsilon}_S (\xi_S , t)$ be a flow map in $\overline{\Gamma^{\varepsilon}_T}$, and $v^\varepsilon_S = v^\varepsilon_S (x,t)$ be the velocity determined by the flow map $\tilde{x}^\varepsilon_S$. Fix $\xi_S \in \Gamma (0)$. Assume that $\xi_S \in \Gamma_m$ for some $m \in \{ 1, 2, \cdots, N \}$. Since we can write $\xi_S = \Phi_m (X)$ for some $X = { }^t (X_1, X_2) \in U_m \subset \mathbb{R}^2$, we set
\begin{equation*}
\hat{x}_S^\varepsilon = \hat{x}_S^{\varepsilon}(X,t) = { }^t (\hat{x}_1^{S,\varepsilon} , \hat{x}_2^{S, \varepsilon} , \hat{x}_3^{S,\varepsilon})  := \tilde{x}_S^{\varepsilon} (\Phi_m(X), t) (= \tilde{x}_S^{\varepsilon }(\xi_S ,t)) . 
\end{equation*}
Then
\begin{equation*}
\begin{cases}
\partial_t \hat{x}_S^{\varepsilon} = \partial_t \hat{x}_S^{\varepsilon} (X,t) = v_S^{\varepsilon} ( \hat{x}_S^{\varepsilon}(X,t) , t),\\
\hat{x}_S^{\varepsilon} \vert_{t =0} = \Phi_m (X) (= \xi_S).  
\end{cases}
\end{equation*}
Now we use \eqref{eq32}. Then for each $\xi_S \in \Gamma (0)$ and $0 < t < T$,
\begin{equation*}
\begin{cases}
\partial_t \hat{x}_S^{\varepsilon} = \partial_t \hat{x}_S^{\varepsilon} (X , t) = v_S^{\varepsilon} (\hat{x}_S^{\varepsilon}(X,t) , t),\\
\hat{x}_S^{\varepsilon} \vert_{t=0} = \Phi ( X) (= \xi_S ).
\end{cases}
\end{equation*}
We also call $\hat{x}^\varepsilon_S$ a \emph{flow map} in $\overline{\Gamma^\varepsilon_T}$. For the flow map $\hat{x}^\varepsilon_S$ and $\alpha, \beta = 1,2$,
\begin{equation*}
\mathfrak{g}^\varepsilon_\alpha = \mathfrak{g}^\varepsilon_\alpha (X , t) := \frac{\partial \hat{x}^\varepsilon_S}{\partial X_\alpha} = { }^t \left( \frac{\partial \hat{x}^{S,\varepsilon}_1}{\partial X_\alpha},\frac{\partial \hat{x}^{S,\varepsilon}_2}{\partial X_\alpha} , \frac{\partial \hat{x}^{S,\varepsilon}_3}{\partial X_\alpha} \right),
\end{equation*}
\begin{equation*}
\mathfrak{g}^{\varepsilon}_{\alpha \beta} = \mathfrak{g}^\varepsilon_{\alpha \beta} (X,t) := \mathfrak{g}^\varepsilon_\alpha \cdot \mathfrak{g}^\varepsilon_\beta = \frac{\partial \hat{x}^\varepsilon_S}{\partial X_\alpha} \cdot \frac{\partial \hat{x}^\varepsilon_S}{\partial X_\beta},
\end{equation*}
and
\begin{equation*}
G^\varepsilon_S = G^\varepsilon_S (X,t) := \mathfrak{g}^\varepsilon_{11} \mathfrak{g}^\varepsilon_{22} - \mathfrak{g}^\varepsilon_{12}\mathfrak{g}^\varepsilon_{21}.
\end{equation*}
We call $\sqrt{G_S}$ the \emph{Riemannian metric determined by the flow map} $\hat{x}_S$, and $\sqrt{G^\varepsilon_S}$ the \emph{Riemannian metric determined by the flow map} $\hat{x}^\varepsilon_S$. In fact, the following lemma holds.
\begin{lemma}\label{lem32}
Let $\mathfrak{M}_S \subset \Gamma (0)$, $0<t <T$, and $-1 < \varepsilon <1$. Set
\begin{align*}
\Lambda_S (t) = \{ x \in \mathbb{R}^3; { \ }x = \tilde{x}_S (\xi_S , t) , { \ }\xi_S \in \mathfrak{M}_S \},\\
\Lambda_S^\varepsilon (t) = \{ x \in \mathbb{R}^3; { \ }x = \tilde{x}^\varepsilon_S (\xi_S , t) , { \ }\xi_S \in \mathfrak{M}_S \}.
\end{align*}
Then for each $f \in C (\mathbb{R}^3 \times \mathbb{R})$,
\begin{equation}\label{eq33}
\int_{\Gamma (t)} f (x,t) { \ }d \mathcal{H}^2_x = \sum_{m=1}^N \int_{U_m} \hat{\Psi}_m \hat{f} \sqrt{G_S (X,t)} { \ } d X,
\end{equation}
\begin{equation}\label{eq34}
\int_{\Gamma^\varepsilon (t)} f (x,t) { \ }d \mathcal{H}^2_x = \sum_{m=1}^N \int_{U_m} \hat{\Psi}_m \hat{f}_\varepsilon \sqrt{G^\varepsilon_S (X,t)} { \ } d X,
\end{equation}
\begin{equation}\label{eq35}
\int_{\Lambda_S (t)} f (x,t) { \ }d \mathcal{H}^2_x = \sum_{m=1}^N \int_{U_m} 1_{\mathfrak{M}_S \cap \Gamma_m}(\Phi_m(X))  \hat{\Psi}_m \hat{f} \sqrt{G_S (X,t)} { \ } d X,
\end{equation}
\begin{equation}\label{eq36}
\int_{\Lambda_S^\varepsilon (t)} f (x,t) { \ }d \mathcal{H}^2_x = \sum_{m=1}^N \int_{U_m} 1_{\mathfrak{M}_S \cap \Gamma_m}(\Phi_m(X)) \hat{\Psi}_m \hat{f}_\varepsilon \sqrt{G^\varepsilon_S (X,t)} { \ } d X.
\end{equation}
Here
\begin{equation}\label{eq37}
\begin{cases}
\hat{f} & = \hat{f} (X,t) := f (\tilde{x}_S(\Phi_m(X),t) ,t ),\\
\hat{f}_\varepsilon &= \hat{f}_\varepsilon (X,t) := f (\tilde{x}^\varepsilon_S(\Phi_m(X),t) ,t ),\\
\hat{\Psi}_m & =\hat{\Psi}_m (X) := \Psi_m (\Phi_m(X)).
\end{cases}
\end{equation}
\end{lemma}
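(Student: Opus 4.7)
The plan is to treat \eqref{eq33}--\eqref{eq36} uniformly as standard surface-integration identities obtained by transporting the partition of unity \eqref{eq31} from $\Gamma(0)$ to the moving surface via the flow map, and then expressing each localized piece in the chart $X \mapsto \hat{x}_S(X,t)$. Identities \eqref{eq34} and \eqref{eq36} are proved verbatim with $\tilde{x}_S$ replaced by $\tilde{x}_S^\varepsilon$ and $\hat{x}_S$ replaced by $\hat{x}_S^\varepsilon$, so I focus on \eqref{eq33} and \eqref{eq35}.

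First I would push the partition $\{\Psi_m\}$ forward along $\tilde{x}_S(\cdot,t)$. By Property $(\mathrm{III})$ in Definition \ref{def23}(iii), the map $\tilde{x}_S(\cdot,t):\Gamma(0)\to\Gamma(t)$ is a bijection, so setting $\Psi_m^t(x):=\Psi_m\bigl(\tilde{x}_S^{-1}(\cdot,t)(x)\bigr)$ and $\Gamma_m(t):=\tilde{x}_S(\Gamma_m,t)$ yields a smooth partition of unity on $\Gamma(t)$ subordinate to the covering $\{\Gamma_m(t)\}$, and consequently
\begin{equation*}
\int_{\Gamma(t)} f(x,t)\,d\mathcal{H}^2_x
 = \sum_{m=1}^N \int_{\Gamma_m(t)} \Psi_m^t(x)\,f(x,t)\,d\mathcal{H}^2_x.
\end{equation*}
Next, for each $m$, the map $\hat{x}_S(\cdot,t)=\tilde{x}_S(\Phi_m(\cdot),t):U_m\to\Gamma_m(t)$ is a smooth parametrization (it is the composition of the chart $\Phi_m$ with the diffeomorphism $\tilde{x}_S(\cdot,t)$). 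The tangent vectors are $\mathfrak{g}_\alpha=\partial\hat{x}_S/\partial X_\alpha$, the first fundamental form has entries $\mathfrak{g}_{\alpha\beta}=\mathfrak{g}_\alpha\cdot\mathfrak{g}_\beta$, and the induced area element is
\begin{equation*}
d\mathcal{H}^2_x = \sqrt{\det(\mathfrak{g}_{\alpha\beta})_{2\times 2}}\,dX = \sqrt{G_S(X,t)}\,dX.
\end{equation*}
Applying the change-of-variables formula to each summand, using $\Psi_m^t(\hat{x}_S(X,t))=\Psi_m(\Phi_m(X))=\hat{\Psi}_m(X)$ and $f(\hat{x}_S(X,t),t)=\hat{f}(X,t)$, gives \eqref{eq33}.

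For \eqref{eq35}, the bijection $\tilde{x}_S(\cdot,t)$ carries $\mathfrak{M}_S\subset\Gamma(0)$ onto $\Lambda_S(t)\subset\Gamma(t)$, so $\mathbf{1}_{\Lambda_S(t)}(x)=\mathbf{1}_{\mathfrak{M}_S}(\tilde{x}_S^{-1}(\cdot,t)(x))$. Inserting this characteristic function into the computation above and observing that at $\xi_S=\Phi_m(X)$ one has $\mathbf{1}_{\mathfrak{M}_S}(\Phi_m(X))\hat{\Psi}_m(X)=\mathbf{1}_{\mathfrak{M}_S\cap\Gamma_m}(\Phi_m(X))\hat{\Psi}_m(X)$ (because $\hat{\Psi}_m$ is supported where $\Phi_m(X)\in\Gamma_m$) yields \eqref{eq35}. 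The formulas \eqref{eq34} and \eqref{eq36} are proved identically using that $\tilde{x}_S^\varepsilon(\cdot,t):\Gamma(0)\to\Gamma^\varepsilon(t)$ is bijective by Definition \ref{def25}(iii).

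The only genuinely delicate point, and the one I would verify carefully, is that the overlap between the global chart domains $\Gamma_m$ and an arbitrary subset $\mathfrak{M}_S$ produces the correct characteristic function $\mathbf{1}_{\mathfrak{M}_S\cap\Gamma_m}$ in \eqref{eq35}--\eqref{eq36}; everything else is a standard use of a partition of unity together with the area-element formula $\sqrt{G_S}$, which is immediate from the Gram-determinant definition of $G_S$.
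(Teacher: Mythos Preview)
Your proof is correct and follows essentially the same route as the paper: push the partition of unity $\{\Psi_m\}$ forward along the bijection $\tilde{x}_S(\cdot,t)$ to obtain a partition of unity on $\Gamma(t)$, then pull each localized integral back to $U_m$ via the parametrization $X\mapsto\hat{x}_S(X,t)$ with area element $\sqrt{G_S}\,dX$; for \eqref{eq35} both you and the paper insert the characteristic function of $\Lambda_S(t)$ and track it back to $\mathbf{1}_{\mathfrak{M}_S\cap\Gamma_m}(\Phi_m(X))$. One small remark: your justification that $\mathbf{1}_{\mathfrak{M}_S}(\Phi_m(X))\hat{\Psi}_m=\mathbf{1}_{\mathfrak{M}_S\cap\Gamma_m}(\Phi_m(X))\hat{\Psi}_m$ via the support of $\hat{\Psi}_m$ is unnecessary, since $\Phi_m(X)\in\Gamma_m$ automatically for $X\in U_m$; the identity of indicators holds pointwise on $U_m$ without any reference to $\hat{\Psi}_m$.
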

\noindent We prove Lemma \ref{lem32} in Appendix. From now, we assume that the Riemannian metrics $\sqrt{G_A},\sqrt{G^\varepsilon_A}, \sqrt{G_B},\sqrt{G_B^\varepsilon}, \sqrt{G_S}, \sqrt{G_S^\varepsilon}$ are positive functions throughout this paper. From Lemma \ref{lem32}, \cite{KLG17}, \cite{K18}, \cite{K22}, and \cite{KS17}, we have the following two lemmas.
\begin{lemma}[Properties of the Riemannian metrics determined by flow maps]\label{lem33}{ \ }\\
Let $f = f (x,t) \in C^\infty (\mathbb{R}^4)$. Then
\begin{equation*}
\int_{\Omega_A (t)} f ({\rm{div}} v_A ) { \ }d x = \int_{\Omega_A (0) } f ( \tilde{x}_A(\xi_A , t) , t) \left( \frac{d}{d t} \sqrt{G_A} \right) { \ }d \xi_A,
\end{equation*}
\begin{equation*}
\int_{\Omega^{\varepsilon}_A (t)} f ({\rm{div}} v^{\varepsilon}_A ) { \ }d x = \int_{\Omega_A (0) } f ( \tilde{x}^\varepsilon_A(\xi_A , t) , t)  \left( \frac{d}{d t} \sqrt{G^\varepsilon_A} \right) { \ }d \xi_A,
\end{equation*}
\begin{equation*}
\int_{\Omega_A (t)} f ({\rm{div}} z_A ) { \ }d x = \int_{\Omega_A (0) } f ( \tilde{x}_A(\xi_A , t) , t) \left( \frac{d}{d \varepsilon} \bigg \vert_{\varepsilon = 0} \sqrt{G^{\varepsilon}_A} \right){ \ }d \xi_A,
\end{equation*}
\begin{equation*}
\int_{\Omega_B (t)} f ({\rm{div}} v_B ) { \ }d x = \int_{\Omega_B (0) } f ( \tilde{x}_B(\xi_B , t) , t)  \left( \frac{d}{d t} \sqrt{G_B} \right) { \ }d \xi_B,
\end{equation*}
\begin{equation*}
\int_{\Omega^{\varepsilon}_B (t)} f ({\rm{div}} v^{\varepsilon}_B ) { \ }d x = \int_{\Omega_B (0) } f ( \tilde{x}^\varepsilon_B (\xi_B , t) , t)  \left( \frac{d}{d t} \sqrt{G^\varepsilon_B} \right) { \ }d \xi_B,
\end{equation*}
\begin{equation*}
\int_{\Omega_B (t)} f ({\rm{div}} z_B ) { \ }d x = \int_{\Omega_B (0) } f ( \tilde{x}_B (\xi_B , t) , t)  \left( \frac{d}{d \varepsilon} \bigg \vert_{\varepsilon = 0} \sqrt{G^{\varepsilon}_B} \right){ \ }d \xi_B,
\end{equation*}
\begin{equation*}
\int_{\Gamma (t)} f ({\rm{div}}_\Gamma v_S ) { \ }d \mathcal{H}^2_x =\sum_{m=1}^N \int_{U_m } \hat{\Psi}_m \hat{f} \left( \frac{d}{dt} \sqrt{G_S} \right) { \ }d X, 
\end{equation*}
\begin{equation*}
\int_{\Gamma^{\varepsilon} (t)} f ({\rm{div}}_{\Gamma^\varepsilon} v^{\varepsilon}_S ) { \ }d \mathcal{H}^2_x = \sum_{m=1}^N \int_{U_m } \hat{\Psi}_m \hat{f}_\varepsilon \left( \frac{d}{dt} \sqrt{G^{\varepsilon}_S} \right) { \ }d X,
\end{equation*}
\begin{equation*}
\int_{\Gamma (t)} f ({\rm{div}}_{\Gamma} z_S ) { \ }d \mathcal{H}^2_x = \sum_{m=1}^N \int_{U_m } \hat{\Psi}_m \hat{f} \left( \frac{d}{d \varepsilon} \bigg \vert_{\varepsilon = 0} \sqrt{G^{\varepsilon}_S} \right) { \ }d X.
\end{equation*}
Here $(z_A, z_B, z_S)$ is a variation of $(\tilde{x}^\varepsilon_A, \tilde{x}^\varepsilon_B, \tilde{x}^\varepsilon_S)$ and $(\hat{f}, \hat{f}_\varepsilon, \hat{\Psi}_m)$ is defined by \eqref{eq37}.
\end{lemma}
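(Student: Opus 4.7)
The overall plan is to reduce all nine identities to Jacobi's determinantal formula
\begin{equation*}
\tfrac{d}{d\tau}\det M(\tau) = \det(M(\tau))\,{\rm{tr}}\bigl(M(\tau)^{-1}\dot M(\tau)\bigr),
\end{equation*}
combined with the change-of-variables formulas $dx = \sqrt{G_\sharp}\,d\xi_\sharp$ on the volume side ($\sharp\in\{A,B\}$) and the chart formulas \eqref{eq33}--\eqref{eq36} on the surface side. In every case one pulls the left-hand-side integral back to the reference region and recognises the $\tau$-derivative of the Jacobian (with $\tau\in\{t,\varepsilon\}$) as $\sqrt{G_\sharp}$ or $\sqrt{G_S}$ times the appropriate (surface) divergence of $v_\sharp$, $v_\sharp^\varepsilon$, or $z_\sharp$.

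For the six volume identities, the positivity assumption on $\sqrt{G_A^\varepsilon}$ together with
\begin{equation*}
G_A^\varepsilon = \det\bigl({}^{t}(\nabla_{\xi_A}\tilde x_A^\varepsilon)(\nabla_{\xi_A}\tilde x_A^\varepsilon)\bigr) = (\det\nabla_{\xi_A}\tilde x_A^\varepsilon)^{2}
\end{equation*}
gives $\sqrt{G_A^\varepsilon} = \det\nabla_{\xi_A}\tilde x_A^\varepsilon$. Since $\partial_t\tilde x_A = v_A(\tilde x_A,t)$, the chain rule yields $\partial_t(\nabla_{\xi_A}\tilde x_A) = (\nabla_x v_A)(\tilde x_A,t)\cdot\nabla_{\xi_A}\tilde x_A$, and Jacobi's formula produces
\begin{equation*}
\tfrac{d}{dt}\sqrt{G_A} = \sqrt{G_A}\,({\rm{div}}\, v_A)(\tilde x_A(\xi_A,t),t).
\end{equation*}
Pulling back $\int_{\Omega_A(t)} f\,{\rm{div}}\, v_A\,dx$ via $x=\tilde x_A(\xi_A,t)$ yields the first identity, and the second is identical with $\tilde x_A$ replaced by $\tilde x_A^\varepsilon$. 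For the $\varepsilon$-identity, the same computation with $\partial_\varepsilon|_{\varepsilon=0}\tilde x_A^\varepsilon = z_A(\tilde x_A,t)$ in place of $\partial_t\tilde x_A$ gives the analogous formula for $\partial_\varepsilon|_{\varepsilon=0}\sqrt{G_A^\varepsilon}$ in terms of ${\rm{div}}\, z_A$ (the constant factor on the left-hand side being reconciled by tracking $\partial_\varepsilon|_{\varepsilon=0}G_A^\varepsilon = 2G_A\,{\rm{div}}\, z_A$). The three $\Omega_B$-identities follow verbatim with $A\mapsto B$.

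For the three surface identities I would first apply Lemma \ref{lem32} to localise $\int_{\Gamma(t)}$ and $\int_{\Gamma^\varepsilon(t)}$ as sums of parametric integrals over the $U_m$. The content then reduces to the local formula
\begin{equation*}
\tfrac{d}{dt}\sqrt{G_S(X,t)} = \sqrt{G_S(X,t)}\,({\rm{div}}_\Gamma v_S)(\hat x_S(X,t),t).
\end{equation*}
Since $\partial_t\mathfrak g_\alpha = \partial_\alpha(v_S(\hat x_S,t)) = (\nabla_x v_S)\mathfrak g_\alpha$, the product rule gives $\partial_t\mathfrak g_{\alpha\beta} = [(\nabla_x v_S)\mathfrak g_\alpha]\cdot\mathfrak g_\beta + \mathfrak g_\alpha\cdot[(\nabla_x v_S)\mathfrak g_\beta]$, and Jacobi's formula applied to the $2\times 2$ matrix $(\mathfrak g_{\alpha\beta})$ produces
\begin{equation*}
\partial_t G_S = 2 G_S\sum_{\alpha,\beta=1}^{2}\mathfrak g^{\alpha\beta}\,\mathfrak g_\alpha\cdot(\nabla_x v_S)\mathfrak g_\beta,
\end{equation*}
where $(\mathfrak g^{\alpha\beta})$ is the inverse of $(\mathfrak g_{\alpha\beta})$. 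Identifying $\sum_{\alpha,\beta}\mathfrak g^{\alpha\beta}\mathfrak g_\alpha\otimes\mathfrak g_\beta$ with the tangential projection $P_\Gamma = I - n_\Gamma\otimes n_\Gamma$ at $\hat x_S(X,t)$, and using ${\rm{div}}_\Gamma v_S = {\rm{tr}}(P_\Gamma\nabla_x v_S)$, converts the right-hand side into $2G_S\,{\rm{div}}_\Gamma v_S$, whence $\partial_t\sqrt{G_S} = \sqrt{G_S}\,{\rm{div}}_\Gamma v_S$. The remaining two surface identities follow by the same argument with $(\partial_t,v_S)$ replaced by $(\partial_t,v_S^\varepsilon)$ and by $(\partial_\varepsilon|_{\varepsilon=0},z_S)$, respectively.

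The step I expect to be the main obstacle is the algebraic identification $\sum_{\alpha,\beta}\mathfrak g^{\alpha\beta}\mathfrak g_\alpha\otimes\mathfrak g_\beta = P_\Gamma$: one needs $\{\mathfrak g_1,\mathfrak g_2\}$ to span the tangent plane to $\Gamma(t)$ at $\hat x_S(X,t)$, which follows from the bijectivity of $\tilde x_S(\cdot,t)$ (Property (III) of Definition \ref{def23}) together with the positivity assumption on $\sqrt{G_S}$, and then the duality relation $\mathfrak g^{\alpha\beta}\mathfrak g_\beta\cdot\mathfrak g_\gamma = \delta^\alpha_\gamma$ gives the projector formula. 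With this intrinsic identity in hand, the surface divergence equals $\mathfrak g^{\alpha\beta}\mathfrak g_\alpha\cdot(\nabla_x v_S)\mathfrak g_\beta$, closing the argument uniformly for the three surface cases.
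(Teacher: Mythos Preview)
The paper does not give its own proof of Lemma~\ref{lem33}; it simply states that the lemma follows ``from Lemma~\ref{lem32}, \cite{KLG17}, \cite{K18}, \cite{K22}, and \cite{KS17}''. Your direct computation via Jacobi's formula is exactly the standard argument carried out in those references, so the approach is correct and there is nothing to compare.

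One point, however, deserves cleaning up. Your parenthetical remark that the factor $2$ on the left of the $\varepsilon$-identities is ``reconciled by tracking $\partial_\varepsilon|_{\varepsilon=0}G_A^\varepsilon = 2G_A\,{\rm div}\,z_A$'' does not actually reconcile anything: the statement involves $\partial_\varepsilon|_{\varepsilon=0}\sqrt{G_A^\varepsilon}$, not $\partial_\varepsilon|_{\varepsilon=0}G_A^\varepsilon$, and your own computation gives
\[
\partial_\varepsilon\big|_{\varepsilon=0}\sqrt{G_A^\varepsilon}
=\sqrt{G_A}\,({\rm div}\,z_A)(\tilde x_A,t),
\]
with no factor $2$. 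Pulling back then yields $\int_{\Omega_A(t)} f\,{\rm div}\,z_A\,dx$ on the left, not $\int_{\Omega_A(t)} 2f\,{\rm div}\,z_A\,dx$. The same remark applies to the $\Omega_B$- and $\Gamma$-cases. In other words, the $\varepsilon$-identities as printed carry a spurious factor $2$ (note that the parallel $t$-identities in the same lemma have no such factor, and the paper's own use of the lemma in the proof of Proposition~\ref{prop42} inserts a compensating $1/2$). You should state this explicitly rather than paper over it: your Jacobi-formula argument is correct, and the discrepancy is a typographical slip in the statement, not a gap in your reasoning.
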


\begin{lemma}[Representation formulas for the energy densities]\label{lem34}{ \ }\\
Let $\rho_0^A \in C^\infty ( \overline{\Omega_A (0)})$, $\rho_0^B \in C^\infty ( \overline{\Omega_B (0)})$, $\rho_0^S \in C^\infty ( \Gamma (0))$. Let $p_A, p_B , p_S \in C^1(\mathbb{R})$. Then two assertions hold:\\ 
\noindent $(\mathrm{i})$ Assume that $(\rho_A , \rho_B , \rho_S)$ satisfies \eqref{eq21} and \eqref{eq23}. Then
\begin{equation*}
\int_{\Omega_A (t)} \frac{1}{2}\rho_A  \vert  v_A \vert^2 { \ } d x = \int_{\Omega_A (0)} \frac{1}{2}\rho^A_0(\xi_A)  \vert  \partial_t \tilde{x}_A( \xi_A , t) \vert^2 { \ }d \xi_A,
\end{equation*}
\begin{equation*}
\int_{\Omega_B (t)} \frac{1}{2}\rho_B  \vert  v_B \vert^2 { \ } d x = \int_{\Omega_B (0)} \frac{1}{2}\rho^B_0(\xi_B)  \vert  \partial_t \tilde{x}_B( \xi_B , t) \vert^2 { \ }d \xi_B,
\end{equation*}
\begin{equation*}
\int_{\Gamma (t)} \frac{1}{2}\rho_S  \vert  v_S \vert^2 { \ } d \mathcal{H}^2_x = \sum_{m=1}^N \int_{U_m} \hat{\Psi}_m \frac{1}{2} \rho^S_0(\Phi_m(X))  \vert  \partial_t \hat{x}_S( X , t) \vert^2 { \ }d X,
\end{equation*}
\begin{equation*}
\int_{\Omega_A (t)} p_A (\rho_A) { \ } dx = \int_{\Omega_A (0)} p_A \left( \frac{\rho_0^A (\xi_A)}{\sqrt{G_A}} \right) \sqrt{G_A} { \ }d \xi_A,
\end{equation*}
\begin{equation*}
\int_{\Omega_B (t)} p_B (\rho_B) { \ } dx  = \int_{\Omega_B (0)} p_B \left( \frac{\rho_0^B (\xi_B)}{\sqrt{G_B}} \right) \sqrt{G_B} { \ }d \xi_B,
\end{equation*}
\begin{equation*}
\int_{\Gamma (t)} p_S (\rho_S) { \ } d \mathcal{H}^2_x = \sum_{m=1}^N \int_{U_m} \hat{\Psi}_m p_S \left( \frac{\rho_0^S(\Phi_m(X))}{\sqrt{G_S}} \right) \sqrt{G_S} { \ }d X.
\end{equation*}
Here $\hat{\Psi}_m$ is defined by \eqref{eq37}.\\
\noindent $(\mathrm{ii})$ Let $-1 < \varepsilon <1$. Assume that $(\rho^{\varepsilon}_A , \rho^{\varepsilon}_B , \rho^{\varepsilon}_S)$ satisfies \eqref{eq22} and \eqref{eq24}. Then
\begin{equation*}
\int_{\Omega^{\varepsilon}_A (t)} \frac{1}{2}\rho^{\varepsilon}_A  \vert  v^{\varepsilon}_A \vert^2 { \ } d x = \int_{\Omega^{\varepsilon}_A (0)} \frac{1}{2}\rho^A_0(\xi_A)  \vert  \partial_t \tilde{x}^{\varepsilon}_A( \xi_A , t) \vert^2 { \ }d \xi_A,
\end{equation*}
\begin{equation*}
\int_{\Omega^{\varepsilon}_B (t)} \frac{1}{2}\rho^{\varepsilon}_B  \vert  v^{\varepsilon}_B \vert^2 { \ } d x = \int_{\Omega_B (0)} \frac{1}{2}\rho^B_0(\xi_B)  \vert  \partial_t \tilde{x}^{\varepsilon}_B( \xi_B , t) \vert^2 { \ }d \xi_B,
\end{equation*}
\begin{equation*}
\int_{\Gamma^{\varepsilon} (t)} \frac{1}{2}\rho^{\varepsilon}_S  \vert  v^{\varepsilon}_S \vert^2 { \ } d \mathcal{H}^2_x = \sum_{m=1}^N \int_{U_m} \hat{\Psi}_m \frac{1}{2} \rho^S_0(\Phi_m(X))  \vert  \partial_t \hat{x}^{\varepsilon}_S( X , t) \vert^2 { \ }d X,
\end{equation*}
\begin{equation*}
\int_{\Omega^{\varepsilon}_A (t)} p_A (\rho^{\varepsilon}_A) { \ } dx = \int_{\Omega_A (0)} p_A \left( \frac{\rho_0^A (\xi_A)}{\sqrt{G^{\varepsilon}_A}} \right) \sqrt{G^{\varepsilon}_A} { \ }d \xi_A,
\end{equation*}
\begin{equation*}
\int_{\Omega^{\varepsilon}_B (t)} p_B (\rho^{\varepsilon}_B) { \ } dx = \int_{\Omega_B (0)} p_B \left( \frac{\rho_0^B (\xi_B)}{\sqrt{G^{\varepsilon}_B}} \right) \sqrt{G^{\varepsilon}_B} { \ }d \xi_B,
\end{equation*}
\begin{equation*}
\int_{\Gamma^{\varepsilon} (t)} p_S (\rho^{\varepsilon}_S) { \ } d \mathcal{H}^2_x = \sum_{m=1}^N \int_{U_m} \hat{\Psi}_m p_S \left( \frac{\rho_0^S(\Phi_m(X))}{\sqrt{G^{\varepsilon}_S}} \right) \sqrt{G^{\varepsilon}_S} { \ }d X.
\end{equation*}
Here $\hat{\Psi}_m$ is defined by \eqref{eq37}.
\end{lemma}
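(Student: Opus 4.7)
The plan is to prove both parts in parallel: the variation case $(\mathrm{ii})$ is structurally identical to $(\mathrm{i})$ with the $\varepsilon$-decorated flow maps in place of the unvarnished ones, so once $(\mathrm{i})$ is established, $(\mathrm{ii})$ follows by the same argument applied to $(\tilde{x}_A^\varepsilon, \tilde{x}_B^\varepsilon, \tilde{x}_S^\varepsilon)$, $(v_A^\varepsilon, v_B^\varepsilon, v_S^\varepsilon)$, $(\rho_A^\varepsilon, \rho_B^\varepsilon, \rho_S^\varepsilon)$, together with Lemma \ref{lem32} for the surface integrals. I therefore focus on $(\mathrm{i})$.

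The central claim I would isolate and prove first is the following pointwise identity along flow lines: for $\sharp = A, B$ and $\xi_\sharp \in \overline{\Omega_\sharp(0)}$,
\begin{equation*}
\rho_\sharp(\tilde{x}_\sharp(\xi_\sharp,t), t)\,\sqrt{G_\sharp(\xi_\sharp,t)} = \rho_0^\sharp(\xi_\sharp), \qquad 0 \leq t < T,
\end{equation*}
and the surface analogue
\begin{equation*}
\rho_S(\hat{x}_S(X,t),t)\,\sqrt{G_S(X,t)} = \rho_0^S(\Phi(X)).
\end{equation*}
For these, I differentiate the left-hand side in $t$. Using the chain rule and the defining relation $\partial_t \tilde{x}_\sharp = v_\sharp(\tilde{x}_\sharp,t)$ I get $\frac{d}{dt}[\rho_\sharp(\tilde{x}_\sharp(\xi_\sharp,t),t)] = (D_t^\sharp \rho_\sharp)(\tilde{x}_\sharp,t)$, while Lemma \ref{lem33} applied in its pointwise form yields $\frac{d}{dt}\sqrt{G_\sharp} = (\mathrm{div}\,v_\sharp)(\tilde{x}_\sharp,t)\sqrt{G_\sharp}$; combining the two and invoking the continuity equation \eqref{eq21} gives $\frac{d}{dt}[\rho_\sharp \sqrt{G_\sharp}] = 0$. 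At $t=0$, $\tilde{x}_\sharp(\xi_\sharp,0) = \xi_\sharp$ and the Jacobian matrix $\nabla_{\xi_\sharp}\tilde{x}_\sharp$ is the identity, so $\sqrt{G_\sharp(\xi_\sharp,0)} = 1$ and the initial condition \eqref{eq23} closes the identity. The surface case is handled identically in the local coordinates $X \in U_m$ supplied by \eqref{eq31}–\eqref{eq32}, using the surface continuity equation and $\sqrt{G_S(X,0)} = |\partial_{X_1}\Phi \times \partial_{X_2}\Phi|$, absorbed into the $d\mathcal{H}_x^2$ convention of Lemma \ref{lem32}.

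With this conservation identity in hand, every formula in $(\mathrm{i})$ is a direct change-of-variables computation. For the kinetic energies I apply the change-of-variables formula of the Remark preceding Lemma \ref{lem33}, rewrite $v_\sharp(\tilde{x}_\sharp(\xi_\sharp,t),t) = \partial_t \tilde{x}_\sharp(\xi_\sharp,t)$ from Definition \ref{def23}, and use the conservation identity to cancel $\rho_\sharp \sqrt{G_\sharp}$ against $\rho_0^\sharp$, yielding $\int_{\Omega_\sharp(0)} \tfrac12 \rho_0^\sharp |\partial_t \tilde{x}_\sharp|^2 d\xi_\sharp$. The surface kinetic energy is handled the same way via \eqref{eq33} of Lemma \ref{lem32}, with $v_S(\hat{x}_S(X,t),t) = \partial_t \hat{x}_S(X,t)$. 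For the pressure terms I change variables as before and substitute $\rho_\sharp(\tilde{x}_\sharp,t) = \rho_0^\sharp/\sqrt{G_\sharp}$ directly into $p_\sharp(\cdot)$, leaving a factor $\sqrt{G_\sharp}$ in the measure, which gives exactly the stated formulas; again the surface case uses \eqref{eq33}.

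The main obstacle I anticipate is purely bookkeeping rather than conceptual: the surface identities require care because $\sqrt{G_S}$ is defined chart-by-chart through \eqref{eq32}, so the conservation identity $\rho_S \sqrt{G_S} = \rho_0^S \circ \Phi$ must be verified locally in each $U_m$ and then reassembled via the partition of unity $\{\Psi_m\}$. Concretely, one has to check that the surface material derivative $D_t^S$ acting along $\hat{x}_S(X,t)$ agrees with $\partial_t$ applied to the pulled-back density and that the surface divergence $\mathrm{div}_\Gamma v_S$ pulls back to $(\mathrm{d}/\mathrm{d}t)\sqrt{G_S}/\sqrt{G_S}$; both are consequences of Lemma \ref{lem33}, so the verification reduces to matching the local pieces, which is routine once the framework is in place. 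Part $(\mathrm{ii})$ then follows by replacing every unvarnished object with its $\varepsilon$-counterpart and appealing to \eqref{eq22}, \eqref{eq24}, and the $\varepsilon$-versions of Lemmas \ref{lem32} and \ref{lem33}.
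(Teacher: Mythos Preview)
The paper does not give a proof of this lemma in the text: it states that Lemmas \ref{lem33} and \ref{lem34} follow ``from Lemma \ref{lem32}, \cite{KLG17}, \cite{K18}, \cite{K22}, and \cite{KS17}'' and leaves it at that. Your proposal supplies exactly the argument those references contain: derive the Lagrangian mass-conservation identity $\rho_\sharp(\tilde{x}_\sharp,t)\sqrt{G_\sharp}=\rho_0^\sharp$ from the continuity equation together with $\tfrac{d}{dt}\sqrt{G_\sharp}=(\mathrm{div}\,v_\sharp)\sqrt{G_\sharp}$, then change variables via the flow map; the surface case is the same computation in local charts using Lemma \ref{lem32}. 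This is correct and is precisely the intended route.

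One small remark on the surface bookkeeping you flag: the conservation identity in a chart actually reads $\rho_S(\hat{x}_S(X,t),t)\sqrt{G_S(X,t)}=\rho_0^S(\Phi_m(X))\sqrt{G_S(X,0)}$, so a factor $\sqrt{G_S(X,0)}$ (the area element of the chart $\Phi_m$ on $\Gamma(0)$) should in principle appear on the right-hand side of the surface formulas. The paper's statement suppresses it, and since this factor is independent of both $t$ and $\varepsilon$ it is invisible in all the subsequent variational computations in Section \ref{sect4}; your comment that it is ``absorbed into the $d\mathcal{H}^2_x$ convention'' is the right way to read the paper's formulas.
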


\section{Variations of action integral with respect to flow maps}\label{sect4}

Let us prove one of the main results. Let $\rho_0^A \in C^\infty ( \overline{\Omega_A(0)})$, $\rho_0^B \in C^\infty ( \overline{\Omega_B(0)})$, and $\rho_0^S \in C^\infty ( \Gamma (0))$. Assume that $(\rho_A , \rho_B , \rho_S)$ satisfy \eqref{eq21} and \eqref{eq23}. For each $- 1 < \varepsilon < 1$, assume that $( \rho^\varepsilon_A , \rho^\varepsilon_B , \rho^\varepsilon_S)$ satisfy \eqref{eq22} and \eqref{eq24}. Fix $p_A,p_B,p_S \in C^1 (\mathbb{R})$. Assume \eqref{eq25} and \eqref{eq26} hold.

We first study some properties of the variation $(z_A, z_B , z_S)$ of $(\tilde{x}^\varepsilon_A, \tilde{x}_B^\varepsilon , \tilde{x}_S^\varepsilon)$ (see Definition \ref{def26}).  Let $\xi_S \in \Gamma (0)$. Assume that $\xi_S = \Phi (X)$, where $\Phi$ is defined by \eqref{eq32}. Set
\begin{equation*}
\hat{y}_S = \hat{y}_S (X , t) = { }^t (\hat{y}^S_1, \hat{y}^S_2, \hat{y}^S_3) = \tilde{y}_S ( \Phi (X) , t) (= \tilde{y}_S (\xi_S , t  )).
\end{equation*}

\begin{lemma}[Properties of variations $(y_A,y_B,y_S, \hat{y}_S)$]\label{lem41} For all $\xi_A \in \overline{\Omega_A (0)}$, $\xi_B \in \overline{\Omega_B(0)}$, $\xi_S \in \Gamma (0)$,
\begin{align}
\tilde{y}_A (\xi_A, 0) & = \tilde{y}_A (\xi_A,T-) = { }^t (0, 0,0),\label{eq41}\\
\tilde{y}_B (\xi_B, 0) & = \tilde{y}_B (\xi_B , T-) = { }^t (0, 0,0),\label{eq42}\\
\tilde{y}_S (\xi_S, 0) & = \tilde{y}_X (\xi_S ,T-) = { }^t (0, 0,0),\label{eq43}\\
\hat{y}_S (X,0) & = \hat{y}_S (X,T-) = { }^t (0, 0,0),\label{eq44}
\end{align}
where $\xi_S = \Phi (X)$.
\end{lemma}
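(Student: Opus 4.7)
The lemma is essentially a consequence of the two defining identities in Definition~\ref{def26} together with the initial condition built into the flow maps (Definitions~\ref{def23} and \ref{def25}) and the terminal assumption \eqref{eq26}. The plan is to verify each of the four identities \eqref{eq41}--\eqref{eq44} separately; each follows by a one-line substitution, so the proof will be little more than bookkeeping.

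For the initial time $t = 0$: by Property (II) of Definitions~\ref{def23} and \ref{def25}, both $\tilde{x}_A(\xi_A,0) = \xi_A$ and $\tilde{x}_A^\varepsilon(\xi_A,0) = \xi_A$, independently of $\varepsilon$. Differentiating $\varepsilon \mapsto \tilde{x}_A^\varepsilon(\xi_A,0) \equiv \xi_A$ at $\varepsilon = 0$ gives $\tilde{y}_A(\xi_A,0) = {}^t(0,0,0)$ from the first identity in Definition~\ref{def26}. The same argument applied to $\tilde{x}_B^\varepsilon$ and $\tilde{x}_S^\varepsilon$ yields $\tilde{y}_B(\xi_B,0) = {}^t(0,0,0)$ and $\tilde{y}_S(\xi_S,0) = {}^t(0,0,0)$.

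For the terminal time $t = T-$: use the second identity of Definition~\ref{def26}, $z_A(\tilde{x}_A(\xi_A,t),t) = \tilde{y}_A(\xi_A,t)$. Evaluating at $t = T-$ and combining with assumption \eqref{eq26} immediately gives $\tilde{y}_A(\xi_A,T-) = z_A(\tilde{x}_A(\xi_A,T-),T-) = {}^t(0,0,0)$, and the analogous substitutions treat $\tilde{y}_B$ and $\tilde{y}_S$.

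Finally, \eqref{eq44} follows from \eqref{eq43}: since $\hat{y}_S(X,t) = \tilde{y}_S(\Phi(X),t)$ by construction and $\Phi(X) \in \Gamma(0)$, we set $\xi_S = \Phi(X)$ in \eqref{eq43} to conclude $\hat{y}_S(X,0) = \hat{y}_S(X,T-) = {}^t(0,0,0)$. There is no real obstacle here; the only mild subtlety is to notice that the chart $\Phi$ introduced in \eqref{eq32} is well-defined locally (depending on the patch $\Gamma_m$ containing $\xi_S$), but since the conclusion is pointwise in $X$, this localization causes no difficulty.
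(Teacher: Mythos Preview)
Your proof is correct and follows essentially the same approach as the paper: for $t=0$ you differentiate the $\varepsilon$-independent initial condition $\tilde{x}_\sharp^\varepsilon(\xi_\sharp,0)=\xi_\sharp$, and for $t=T-$ you invoke assumption \eqref{eq26} via the identity $z_\sharp(\tilde{x}_\sharp,t)=\tilde{y}_\sharp$, exactly as the paper does (the paper only writes out the case of \eqref{eq43} and leaves the rest implicit). Your explicit derivation of \eqref{eq44} from \eqref{eq43} via $\hat{y}_S(X,t)=\tilde{y}_S(\Phi(X),t)$ is a welcome clarification that the paper omits.
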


\begin{proof}[Proof of Lemma \ref{lem41}]
We only show \eqref{eq43}. Since $\tilde{x}_S^{\varepsilon} (\xi_S , 0) = \xi_S$ for every $- 1 < \varepsilon < 1$ and $\xi_S \in \Gamma (0)$, we find that
\begin{align*}
\tilde{y}_S (\xi_S, 0) = \frac{d \tilde{x}^\varepsilon_S}{d \varepsilon} ( \xi_S , 0) & = \lim_{h \to 0} \frac{\tilde{x}_S^{\varepsilon + h} (\xi_S , 0) - \tilde{x}_S^\varepsilon ( \xi_S, 0) }{h}\\
& = \frac{\xi_S - \xi_S }{h} = { }^t (0 , 0 , 0).
\end{align*}
By assumption \eqref{eq26}, we see that $\tilde{y}_S (\xi_S, T-)={ }^t (0,0,0)$. Thus, we have \eqref{eq43}. Therefore, the lemma follows.
\end{proof}

\begin{proposition}[Variations of our energies with respect to flow maps]\label{prop42}
\begin{equation}
\frac{d}{d \varepsilon} \bigg \vert_{\varepsilon = 0} \int_0^T \int_{\Omega^{\varepsilon}_A (t)} \frac{1}{2} \rho^{\varepsilon}_A  \vert  v^{\varepsilon}_A \vert^2 { \ }d x d t = - \int_0^T \int_{\Omega_A (t)} (\rho_A D_t^A v_A ) \cdot z_A  { \ }d x d t,\label{eq45}
\end{equation}
\begin{equation}
\frac{d}{d \varepsilon} \bigg \vert_{\varepsilon = 0} \int_0^T \int_{\Omega^{\varepsilon}_B (t)} \frac{1}{2} \rho^{\varepsilon}_B  \vert  v^{\varepsilon}_B \vert^2 { \ }d x d t = - \int_0^T \int_{\Omega_B (t)} (\rho_B D_t^B v_B ) \cdot z_B  { \ }d x d t,\label{eq46}
\end{equation}
\begin{equation}
\frac{d}{d \varepsilon} \bigg \vert_{\varepsilon = 0} \int_0^T \int_{\Gamma^{\varepsilon} (t)} \frac{1}{2} \rho^{\varepsilon}_S  \vert  v^{\varepsilon}_S \vert^2 { \ }d x d t = - \int_0^T \int_{\Gamma (t)} (\rho_S D_t^S v_S ) \cdot z_S  { \ }d \mathcal{H}_x^2 dt,\label{eq47}
\end{equation}
\begin{equation}
\frac{d}{d \varepsilon} \bigg \vert_{\varepsilon = 0} \int_{\Omega^{\varepsilon}_A (t)}p_A (\rho^{\varepsilon}_A ) { \ }d x = \int_{\Omega_A (t)} ( p_A (\rho_A) - \rho_A p'_A(\rho_A) ) {\rm{div}}z_A { \ }d x,\label{eq48}\\
\end{equation}
\begin{equation}
\frac{d}{d \varepsilon} \bigg \vert_{\varepsilon = 0} \int_{\Omega^{\varepsilon}_B (t)}p_B (\rho^{\varepsilon}_B ) { \ }d x = \int_{\Omega_B (t)} ( p_B (\rho_B) - \rho_B p'_B(\rho_B) ) {\rm{div}}z_B { \ }d x,\label{eq49}\\
\end{equation}
\begin{equation}
\frac{d}{d \varepsilon} \bigg \vert_{\varepsilon = 0} \int_{\Gamma^{\varepsilon} (t)}p_S (\rho^{\varepsilon}_S ) { \ }d \mathcal{H}_x^2 = \int_{\Gamma (t)} ( p_S (\rho_S) - \rho_S p'_S (\rho_S) ) {\rm{div}}_\Gamma z_S { \ }d \mathcal{H}_x^2, \label{eq4010}
\end{equation}
\begin{equation}
\frac{d}{d \varepsilon} \bigg \vert_{\varepsilon = 0} \int_{\Gamma^{\varepsilon} (t)} 1 { \ }d \mathcal{H}_x^2 = \int_{\Gamma (t)} {\rm{div}}_\Gamma z_S { \ }d \mathcal{H}_x^2. \label{eq4011}
\end{equation}
\end{proposition}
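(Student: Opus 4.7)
\emph{Overall strategy.} All seven identities follow the same two-step scheme: pull the $\varepsilon$-dependent integral back to the \emph{fixed} reference set $\Omega_A(0)$, $\Omega_B(0)$, or the local charts $U_m$ via Lemma~\ref{lem34} (and Lemma~\ref{lem32} for the surface case) so that the $\varepsilon$-dependence is isolated in either $\partial_t\tilde{x}^\varepsilon$ or $\sqrt{G^\varepsilon}$; differentiate under the integral at $\varepsilon=0$; then push the result back to the moving domain using the chain-rule identities $\partial_t^2\tilde{x}_\sharp=(D_t^\sharp v_\sharp)\circ\tilde{x}_\sharp$, $\tilde{y}_\sharp=z_\sharp\circ\tilde{x}_\sharp$, the pointwise conservation $\rho_0^\sharp=\rho_\sharp\sqrt{G_\sharp}$ read off from Lemma~\ref{lem34}, and Lemma~\ref{lem33} to convert $\tfrac{d}{d\varepsilon}|_{\varepsilon=0}\sqrt{G^\varepsilon_\sharp}$ into $\mathrm{div}\,z_\sharp$ (or $\mathrm{div}_\Gamma z_S$).

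\emph{Kinetic energies \eqref{eq45}-\eqref{eq47}.} Lemma~\ref{lem34}(ii) trivializes the $\varepsilon$-dependence in the density and Jacobian:
\[
\int_0^T\!\!\int_{\Omega^\varepsilon_A(t)}\tfrac12\rho^\varepsilon_A|v^\varepsilon_A|^2\,dx\,dt=\int_0^T\!\!\int_{\Omega_A(0)}\tfrac12\rho_0^A(\xi_A)|\partial_t\tilde{x}^\varepsilon_A|^2\,d\xi_A\,dt.
\]
Differentiating at $\varepsilon=0$ produces $\int_0^T\int_{\Omega_A(0)}\rho_0^A\,\partial_t\tilde{x}_A\cdot\partial_t\tilde{y}_A\,d\xi_A\,dt$; one integration by parts in $t$, whose boundary terms vanish by Lemma~\ref{lem41}, yields $-\int_0^T\int_{\Omega_A(0)}\rho_0^A\,\partial_t^2\tilde{x}_A\cdot\tilde{y}_A\,d\xi_A\,dt$. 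Substituting $\partial_t^2\tilde{x}_A=(D_t^A v_A)\circ\tilde{x}_A$ and $\tilde{y}_A=z_A\circ\tilde{x}_A$ and performing the change of variables $x=\tilde{x}_A(\xi_A,t)$ (with $\rho_0^A\,d\xi_A=\rho_A\sqrt{G_A}\,d\xi_A=\rho_A\,dx$) gives \eqref{eq45}. Formula \eqref{eq46} is identical with $B$ in place of $A$; \eqref{eq47} is the surface analog, carried out chart by chart against the partition $\{\hat{\Psi}_m\}$ with $\hat{x}_S^\varepsilon$, using the vanishing of $\hat{y}_S$ at $t=0$ and $t=T{-}$ from Lemma~\ref{lem41} and reassembly via Lemma~\ref{lem32}.

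\emph{Pressure variations \eqref{eq48}-\eqref{eq4011}.} For \eqref{eq48}, Lemma~\ref{lem34}(ii) rewrites the integral as $\int_{\Omega_A(0)}p_A(\rho_0^A/\sqrt{G^\varepsilon_A})\sqrt{G^\varepsilon_A}\,d\xi_A$. A direct chain-rule computation with $u=\sqrt{G^\varepsilon_A}$ gives
\[
\frac{d}{d\varepsilon}\bigg|_{\varepsilon=0}\!\bigl[p_A(\rho_0^A/u)\,u\bigr]=\bigl[p_A(\rho_A)-\rho_A p_A'(\rho_A)\bigr]_{x=\tilde{x}_A(\xi_A,t)}\cdot\frac{d}{d\varepsilon}\bigg|_{\varepsilon=0}\!\sqrt{G^\varepsilon_A},
\]
since the cross term $p_A'(\rho_0^A/u)(-\rho_0^A/u^2)\,u$ collects the $-\rho_A p_A'(\rho_A)$ contribution. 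Lemma~\ref{lem33} then converts the resulting reference-domain integral into $\int_{\Omega_A(t)}[p_A(\rho_A)-\rho_A p_A'(\rho_A)]\,\mathrm{div}\,z_A\,dx$, which is \eqref{eq48}. Formula \eqref{eq49} is identical with $B$ replacing $A$. For the surface formula \eqref{eq4010}, the same chain-rule identity is applied chart by chart, weighted by $\hat{\Psi}_m$, and the surface version of Lemma~\ref{lem33} reassembles the charts into the intrinsic $\mathrm{div}_\Gamma z_S$ integral. Finally, \eqref{eq4011} is the special case in which the integrand has no $\rho^\varepsilon_S$-dependence, so the chain rule trivializes and one directly applies Lemma~\ref{lem33} to $\tfrac12\mathfrak{p}_0\sum_m\int_{U_m}\hat{\Psi}_m\,\partial_\varepsilon\sqrt{G^\varepsilon_S}\big|_0\,dX$.

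\emph{Main obstacle.} The heaviest bookkeeping is in the surface statements \eqref{eq47} and \eqref{eq4010}, where every pullback lives in local charts $(\Phi_m,U_m)$ and only the partition $\{\hat{\Psi}_m\}$ reassembles a global surface integral. One must verify that the chart-level variation identities are covariant, so that summation in $m$ recovers the intrinsic tangential divergence $\mathrm{div}_\Gamma z_S$, and one must use the chart-level vanishing $\hat{y}_S(X,0)=\hat{y}_S(X,T{-})=0$ from Lemma~\ref{lem41} (rather than only the pointwise vanishing of $\tilde{y}_S$ on $\Gamma(0)$) to justify the $t$-integration by parts chart by chart. Once these compatibility checks are in place, the two-step pullback-and-push-back scheme dispatches all seven formulas uniformly.
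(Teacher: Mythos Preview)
Your proposal is correct and follows essentially the same approach as the paper: pull back to the fixed reference configuration via Lemma~\ref{lem34}, differentiate in $\varepsilon$ (using integration by parts in $t$ and Lemma~\ref{lem41} for the kinetic terms, and the chain rule on $p_\sharp(\rho_0^\sharp/\sqrt{G_\sharp^\varepsilon})\sqrt{G_\sharp^\varepsilon}$ for the pressure terms), then push forward via Lemma~\ref{lem33}, with the surface cases handled chart by chart against the partition of unity. The only bookkeeping to watch is the factor $2$ appearing in the statement of Lemma~\ref{lem33}, which the paper absorbs by writing the derivative as $\tfrac{d}{d\varepsilon}\big|_{\varepsilon=0}\tfrac{\sqrt{G_\sharp^\varepsilon}}{2}$; your chain-rule line should be matched to that convention.
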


\begin{proof}[Proof of Proposition \ref{prop42}]
We first derive \eqref{eq45} and \eqref{eq46}. By Lemma \ref{lem34}, we check that
\begin{multline}\label{eq4012}
\frac{d}{d \varepsilon} \bigg \vert_{\varepsilon =0} \int_0^T \int_{\Omega^{\varepsilon}_A (t)} \frac{1}{2} \rho^{\varepsilon}_A  \vert v^{\varepsilon }_A  \vert^2 { \ }d x d t\\
 =\frac{d}{d \varepsilon} \bigg \vert_{\varepsilon =0} \int_0^T \int_{\Omega_A (0)} \frac{1}{2} \rho_0^A (\xi_A ) \partial_t \tilde{x}^{\varepsilon}_A (\xi_A,t) \cdot \partial_t \tilde{x}^{\varepsilon}_A (\xi_A ,t) { \ }d \xi_A dt\\
 = \int_0^T \int_{\Omega_A (0)} \rho_0^A (\xi_A ) \partial_t \tilde{x}_A (\xi_A , t) \cdot \partial_t \tilde{y}_A (\xi_A ,t) { \ }d \xi_A dt\\
= \int_0^T \int_{\Omega_A (0)} \rho_0^A (\xi_A ) v_A ( \tilde{x}_A (\xi_A , t ) , t ) \cdot \partial_t \tilde{y}_A (\xi_A ,t) { \ }d \xi_A dt.
\end{multline}
Here we used some properties of flow maps (see Definitions \ref{def23}, \ref{def25}, \ref{def26}). Using integration by parts with \eqref{eq41} and Lemma \ref{lem34}, we see that
\begin{multline*}
\text{(R.H.S.) of }\eqref{eq4012}\\ = - \int_0^T \int_{\Omega_A (0)} \frac{\rho_0^A (\xi_A )}{ \sqrt{G_A (\xi_A , t)}} \left[ \frac{d}{dt} v_A ( \tilde{x}_A (\xi_A , t ) , t ) \right] \cdot \tilde{y}_A (\xi_A ,t) \sqrt{G_A (\xi_A , t) }{ \ }d \xi_A dt\\
= - \int_0^T \int_{\Omega_A (t)}  \{ \rho_A D_t^A v_A \} (x,t) \cdot z_A (x,t){ \ }d x d t, 
\end{multline*}
which is \eqref{eq45}. Similarly, we have \eqref{eq46}.

Secondly, we show \eqref{eq47}. By Lemma \ref{lem34}, we check that
\begin{multline*}
\frac{d}{d \varepsilon} \bigg \vert_{\varepsilon =0} \int_0^T \int_{\Gamma^{\varepsilon} (t)} \frac{1}{2} \rho^{\varepsilon}_S  \vert v^{\varepsilon }_S  \vert^2 { \ }d x d t\\
 =\frac{d}{d \varepsilon} \bigg \vert_{\varepsilon =0} \int_0^T \sum_{m=1}^N \int_{U_m} \hat{\Psi}_m \frac{1}{2} \rho_0^S (\Phi_m(X) ) \partial_t \hat{x}^{\varepsilon}_S (X ,t) \cdot \partial_t \hat{x}^{\varepsilon}_S (X ,t) { \ }d X dt\\
 = \int_0^T \sum_{m=1}^N \int_{U_m} \hat{\Psi}_m \rho_0^S (\Phi_m(X)) \partial_t \hat{x}_S (X , t) \cdot \partial_t \hat{y}_S (X ,t) { \ }dX dt. 
\end{multline*}
Using integration by parts with \eqref{eq44} and Lemma \ref{lem34}, we see that
\begin{multline*}
= - \int_0^T \sum_{m=1}^N \int_{U_m} \hat{\Psi}_m \frac{\rho_0^S (\Phi_m(X))}{\sqrt{  G_S }} \left[ \frac{d}{dt} \hat{v}_S ( \hat{x}_S (X, t ) , t ) \right] \cdot \hat{y}_S \sqrt{ G_S} { \ }d X dt\\
= - \int_0^T \int_{\Gamma (t)}  \{ \rho_S D_t^S v_S \} (x,t) \cdot z_S (x,t){ \ }d x d t. 
\end{multline*}
Thus, we have \eqref{eq47}.

Thirdly, we derive \eqref{eq48} and \eqref{eq49}. By Lemmas \ref{lem33}, \ref{lem34}, we check that
\begin{multline*}
\frac{d}{d \varepsilon} \bigg \vert_{\varepsilon =0} \int_{\Omega^{\varepsilon}_A (t)} p_A \left( \rho^{\varepsilon}_A \right) { \ }d x = \frac{d}{d \varepsilon} \bigg \vert_{\varepsilon =0} \int_{\Omega_A (0)} p_A \left( \frac{\rho_0^A}{ \sqrt{G^\varepsilon_A}} \right) \sqrt{G^\varepsilon_A} { \ }d \xi_A\\
 = \int_{\Omega_A (0)} \left( - p'_A \left( \frac{\rho_0^A}{ \sqrt{G_A} } \right) \frac{\rho_0^A}{ \sqrt{G_A} } + p_A \left( \frac{\rho_0^A}{\sqrt{G_A}} \right) \right) \left( \frac{d}{d \varepsilon} \bigg \vert_{\varepsilon =0} \sqrt{G^\varepsilon_A} \right) { \ }d \xi_A\\
= \int_{\Omega_A (t)}( p_A (\rho_A) - \rho_A p'_A(\rho_A) ) ({\rm{div}} z_A){ \ }d x. 
\end{multline*}
Thus, we have \eqref{eq48}. Similarly, we see \eqref{eq49}.

Finally, we show \eqref{eq4010}. By Lemmas \ref{lem33}, \ref{lem34}, we observe that
\begin{multline*}
\frac{d}{d \varepsilon} \bigg \vert_{\varepsilon =0} \int_{\Gamma^{\varepsilon} (t)} p_S \left( \rho^{\varepsilon}_S \right) { \ }d \mathcal{H}^2_x = \frac{d}{d \varepsilon} \bigg \vert_{\varepsilon =0} \sum_{m=1}^N \int_{U_m} \hat{\Psi}_m p_S \left( \frac{\rho_0^S(\Phi_m(X))}{ \sqrt{G^\varepsilon_S}} \right) \sqrt{G^\varepsilon_S} { \ }d X\\
 = \sum_{m=1}^N \int_{U_m} \hat{\Psi}_m K_m(X,t)  \left ( \frac{d}{d \varepsilon} \bigg \vert_{\varepsilon =0} \sqrt{G_S^\varepsilon} \right){ \ }d X\\
= \int_{\Gamma (t)}( p_S (\rho_S) - \rho_S p'_S (\rho_S) ) ({\rm{div}}_\Gamma z_S){ \ }d \mathcal{H}^2_x. 
\end{multline*}
Here
\begin{equation*}
K_m(X,t) := - p'_S \left( \frac{\rho_0^S(\Phi_m(X))}{ \sqrt{G_S} } \right) \frac{\rho_0^S(\Phi_m(X))}{ \sqrt{G_S} } + p_S \left( \frac{\rho_0^S(\Phi_m(X))}{\sqrt{G_S}} \right).
\end{equation*}
Thus, we have \eqref{eq4010}. The derivation of \eqref{eq4011} is left for the reader. Therefore, Proposition \ref{prop42} is proved. 
\end{proof}

Let us attack Theorem \ref{thm28}.
\begin{proof}[Proof of Theorem \ref{thm28}]
From Proposition \ref{prop42}, we see that
\begin{multline*}
\frac{d}{d \varepsilon} \bigg \vert_{\varepsilon = 0} A [\tilde{x}^\varepsilon_A , \tilde{x}_B^{\varepsilon} , \tilde{x}^{\varepsilon}_S ] = - \int_0^T \int_{\Omega_A (t)} (\rho_A D_t^A v_A ) \cdot z_A  { \ }d x d t\\
 - \int_0^T \int_{\Omega_B (t)} (\rho_B D_t^B v_B ) \cdot z_B  { \ }d x d t - \int_0^T \int_{\Gamma (t)} (\rho_S D_t^S v_S ) \cdot z_S  { \ }d \mathcal{H}_x^2 d t\\
 + \int_0^T \int_{\Omega_A (t)} \mathfrak{p}_A {\rm{div}}z_A { \ }d x d t + \int_0^T \int_{\Omega_B (t)} \mathfrak{p}_B {\rm{div}}z_B { \ }d x d t\\ + \int_0^T \int_{\Gamma (t)} \mathfrak{p}_A {\rm{div}}_\Gamma z_S { \ }d \mathcal{H}_x^2 d t
\end{multline*}
where $(\mathfrak{p}_A, \mathfrak{p}_B, \mathfrak{p}_S)$ is defined by \eqref{eq14}. Using integration by parts (Lemma \ref{lem71}), we have \text{(R.H.S.)} of \eqref{eq27}. Therefore, Theorem \ref{thm28} is proved.
\end{proof}

\section{Mathematical modeling}\label{sect5}

Let us apply our energetic variational approaches to derive our multiphase flow systems \eqref{eq13}, \eqref{eq15}, and \eqref{eq18}. We suppose that $(v_A,v_B,v_S)$ satisfy \eqref{eq12}. From \eqref{eq12} we assume that  
\begin{equation}\label{eq51}
\begin{cases}
\varphi_B \cdot n_\Omega = 0 \text{ on }\partial \Omega,\\
\varphi_A \cdot n_\Gamma = \varphi_B \cdot n_\Gamma = \varphi_S \cdot n_\Gamma \text{ on }\Gamma (t).
\end{cases}
\end{equation}
Here $\varphi_A (x) := z_A (x,t)$, $\varphi_B (x) := z_B (x,t)$, and $\varphi_S (x) := z_S (x,t)$. Let $p_A,p_B,p_S \in C^1 (\mathbb{R})$. For the flow maps $(\tilde{x}_A, \tilde{x}_B, \tilde{x}_S)$ in $\overline{\Omega_T}$, 
\begin{multline}\label{eq52}
A_1 [\tilde{x}_A, \tilde{x}_B, \tilde{x}_S] := \int_0^T \int_{\Omega_A (t)} \left\{ \frac{1}{2} \rho_A  \vert v_A \vert^2 - p_A (\rho_A) \right\} (x,t) { \ }d x dt\\
 + \int_0^T \int_{\Omega_B (t)} \left\{ \frac{1}{2} \rho_B  \vert v_B \vert^2 - p_B (\rho_B) \right\} { \ }d x dt\\
 + \int_0^T \int_{\Gamma (t)} \left\{ \frac{1}{2} \rho_S  \vert v_S \vert^2 - p_S (\rho_S) \right\} { \ }d \mathcal{H}^2_x dt,
\end{multline}
\begin{multline}\label{eq53}
A_2 [\tilde{x}_A, \tilde{x}_B, \tilde{x}_S] := \int_0^T \int_{\Omega_A (t)} \left\{ \frac{1}{2} \rho_A  \vert v_A \vert^2 - p_A (\rho_A) \right\} (x,t) { \ }d x dt\\
 + \int_0^T \int_{\Omega_B (t)} \left\{ \frac{1}{2} \rho_B  \vert v_B \vert^2 - p_B (\rho_B) \right\} { \ }d x dt\\ + \int_0^T \int_{\Gamma (t)} \left\{ \frac{1}{2} \rho_S  \vert v_S \vert^2 \right\} { \ }d \mathcal{H}^2_x dt,
\end{multline}
\begin{multline}\label{eq54}
A_3 [\tilde{x}_A, \tilde{x}_B, \tilde{x}_S] := \int_0^T \int_{\Omega_A (t)} \frac{1}{2} \rho_A  \vert v_A \vert^2 { \ }d x dt
 + \int_0^T \int_{\Omega_B (t)} \frac{1}{2} \rho_B  \vert v_B \vert^2 { \ }d x dt\\
 + \int_0^T \int_{\Gamma (t)}  \frac{1}{2} \rho_S  \vert v_S \vert^2 { \ }d \mathcal{H}^2_x dt,
\end{multline}
and
\begin{multline}\label{eq55}
A_4 [\tilde{x}_A, \tilde{x}_B, \tilde{x}_S] := \int_0^T \int_{\Omega_A (t)} p_A (\rho_A) { \ }d x dt + \int_0^T \int_{\Omega_B (t)}  p_B (\rho_B) { \ }d x dt\\ + \int_0^T \int_{\Gamma (t)} p_S (\rho_S) { \ }d \mathcal{H}^2_x dt.
\end{multline}
Remark that the action integrals $A_1$, $A_2$ and $(A_3,A_4)$ correspond to systems \eqref{eq13}, \eqref{eq15}, and \eqref{eq16}, respectively. Now we study the variations of our action integrals with respect to flow maps. From Theorem \ref{thm28} and Proposition \ref{prop42}, we consider the following variational problems.
\begin{proposition}[Fundamental lemmas of calculus of variations]\label{prop51}Fix $0 < t <T$. Then the following three assertions holds:\\
\noindent $(\rm{i})$ Assume that for every $\varphi_A \in [C^\infty( \overline{\Omega_A (t)})]^3$,  $\varphi_B \in [C^\infty( \overline{\Omega_B (t)})]^3$,  $\varphi_S \in [C^\infty( \Gamma (t) )]^3$ satisfying \eqref{eq51}, 
\begin{multline*}
\int_{\Omega_A (t)} ( \rho_A D^A_t v_A + \nabla \mathfrak{p}_A) \cdot \varphi_A { \ } d x + \int_{\Omega_B (t)} (\rho_B D_t^B v_B + \nabla \mathfrak{p}_B) \cdot \varphi_B { \ } d x\\
+ \int_{\Gamma (t)} ( \rho_S D_t^S v_S + \nabla_\Gamma \mathfrak{p}_S + \mathfrak{p}_S H_\Gamma n_\Gamma - \mathfrak{p}_A n_\Gamma + \mathfrak{p}_B n_\Gamma  ) \cdot \varphi_S { \ } d \mathcal{H}^2_x = 0. 
\end{multline*}
Then
\begin{equation}\label{eq56}
\begin{cases}
\rho_A D^A_t v_A + \nabla \mathfrak{p}_A = 0 & \text{ in } \Omega_A (t),\\
\rho_B D^B_t v_B + \nabla \mathfrak{p}_B = 0 & \text{ in } \Omega_B(t),\\
\rho_S D_t^S v_S + \nabla_\Gamma \mathfrak{p}_S + \mathfrak{p}_S H_\Gamma n_\Gamma - \mathfrak{p}_A n_\Gamma + \mathfrak{p}_B n_\Gamma =0 & \text{ on } \Gamma (t).
\end{cases}
\end{equation}
Here $(\mathfrak{p}_A , \mathfrak{p}_B, \mathfrak{p}_S)$ is defined by \eqref{eq14}.\\
\noindent $(\rm{ii})$ Assume that for every $\varphi_A \in [C^\infty( \overline{\Omega_A (t)})]^3$,  $\varphi_B \in [C^\infty( \overline{\Omega_B (t)})]^3$,  $\varphi_S \in [C^\infty( \Gamma (t) )]^3$ satisfying \eqref{eq51} and ${\rm{div}}_\Gamma \varphi_S =0$ on $\Gamma (t)$, 
\begin{multline*}
\int_{\Omega_A (t)} ( \rho_A D^A_t v_A + \nabla \mathfrak{p}_A) \cdot \varphi_A { \ } d x + \int_{\Omega_B (t)} (\rho_B D_t^B v_B + \nabla \mathfrak{p}_B) \cdot \varphi_B { \ } d x\\
+ \int_{\Gamma (t)} ( \rho_S D_t^S v_S - \mathfrak{p}_A n_\Gamma + \mathfrak{p}_B n_\Gamma  ) \cdot \varphi_S { \ } d \mathcal{H}^2_x = 0. 
\end{multline*}
Then there exists $\Pi_S \in C^1(\Gamma (t))$ such that
\begin{equation}\label{eq57}
\begin{cases}
\rho_A D^A_t v_A + \nabla \mathfrak{p}_A = 0 & \text{ in } \Omega_A (t),\\
\rho_B D^B_t v_B + \nabla \mathfrak{p}_B = 0 & \text{ in } \Omega_B(t),\\
\rho_S D_t^S v_S + \nabla_\Gamma \Pi_S + \Pi_S H_\Gamma n_\Gamma - \mathfrak{p}_A n_\Gamma + \mathfrak{p}_B n_\Gamma =0 & \text{ on } \Gamma (t).
\end{cases}
\end{equation}
\noindent $(\rm{iii})$ Assume that for every $\varphi_A \in [C^\infty( \overline{\Omega_A (t)})]^3$, $\varphi_B \in [C^\infty( \overline{\Omega_B (t)})]^3$,  $\varphi_S \in [C^\infty( \Gamma (t) )]^3$ satisfying \eqref{eq51}, 
\begin{multline*}
\int_{\Omega_A (t)} \rho_A D^A_t v_A \cdot \varphi_A { \ } d x + \int_{\Omega_B (t)} \rho_B D_t^B v_B \cdot \varphi_B { \ } d x + \int_{\Gamma (t)} \rho_S D_t^S v_S \cdot P_\Gamma \varphi_S { \ } d \mathcal{H}^2_x\\
 = - \int_{\Omega_A (t)} \nabla \mathfrak{p}_A \cdot \varphi_A { \ } d x - \int_{\Omega_B (t)} \nabla \mathfrak{p}_B \cdot \varphi_B { \ } d x\\
 - \int_{\Gamma (t)} ( \nabla_\Gamma \mathfrak{p}_S + \mathfrak{p}_S H_\Gamma n_\Gamma - \mathfrak{p}_A n_\Gamma + \mathfrak{p}_B n_\Gamma  ) \cdot \varphi_S { \ } d \mathcal{H}^2_x. 
\end{multline*}
Then
\begin{equation}\label{eq58}
\begin{cases}
\rho_A D^A_t v_A + \nabla \mathfrak{p}_A = 0 & \text{ in } \Omega_A (t),\\
\rho_B D^B_t v_B + \nabla \mathfrak{p}_B = 0 & \text{ in } \Omega_B(t),\\
P_\Gamma \rho_S D_t^S v_S + \nabla_\Gamma \mathfrak{p}_S=0 & \text{ on } \Gamma (t),\\
\mathfrak{p}_S H_\Gamma n_\Gamma - \mathfrak{p}_A n_\Gamma + \mathfrak{p}_B n_\Gamma =0 & \text{ on } \Gamma (t).
\end{cases}
\end{equation}
\end{proposition}

\begin{proof}[Proof of Proposition \ref{prop51}]

Since the assertion $(\rm{i})$ is clear, we prove $(\rm{ii})$ and $(\rm{iii})$.

We first show $(\rm{ii})$. We now consider the case when $\varphi_B = { }^t (0,0,0)$ and $\varphi_S = { }^t (0,0,0)$, that is, for every $\varphi_A \in [C^\infty ( \overline{\Omega_A (t)} )]^3$ satisfying \eqref{eq51},
\begin{equation*}
\int_{\Omega_A (t)} ( \rho_A D^A_t v_A + \nabla \mathfrak{p}_A) \cdot \varphi_A { \ } d x = 0. 
\end{equation*}
This shows that
\begin{equation*}
\rho_A D^A_t v_A + \nabla \mathfrak{p}_A = 0 \text{ in } \Omega_A (t).
\end{equation*}
Similarly, we see that
\begin{equation*}
\rho_B D^B_t v_B + \nabla \mathfrak{p}_B = 0 \text{ in } \Omega_B(t).
\end{equation*}
Next we consider a general case, that is, for every $\varphi_S \in [C^\infty (\Gamma (t) )]^3$ satisfying ${\rm{div}}_\Gamma \varphi_S = 0$ on $\Gamma (t)$,
\begin{equation*}
\int_{\Gamma (t)} ( \rho_S D_t^S v_S - \mathfrak{p}_A n_\Gamma + \mathfrak{p}_B n_\Gamma  ) \cdot \varphi_S { \ } d \mathcal{H}^2_x = 0. 
\end{equation*}
Since ${\rm{div}}_\Gamma \varphi_S =0$, we apply the generalized Helmholtz-Weyl decomposition (Lemma \ref{lem72}) to find that there exists $\Pi_S \in C^1 (\Gamma (t))$ such that
\begin{equation*}
\rho_S D_t^S v_S  - \mathfrak{p}_A n_\Gamma + \mathfrak{p}_B n_\Gamma = - \nabla_\Gamma \Pi_S - \Pi_S H_\Gamma n_\Gamma \text{ on } \Gamma (t).
\end{equation*}
Thus, we see $(\rm{ii})$. 

Next we prove $(\rm{iii})$. We now consider the case when $\varphi_S = { }^t (0,0,0)$, that is, for every $\varphi_A \in [C^\infty ( \overline{\Omega_A (t)} )]^3$, $\varphi_B \in [C^\infty ( \overline{\Omega_B (t)} )]^3$ satisfying \eqref{eq51},
\begin{equation*}
\int_{\Omega_A (t)} ( \rho_A D^A_t v_A + \nabla \mathfrak{p}_A) \cdot \varphi_A { \ } d x + \int_{\Omega_B (t)} ( \rho_B D^B_t v_B + \nabla \mathfrak{p}_B) \cdot \varphi_B { \ } d x = 0. 
\end{equation*}
This shows that
\begin{align*}
\rho_A D^A_t v_A + \nabla \mathfrak{p}_A = 0 \text{ in } \Omega_A (t),\\
\rho_B D^B_t v_B + \nabla \mathfrak{p}_B = 0 \text{ in } \Omega_B (t).
\end{align*}
Next we consider a general case, that is, for every $\varphi_S \in [C^\infty (\Gamma (t) )]^3$,
\begin{equation*}
\int_{\Gamma (t)} ( P_\Gamma \rho_S D_t^S v_S +\nabla_\Gamma \mathfrak{p}_S + \mathfrak{p}_S H_\Gamma n_\Gamma - \mathfrak{p}_A n_\Gamma + \mathfrak{p}_B n_\Gamma  ) \cdot \varphi_S { \ } d \mathcal{H}^2_x = 0. 
\end{equation*}
Therefore, we see that
\begin{align*}
P_\Gamma \rho_S D_t^S v_S + \nabla_\Gamma \mathfrak{p}_S=0 & \text{ on } \Gamma (t),\\
\mathfrak{p}_S H_\Gamma n_\Gamma - \mathfrak{p}_A n_\Gamma + \mathfrak{p}_B n_\Gamma =0 & \text{ on } \Gamma (t).
\end{align*}
Thus, we see $(\rm{iii})$. Therefore, Proposition \ref{prop51} is proved.
\end{proof}

Let us derive our multiphase flow systems.
\subsection{Inviscid multiphase flow system with compressible surface flow}\label{subsec51}
Let us construct system \eqref{eq13}. Assume that $(v_A, v_B,v_S)$ satisfy \eqref{eq12}. Based on Proposition \ref{prop27}, we admit \eqref{eq21}. Set the action integral $A_1$ defined by \eqref{eq52}. We assume the following energetic variational principle: 
\begin{equation*}
\frac{\delta A_1}{\delta \tilde{x}} \bigg \vert_{z_B \cdot n_\Omega =0 \text{ on } \partial \Omega, { \ }z_A \cdot n_\Gamma = z_B \cdot n_\Gamma = z_S \cdot n_\Gamma \text{ on }\Gamma(t)} = 0.
\end{equation*}
Here $\tilde{x} = (\tilde{x}_A, \tilde{x}_B, \tilde{x}_S )$ and $(z_A, z_B,z_S)$ is a variation of $(\tilde{x}^\varepsilon_A, \tilde{x}^\varepsilon_B, \tilde{x}^\varepsilon_S )$. From assertion $(\rm{i})$ of Proposition \ref{prop51}, we derive \eqref{eq56}. Combining \eqref{eq21} and \eqref{eq56}, we therefore have system \eqref{eq13}.

\subsection{Inviscid multiphase flow system with incompressible surface flow}\label{subsec52}
Let us construct system \eqref{eq15}. Assume that $(v_A, v_B,v_S)$ satisfy \eqref{eq12}. Based on Proposition \ref{prop27}, we admit \eqref{eq21} and ${\rm{div}}_\Gamma v_S =0$ on $\Gamma_T$. Set the action integral $A_2$ defined by \eqref{eq53}. We assume the following energetic variational principle: 
\begin{equation*}
\frac{\delta A_2}{\delta \tilde{x}} \bigg \vert_{z_B \cdot n_\Omega =0 \text{ on } \partial \Omega{ \ }z_A \cdot n_\Gamma = z_B \cdot n_\Gamma = z_S \cdot n_\Gamma \text{ on }\Gamma(t), { \ }{\rm{div}}_\Gamma z_S =0 } = 0.
\end{equation*}
Here $\tilde{x} = (\tilde{x}_A, \tilde{x}_B, \tilde{x}_S )$ and $(z_A, z_B,z_S)$ is a variation of $(\tilde{x}^\varepsilon_A, \tilde{x}^\varepsilon_B, \tilde{x}^\varepsilon_S )$. From assertion $(\rm{ii})$ of Proposition \ref{prop51}, we derive \eqref{eq57}. Combining \eqref{eq21}, ${\rm{div}}_\Gamma v_S =0$, and \eqref{eq57}, we therefore have system \eqref{eq15}.

\subsection{Inviscid multiphase flow system with tangential surface flow}\label{subsec53}
Let us construct system \eqref{eq13}. Assume that $(v_A, v_B,v_S)$ satisfy \eqref{eq12}. Based on Proposition \ref{prop27}, we admit \eqref{eq21}. Set the action integrals $A_3$ and $A_4$ defined by \eqref{eq54} and \eqref{eq55}, respectively. We assume the following energetic variational principle: 
\begin{equation*}
\frac{\delta A_3}{\delta \tilde{x}} \bigg \vert_{ z_S \cdot n_\Gamma =0 \text{ on }\Gamma(t)} = \frac{\delta A_4}{\delta \tilde{x}} \bigg \vert_{z_B \cdot n_\Omega =0 \text{ on } \partial \Omega, { \ }z_A \cdot n_\Gamma = z_B \cdot n_\Gamma = z_S \cdot n_\Gamma \text{ on }\Gamma(t)}.
\end{equation*}
Here $\tilde{x} = (\tilde{x}_A, \tilde{x}_B, \tilde{x}_S )$ and $(z_A, z_B,z_S)$ is a variation of $(\tilde{x}^\varepsilon_A, \tilde{x}^\varepsilon_B, \tilde{x}^\varepsilon_S )$. From assertion $(\rm{iii})$ of Proposition \ref{prop51}, we derive \eqref{eq58}. Combining \eqref{eq21} and \eqref{eq58}, we therefore have system \eqref{eq16}.

\section{Conservation and energy laws}\label{sect6}
Let us study the conservation and energy laws of our systems.
\begin{proof}[Proof of Theorem \ref{thm29}]
We only show $(\rm{i})$. We first derive \eqref{eq18}. From Proposition \ref{prop27}, we find that for $0 <t <T$,
\begin{equation*}
\frac{d}{d t } \left( \int_{\Omega_A (t )} \rho_A (x,t) { \ }d x + \int_{\Omega_B (t)} \rho_B (x ,t ) { \ }d x + \int_{\Gamma (t)} \rho_S (x, t) { \ }d \mathcal{H}^2_x  \right) = 0. 
\end{equation*}
Integrating with respect to $t$, we have \eqref{eq18}. 

Secondly, we derive \eqref{eq19} and \eqref{eq1010}. Applying system \eqref{eq13} and Lemma \ref{lem33}, we check that
\begin{multline}\label{eq61}
\frac{d}{d t } \left( \int_{\Omega_A (t)} \rho_A v_A { \ }d x + \int_{\Omega_B (t)} \rho_B v_B { \ }d x + \int_{\Gamma (t)} \rho_S v_S { \ }d \mathcal{H}^2_x  \right)\\
= \int_{\Omega_A (t)} \rho_A D_t^A v_A { \ }d x + \int_{\Omega_B (t)} \rho_B D_t^B v_B { \ }d x + \int_{\Gamma (t)} \rho_S D_t^S v_S{ \ }d \mathcal{H}^2_x\\
= \int_{\Omega_A (t)} (-\nabla \mathfrak{p}_A) { \ }d x + \int_{\Gamma_B (t)} (- \nabla \mathfrak{p}_B) { \ }d x\\
 + \int_{\Gamma (t)}(- \nabla_\Gamma \mathfrak{p}_S - \mathfrak{p}_S H_\Gamma n_\Gamma + \mathfrak{p}_A n_\Gamma - \mathfrak{p}_B n_\Gamma ) { \ }d \mathcal{H}^2_x.
\end{multline}
Using the divergence theorems (Lemma \ref{lem71}), we find that
\begin{multline*}
\text{(R.H.S.) of }\eqref{eq61} = \int_{\Omega_A (t)} {\rm{div}}( - \mathfrak{p}_A I_{3 \times 3} ) { \ }d x + \int_{\Gamma_B (t)}{\rm{div}} (- \mathfrak{p}_B I_{3 \times 3} ) { \ }d x\\
 + \int_{\Gamma (t)}(- \nabla_\Gamma \mathfrak{p}_S - \mathfrak{p}_S H_\Gamma n_\Gamma + \mathfrak{p}_A n_\Gamma - \mathfrak{p}_B n_\Gamma ) { \ }d \mathcal{H}^2_x = - \int_{\partial \Omega} \mathfrak{p}_B n_\Omega { \ } d \mathcal{H}_x^2 .
\end{multline*}
Note that $\int_{\Gamma (t)}\partial_j^\Gamma \mathfrak{p} { \ }d \mathcal{H}^2_x = - \int_{\Gamma (t)} \mathfrak{p}_S H_\Gamma n_j { \ }d \mathcal{H}^2_x $. Integrating with respect to $t$, we have \eqref{eq19}. In the same manner, we observe that
\begin{multline}\label{eq62}
\frac{d}{d t } \left( \int_{\Omega_A (t)} \frac{1}{2} \rho_A  \vert  v_A \vert^2 { \ }d x + \int_{\Omega_B (t)} \frac{1}{2} \rho_B  \vert v_B \vert^2 { \ }d x + \int_{\Gamma (t)} \frac{1}{2} \rho_S  \vert v_S \vert^2 { \ }d \mathcal{H}^2_x  \right)\\
= \int_{\Omega_A (t)} \rho_A D_t^A v_A \cdot v_A { \ }d x + \int_{\Omega_B (t)} \rho_B D_t^B v_B \cdot v_B { \ }d x + \int_{\Gamma (t)} \rho_S D_t^S v_S \cdot v_S { \ }d \mathcal{H}^2_x\\
= \int_{\Omega_A (t)} (-\nabla \mathfrak{p}_A) \cdot v_A { \ }d x + \int_{\Omega_B (t)} (- \nabla \mathfrak{p}_B) \cdot v_B { \ }d x\\
 + \int_{\Gamma (t)} (- \nabla_\Gamma \mathfrak{p}_S - \mathfrak{p}_S H_\Gamma n_\Gamma + \mathfrak{p}_A n_\Gamma - \mathfrak{p}_B n_\Gamma ) \cdot v_S { \ }d \mathcal{H}^2_x\\
= \int_{\Omega_A (t)} ({\rm{div}} v_A) \mathfrak{p}_A { \ }d x + \int_{\Omega_B (t)} ({\rm{div}} v_B) \mathfrak{p}_B { \ }d x + \int_{\Gamma (t)} ({\rm{div}}_\Gamma v_S) \mathfrak{p}_S { \ }d \mathcal{H}_x^2.
\end{multline}
Note that $v_B \cdot n_\Omega =0$. Integrating with respect to $t$, we have \eqref{eq1010}.

Finally, we derive \eqref{eq1011}. Using Lemma \ref{lem33} and \eqref{eq14}, we see that 
\begin{multline}\label{eq63}
\frac{d}{dt}\left(  \int_{\Omega_A (t)} p_A (\rho_A) { \ } d x +  \int_{\Omega_B (t)} p_B (\rho_B) { \ } d x + \int_{\Gamma (t)} p_S (\rho_S) { \ } d \mathcal{H}^2_x \right)\\
= - \int_{\Omega_A (t)} \mathfrak{p}_A ({\rm{div}}v_A) { \ }d x  - \int_{\Omega_B (t)} \mathfrak{p}_B ({\rm{div}}v_B) { \ }d x - \int_{\Gamma (t)} \mathfrak{p}_S ({\rm{div}}_\Gamma v_S) { \ }d \mathcal{H}^2_x. 
\end{multline}
Here we used the fact that
\begin{multline*}
\frac{d}{dt} \int_{\Omega_A (t)} p_A (\rho_A) { \ } d x = \int_{\Omega_A (t)} \{ (D_t^A \rho_A) p'_A (\rho_A) + p_A (\rho_A ) ({\rm{div}} v_A) \}{ \ }d x\\
= \int_{\Omega_A (t)} \{ (-{\rm{div}} v_A) \rho_A p'_A (\rho_A) + p_A (\rho_A ) ({\rm{div}} v_A) \}{ \ }d x = - \int_{\Omega_A (t)} \mathfrak{p}_A ({\rm{div}}v_A) { \ }d x. 
\end{multline*}
From \eqref{eq62} and \eqref{eq63} we see \eqref{eq1011}. Therefore, Theorem \ref{thm29} is proved.
\end{proof}

\section{Appendix}

In Appendix, we prove Lemma \ref{lem32}, and introduce several formulas for integration by parts and a generalized Helmholtz-Weyl Decomposition on a closed surface.

\begin{proof}[Proof of Lemma \ref{lem32}]
Let $N$, $\Gamma_m$, $\Phi_m$, $U_m$, $\Psi_m$ be the symbols appearing in \eqref{eq31}. We only show \eqref{eq33} and \eqref{eq35}. We first attack \eqref{eq33}. Fix $0<t<T$. For each $m \in \{ 1,2, \cdots , N \}$,
\begin{equation*}
\Gamma_m (t) = \{ x \in \mathbb{R}^3; { \ }x = \tilde{x}_S (\Phi_m (X), t) , { \ } X \in U_m \}.
\end{equation*}
Since the mapping $\tilde{x}_S (\cdot , t ) : \Gamma (0) \to \Gamma (t)$ is bijective, we see that
\begin{equation*}
\Gamma (t) = \bigcup_{m=1}^N \Gamma_m (t) 
\end{equation*}
and that the inverse mapping $\eta_S = \eta_S (x,t) $ of $\tilde{x}_S (\xi_S ,t)$ exists, that is,
\begin{align*}
\xi_S = \eta_S (x , t),\\
\eta_S (\tilde{x}_S(\xi_S, t) ,t )  = \xi_S,\\
\tilde{x}_S (\eta_S (x,t) ,t) = x.
\end{align*}
Now we set
\begin{equation*}
\breve{\Psi}_m (x,t) = \Psi_m (\eta_S (x ,t )).
\end{equation*}
By definition, we find that
\begin{align*}
\text{supp}\breve{\Psi}_m \subset \Gamma_m (t),\\
\vert \vert  \breve{\Psi}_m \vert \vert_{L^\infty} =1,\\
\sum_{m=1}^N \breve{\Psi}_m = 1 \text{ on } \Gamma (t).
\end{align*}
Therefore, we see that $\breve{\Psi}_m$ is a partition of unity. From $\breve{\Psi}_m (x,t ) = \Psi_m( \eta_S (x,t)) = \Psi_m (\eta_S (\tilde{x}_S (\xi_S,t) ,t )) = \Psi_m (\xi_S ,t) = \Psi_m (\Phi_m (X))$, we check that 
\begin{multline*}
\int_{\Gamma (t)} f (x,t) { \ }d \mathcal{H}^2_x = \sum_{m=1}^N \int_{\Gamma_m(t)} \breve{\Psi}_m(x,t) f (x,t) { \ } d \mathcal{H}_x^2\\
= \sum_{m=1}^N \int_{\Gamma_m(t)} \Psi_m( \eta_S (x,t)) f (x,t) { \ } d \mathcal{H}_x^2\\
= \sum_{m=1}^N \int_{U_m} \Psi_m (\Phi_m (X)) f (\tilde{x}_S(\Phi_m (X),t),t)  {\rm{det}} \left({ }^t(\nabla_X \tilde{x}_S)(\nabla_X \tilde{x}_S) \right)  { \ } d X\\
= \sum_{m=1}^N \int_{U_m} \Psi_m (\Phi_m (X)) f (\tilde{x}_S(\Phi_m (X),t),t) \sqrt{G_S (X,t)} { \ } d X,
\end{multline*}
which is \eqref{eq33}.

Next we prove \eqref{eq35}. Since $\Lambda_S (t) \subset \Gamma (t)$, there is $V_m \subset U_m$ such that
\begin{equation*}
\mathfrak{M}_S \cap \Gamma_m = \{ \xi_S \in \mathbb{R}^3;{ \ }\xi_S = \Phi_m (X) , { \ }X \in V_m \}.
\end{equation*}
It is clear that
\begin{equation*}
\Lambda_S (t) \cap \Gamma_m(t) = \{ x \in \mathbb{R}^3;{ \ }x = \tilde{x}_S (\Phi_m(X) ,t), X \in V_m \}.
\end{equation*}
Therefore, we see that for $X \in U_m$
\begin{equation*}
1_{V_m} (X) = 1_{\mathfrak{M}_S \cap \Gamma_m} (\Phi_m(X)) = 1_{\Lambda_S(t) \cap \Gamma_m(t)} (\tilde{x}_S (\Phi_m(X),t)).
\end{equation*}
By the previous argument to derive \eqref{eq33}, we check that
\begin{multline*}
\int_{\Lambda_S (t)} f (x,t) { \ }d \mathcal{H}^2_x = \int_{\Gamma (t)} 1_{\Lambda_S (t)}(x) f (x,t) { \ }d \mathcal{H}^2_x\\ 
= \sum_{m=1}^N \int_{\Gamma_m(t)}  1_{\Lambda_S (t) \cap \Gamma_m(t)}(x) \breve{\Psi}_m(x,t) f (x,t) { \ } d \mathcal{H}_x^2\\
= \sum_{m=1}^N \int_{\Gamma_m(t)} 1_{\Lambda_S (t) \cap \Gamma_m(t)}(x) \Psi_m( \eta_S (x,t)) f (x,t) { \ } d \mathcal{H}_x^2\\
= \sum_{m=1}^N \int_{U_m} 1_{\mathfrak{M}_S \cap \Gamma_m}(\Phi_m(X)) \Psi_m (\Phi_m (X)) f (\tilde{x}_S(\Phi_m (X),t),t) \sqrt{G_S (X,t)} { \ } d X.
\end{multline*}
Thus, we see \eqref{eq35}. Therefore, Lemma \ref{lem32} is proved.
\end{proof}

\begin{lemma}[Formulas for integration by parts]\label{lem71}{ \ }\\
Fix $0 \leq t < T$ and $j=1,2,3$. Then the following three assertions hold:\\
\noindent $(\rm{i})$ For every $f_A,g_A \in C^1 ( \overline{\Omega_A (t)})$, $F_A \in [ C^1 ( \overline{\Omega_A (t)})]^3$,
\begin{equation*}
\int_{\Omega_A (t)} {\rm{div} } F_A { \ }d x = \int_{\Gamma (t)} F_A \cdot n_\Gamma { \ } d \mathcal{H}^2_x,
\end{equation*}
\begin{equation*}
\int_{\Omega_A (t)} (\partial_j f_A ) g_A { \ } d x = - \int_{\Omega_A (t)} f_A (\partial_j g_A ) { \ } d x + \int_{\Gamma (t)} f_A  g_A n_j^\Gamma { \ } d \mathcal{H}^2_x.
\end{equation*}
\noindent $(\rm{ii})$ For every $f_B,g_B \in C^1 ( \overline{\Omega_B (t)})$, $F_B \in [ C^1 ( \overline{\Omega_B (t)})]^3$,
\begin{equation*}
\int_{\Omega_B (t)} {\rm{div} } F_B { \ }d x = \int_{\partial \Omega} F_B \cdot n_\Omega { \ }d \mathcal{H}^2_x - \int_{\Gamma (t)} F_B \cdot n_\Gamma { \ }d \mathcal{H}^2_x,
\end{equation*}
\begin{multline*}
\int_{\Omega_B (t)} (\partial_j f_B ) g_B { \ } d x\\
 = - \int_{\Omega_B (t)} f_B (\partial_j g_B ) { \ } d x + \int_{\partial \Omega} f_B  g_B n_j^\Omega { \ } d \mathcal{H}^2_x - \int_{\Gamma (t)} f_B  g_B n_j^\Gamma { \ } d \mathcal{H}^2_x.
\end{multline*}
\noindent $(\rm{iii})$ For every $f_S,g_S \in C^1 ( \Gamma(t))$, $F_S \in [C^1 (\Gamma (t))]^3$,
\begin{align*}
\int_{\Gamma (t)} {\rm{div}}_\Gamma F_S { \ } d \mathcal{H}^2_x & = - \int_{\Gamma (t)} H_\Gamma ( F_S \cdot n_\Gamma) { \ } d \mathcal{H}^2_x,\\
\int_{\Gamma (t)} (\partial^\Gamma_j f_S ) g_S { \ } d \mathcal{H}^2_x & = - \int_{\Gamma (t)} f_S (\partial^\Gamma_j g_S ) { \ } d \mathcal{H}^2_x - \int_{\Gamma (t)} H_\Gamma f_S  g_S n_j^\Gamma { \ } d \mathcal{H}^2_x.
\end{align*}
\end{lemma}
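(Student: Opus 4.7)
The plan is to treat parts $(\mathrm{i})$ and $(\mathrm{ii})$ as direct applications of the classical Gauss divergence theorem in $\mathbb{R}^3$, and to reduce part $(\mathrm{iii})$ to an algebraic identity relating the surface divergence of a general vector field to the surface divergence of its tangential component plus a mean-curvature term, followed by a partition-of-unity argument on the closed surface.

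For $(\mathrm{i})$, since $\Omega_A(t)$ is a $C^\infty$-bounded domain in $\mathbb{R}^3$ whose only boundary component is $\Gamma(t)$ with unit outer normal $n_\Gamma$, the first formula is the standard divergence theorem. The second formula follows by applying the first to the vector field $F_A = f_A g_A e_j$ (with $e_j$ the $j$-th standard basis vector) and using the product rule $\partial_j(f_A g_A) = (\partial_j f_A) g_A + f_A (\partial_j g_A)$.

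For $(\mathrm{ii})$, the domain $\Omega_B(t) = \Omega\setminus\overline{\Omega_A(t)}$ has two boundary components: the outer piece $\partial\Omega$ with outer normal $n_\Omega$, and the inner piece $\Gamma(t)$, on which the outer normal to $\Omega_B(t)$ is $-n_\Gamma$ (opposite to the outer normal of $\Omega_A(t)$). Applying the classical divergence theorem therefore produces the stated sign pattern. The integration-by-parts identity is obtained by the same specialization $F_B = f_B g_B e_j$ used in $(\mathrm{i})$.

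For $(\mathrm{iii})$, the key step is to prove
\begin{equation*}
\int_{\Gamma(t)} \mathrm{div}_\Gamma F_S \, d\mathcal{H}^2_x = -\int_{\Gamma(t)} H_\Gamma (F_S\cdot n_\Gamma) \, d\mathcal{H}^2_x.
\end{equation*}
I would decompose $F_S = F_S^{\tan} + (F_S\cdot n_\Gamma) n_\Gamma$ with $F_S^{\tan}$ tangential. A product rule computation gives
\begin{equation*}
\mathrm{div}_\Gamma F_S = \mathrm{div}_\Gamma F_S^{\tan} + (F_S\cdot n_\Gamma)\,\mathrm{div}_\Gamma n_\Gamma + n_\Gamma \cdot \nabla_\Gamma(F_S\cdot n_\Gamma).
\end{equation*}
The final term vanishes because $\partial_j^\Gamma f = \sum_k(\delta_{jk}-n_j^\Gamma n_k^\Gamma)\partial_k f$ is tangential, and by definition $\mathrm{div}_\Gamma n_\Gamma = -H_\Gamma$. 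It remains to show that $\int_{\Gamma(t)} \mathrm{div}_\Gamma F_S^{\tan}\, d\mathcal{H}^2_x = 0$ for any smooth tangent field on the closed surface. I would use the partition of unity $\{\Psi_m\}$ and the charts $\Phi_m: U_m\to\Gamma_m$ from \eqref{eq31}: writing $F_S^{\tan} = \sum_m \Psi_m F_S^{\tan}$ on $\Gamma(t)$, expressing each compactly supported piece in local coordinates, and invoking Lemma \ref{lem32} together with the fact that $\sqrt{G_S}\,\mathrm{div}_\Gamma(\Psi_m F_S^{\tan})$ is, in coordinates, a Euclidean divergence of a compactly supported field on $U_m$, which integrates to zero by Stokes' theorem in $\mathbb{R}^2$. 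The second formula in $(\mathrm{iii})$ follows by taking $F_S = f_S g_S e_j$ and using the surface product rule $\partial_j^\Gamma(f_S g_S) = (\partial_j^\Gamma f_S) g_S + f_S(\partial_j^\Gamma g_S)$, which is immediate from the definition of $\partial_j^\Gamma$.

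The main obstacle is part $(\mathrm{iii})$: one must correctly combine the tangential/normal decomposition, verify the identification $\mathrm{div}_\Gamma n_\Gamma = -H_\Gamma$, and execute the partition-of-unity reduction carefully so that the absence of boundary of the closed surface $\Gamma(t)$ is used precisely to kill the integral of a pure tangential divergence. Parts $(\mathrm{i})$ and $(\mathrm{ii})$ are routine specializations of the classical divergence theorem once the signs of outer normals on $\partial\Omega$ and $\Gamma(t)$ are tracked carefully.
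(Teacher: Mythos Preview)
Your proposal is correct. The paper does not actually supply a proof of this lemma: parts $(\mathrm{i})$ and $(\mathrm{ii})$ are treated as standard consequences of the classical divergence theorem, and for part $(\mathrm{iii})$ the paper simply refers the reader to Simon \cite{Sim83} and Koba \cite{K19}. Your outline---the tangential/normal splitting $F_S = F_S^{\tan} + (F_S\cdot n_\Gamma)n_\Gamma$, the identification ${\rm div}_\Gamma n_\Gamma = -H_\Gamma$, and the partition-of-unity reduction to the local identity $\sqrt{G_S}\,{\rm div}_\Gamma(\cdot) = \partial_\alpha(\sqrt{G_S}\,(\cdot)^\alpha)$ on a closed surface---is precisely the classical argument found in those references, so you are supplying the details the paper omits rather than deviating from it.
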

\noindent The proof of surface divergence theorem (the assertion $(\rm{iii})$ of Lemma \ref{lem71}) can be founded in Simon \cite{Sim83} and Koba \cite{K19}.

\begin{lemma}[Generalized Helmholtz-Weyl decomposition]\label{lem72}
Let $\Gamma_*$ be a smooth closed 2-dimensional surface in $\mathbb{R}^3$. Set
\begin{equation*}
C^\infty_{{\rm{div}}_{\Gamma_*} }( \Gamma_*) = \{ \varphi \in [ C^\infty (\Gamma_*) ]^3;  { \ }{\rm{div}}_{\Gamma_*} \varphi = 0 \}.
\end{equation*}
Let $F_* \in [C^1(\Gamma_*)]^3$. Assume that for each $\varphi_* \in C^\infty_{{\rm{div}}_{\Gamma_*} }( \Gamma_*)$
\begin{equation*}
\int_{\Gamma_*} F_* \cdot \varphi_* { \ }d \mathcal{H}^2_x = 0.
\end{equation*}
Then there is $\Pi_* \in C^1 (\Gamma_*)$ such that $F_* = \nabla_{\Gamma_*} \Pi_* + \Pi_* H_{\Gamma_*} n_{\Gamma_*}$. Here $n_{\Gamma_*} = n_{\Gamma_*}(x)$ is the unit outer normal vector at $x \in \Gamma_*$ and $H_{\Gamma_*} = - {\rm{div}}_{\Gamma_*} n_*$.
\end{lemma}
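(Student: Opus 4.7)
The strategy is to solve an elliptic equation on $\Gamma_*$ for $\Pi_*$ and then identify the residual as a divergence-free field that must vanish by the hypothesis. First, applying Lemma \ref{lem71}\,$(\rm{iii})$ componentwise gives the duality
\[
\int_{\Gamma_*}(\nabla_{\Gamma_*}\Pi + \Pi H_{\Gamma_*} n_{\Gamma_*}) \cdot \varphi \, d\mathcal{H}^2_x = -\int_{\Gamma_*} \Pi \, {\rm{div}}_{\Gamma_*}\varphi \, d\mathcal{H}^2_x
\]
for all $\Pi \in C^\infty(\Gamma_*)$ and $\varphi \in [C^\infty(\Gamma_*)]^3$. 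Using ${\rm{div}}_{\Gamma_*} n_{\Gamma_*} = -H_{\Gamma_*}$ and the tangentiality of $\nabla_{\Gamma_*}$, the composite
\[
L\Pi := {\rm{div}}_{\Gamma_*}(\nabla_{\Gamma_*}\Pi + \Pi H_{\Gamma_*} n_{\Gamma_*}) = \Delta_{\Gamma_*}\Pi - H_{\Gamma_*}^2\Pi
\]
is a second-order self-adjoint elliptic operator on the closed $2$-manifold $\Gamma_*$.

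The crucial step is to show $\ker L = \{0\}$. Testing $L\Pi = 0$ against $\Pi$ via the duality yields $\int_{\Gamma_*}(|\nabla_{\Gamma_*}\Pi|^2 + H_{\Gamma_*}^2\Pi^2)\, d\mathcal{H}^2_x = 0$, so $\Pi$ is locally constant and $\Pi H_{\Gamma_*} \equiv 0$. On any connected component of $\Gamma_*$, comparison with the smallest enclosing sphere at a contact point shows $H_{\Gamma_*}\neq 0$ there (no compact minimal surface exists in $\mathbb{R}^3$), so the constant value of $\Pi$ on that component must vanish. The Fredholm alternative for self-adjoint elliptic operators on a closed manifold then makes $L$ bijective between suitable Sobolev spaces; since ${\rm{div}}_{\Gamma_*}F_* \in C^0(\Gamma_*) \subset L^p(\Gamma_*)$ for every $p$, Calder\'on--Zygmund estimates and Sobolev embedding on a $2$-manifold produce a unique $\Pi_* \in W^{2,p}(\Gamma_*) \subset C^{1,\alpha}(\Gamma_*) \subset C^1(\Gamma_*)$ solving $L\Pi_* = {\rm{div}}_{\Gamma_*}F_*$.

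Set $\varphi_* := F_* - (\nabla_{\Gamma_*}\Pi_* + \Pi_* H_{\Gamma_*} n_{\Gamma_*})$, so that ${\rm{div}}_{\Gamma_*}\varphi_* = 0$ weakly by the defining equation of $\Pi_*$. To show $\varphi_*\equiv 0$, I would test against an arbitrary smooth $\psi \in [C^\infty(\Gamma_*)]^3$ by using the same Fredholm machinery in reverse: write $\psi = T(\Pi) + \psi_0$, where $T(\Pi) := \nabla_{\Gamma_*}\Pi + \Pi H_{\Gamma_*} n_{\Gamma_*}$ with $\Pi \in C^\infty(\Gamma_*)$ solving the smooth equation $L\Pi = {\rm{div}}_{\Gamma_*}\psi$, and $\psi_0 := \psi - T(\Pi) \in C^\infty_{{\rm{div}}_{\Gamma_*}}(\Gamma_*)$. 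Then $\int_{\Gamma_*}\varphi_* \cdot T(\Pi)\, d\mathcal{H}^2_x = -\int_{\Gamma_*}\Pi\,{\rm{div}}_{\Gamma_*}\varphi_*\, d\mathcal{H}^2_x = 0$ by the duality, while $\int_{\Gamma_*}\varphi_* \cdot \psi_0\, d\mathcal{H}^2_x = \int_{\Gamma_*}F_*\cdot\psi_0\, d\mathcal{H}^2_x - \int_{\Gamma_*}T(\Pi_*)\cdot\psi_0\, d\mathcal{H}^2_x = 0 - 0 = 0$ by the hypothesis and the duality once more. Hence $\varphi_*$ is $L^2$-orthogonal to every smooth field, so $\varphi_*\equiv 0$ and the claimed decomposition holds.

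The main obstacle is establishing $\ker L = \{0\}$: this genuinely uses the embedding of $\Gamma_*$ in $\mathbb{R}^3$ (through the non-existence of closed minimal surfaces) and would fail on a general closed Riemannian $2$-manifold if $H_{\Gamma_*}^2$ were allowed to vanish identically on some component. The Fredholm set-up and the elliptic regularity giving $\Pi_* \in C^1(\Gamma_*)$ are otherwise routine.
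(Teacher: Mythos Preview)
The paper does not give its own proof of this lemma; it simply cites Koba--Liu--Giga \cite{KLG17}. So there is nothing in the paper to compare your argument against directly.

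That said, your proof is essentially correct and self-contained. The identification of the adjoint operator via Lemma~\ref{lem71}\,(iii), the computation $L\Pi = \Delta_{\Gamma_*}\Pi - H_{\Gamma_*}^2\Pi$, and the kernel argument using the nonexistence of closed minimal surfaces in $\mathbb{R}^3$ are all sound, and this last point is indeed the substantive geometric input. The final step---decomposing an arbitrary smooth test field $\psi$ as $T(\Pi)+\psi_0$ with $\psi_0$ surface-divergence-free, then killing each piece separately via the duality and the hypothesis---is clean. One small regularity remark: your $\varphi_*$ lies only in $W^{1,p}$ (since $\Pi_*\in W^{2,p}$), not in $C^1$, so the identity $\int_{\Gamma_*}\varphi_*\cdot T(\Pi)\,d\mathcal{H}^2_x=0$ comes from the $W^{1,p}$ extension of Lemma~\ref{lem71}\,(iii) by density rather than from the $C^1$ statement as written; this is routine but worth making explicit.
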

\noindent The proof of Lemma \ref{lem72} can be founded in Koba-Liu-Giga \cite{KLG17}.

\end{document}